\numberwithin{equation}{section}
\theoremstyle{plain}
\newtheorem{theorem}{Theorem}[section]
\newaliascnt{lemma}{theorem}
\newtheorem{lemma}[lemma]{Lemma}
\newaliascnt{proposition}{theorem}
\newtheorem{proposition}[proposition]{Proposition}
\newaliascnt{conjecture}{theorem}
\newaliascnt{corollary}{theorem}
\newtheorem{corollary}[corollary]{Corollary}
\theoremstyle{definition}
\newaliascnt{definition}{theorem}
\newaliascnt{example}{theorem}
\newaliascnt{remark}{theorem}
\newtheorem{remark}[remark]{Remark}
\newcommand{\R}{\mathbf{R}}
\newcommand{\C}{\mathbf{C}}
\renewcommand{\epsilon}{\varepsilon}
\newcommand{\g}{\gamma}
\renewcommand{\a}{\alpha}
\renewcommand{\b}{\beta}
\renewcommand{\g}{\gamma}
\renewcommand{\l}{\lambda}
\newcommand{\s}{\sigma}
\newcommand\inv{^{-1}}
\newcommand{\be}{\begin{equation}}
\newcommand{\ee}{\end{equation}}
\newcommand{\bea}{\begin{eqnarray}}
\newcommand{\eea}{\end{eqnarray}}
\newcommand{\bes}{\begin{equation*}}
\newcommand{\ees}{\end{equation*}}
\newcommand{\beas}{\begin{eqnarray*}}
\newcommand{\eeas}{\end{eqnarray*}}
\renewcommand{\=}{\ =\ }
\newcommand{\bsm}{\left[\begin{matrix}}
\newcommand{\esm}{\end{matrix}\right]}
\newcommand{\intv}{\int^1_{-1}}
\newcommand{\wh}{\widehat}
\newcommand{\la}{\langle}
\newcommand{\ra}{\rangle}
\newcommand{\bz}{{\bf 0}}
\newcommand{\pint}{\int_{-1}}
\newcommand{\nin}{\noindent}
\newcommand{\sms}{\smallskip}
\newcommand{\mds}{\medskip}
\newcommand{\D}{\Delta}
\newcommand{\abs}[1]{\left\lvert #1 \right\rvert}
\newcommand{\avg}[1]{\bigl\langle #1 \bigr\rangle}
\DeclareMathOperator{\sech}{sech}
\DeclareMathOperator{\sgn}{sgn}
\renewcommand{\d}{\delta}
\newcommand{\vp}{\varphi}
\renewcommand{\l}{\lambda}
\renewcommand{\=}{\ =\ }
\newcommand{\bo}{{\bf 1}}
\renewcommand{\H}{{\bf H}}
\begin{document}

\title{A 2-component Camassa-Holm equation, Euler--Bernoulli Beam Problem and Non-Commutative Continued Fractions}
\author{Richard Beals
\thanks{Department of Mathematics, Yale University, New Haven, CT 06520, USA; richard.beals@yale.edu}
  \and
  Jacek Szmigielski\thanks{Department of Mathematics and Statistics, University of Saskatchewan, 106 Wiggins Road, Saskatoon, Saskatchewan, S7N\,5E6, Canada; szmigiel@math.usask.ca}}

\date{May 24, 2021}

\maketitle

\begin{abstract}
 A new approach to the Euler-Bernoulli beam based on an inhomogeneous matrix
 string problem is presented.
 Three ramifications of the approach are developed:
 \begin{enumerate}
 \item motivated by an analogy with the Camassa-Holm equation a class of
 isospectral deformations of the beam problem is formulated;
 \item a reformulation of the matrix string problem in terms of a certain compact operator 
is used to obtain basic spectral properties of the inhomogeneous matrix string problem
with Dirichlet boundary conditions;
 \item the inverse problem is solved for the special case of a discrete Euler-Bernoulli beam.
 The solution involves a non-commutative generalization of Stieltjes' continued fractions,
 leading to the inverse formulas expressed in terms of ratios of Hankel-like determinants.
 \end{enumerate}
\end{abstract}

\tableofcontents{}

\section{Introduction}
In 1993, Camassa and Holm
\cite{camassa-holm} discovered the shallow water equation
\begin{equation}
  \label{eq:CH}
  m_t + (um)_x+u_xm = 0
  ,\qquad
  m = u - u_{xx},  
\end{equation}
where subscripts denote partial derivatives.  The most attractive novel property of 
\eqref{eq:CH} is that it supports non-smooth soliton solutions.  
These peaked solitons (\textit{peakons}) are obtained from the ansatz 
\begin{equation}
  \label{eq:CHpeakons}
  u(x,t) = \sum_{j=1}^d m_j(t) \, e^{-\abs{x - x_j(t)}}, 
\end{equation}
involving amplitudes $m_j(t)$ and positions $x_j(t)$ depending smoothly on time.  It was shown 
in \cite{camassa-holm} that $u$ given by \eqref{eq:CHpeakons} is a weak solution to the 
Camassa--Holm equation \eqref{eq:CH} if and only if the positions~$x_j(t)$ and
amplitudes $m_j(t)$ satisfy the Hamiltonian system
\begin{equation}
  \label{eq:CH-peakon-ode}
  \dot x_j = \frac{\partial H}{\partial m_j} = u(x_j)
  ,\qquad
  \dot m_j = -\frac{\partial H}{\partial x_j} = -m_j u_x(x_j), 
  \end{equation} 
 with Hamiltonian 
\begin{equation*}
  H(x_1,\dots,x_d,m_1,\dots,m_d)
  = \frac12 \sum_{i,j=1}^d m_i m_j e^{-\abs{x_i-x_j}},  
 \end{equation*}
and the convention that $u_x(x_i)=\avg{D_x u}(x_i)$ where $\avg{f}(x_i)$ means the arithmetic 
mean of the left and right hand limits of $f$ at $x_i$.   

The general solution of \eqref{eq:CH-peakon-ode} (for arbitrary~$d$)
was constructed in \cite{BSS-acoustic,beals-sattinger-szmigielski:stieltjes,beals-sattinger-szmigielski:moment} using inverse spectral methods. 
The main premise of these papers was the realization that the CH equation can be viewed 
as an isospectral deformation of the classical inhomogeneous string problem studied in the $1950$s 
by M.G. Krein \cite{krein-string, krein-stieltjes, kackrein} and, subsequently, by Dym and 
McKean in \cite{mckean};  for a review see \cite{EKT}.  
 In particular, the peakon solutions 
\eqref{eq:CHpeakons} were shown in \cite{beals-sattinger-szmigielski:stieltjes} to be directly linked to an isospectral deformation of discrete 
strings, i.e. those strings for which the mass density is a linear combination of point masses. 
Krein had observed long ago that, in the case of discrete strings, the  inverse string problem can be 
solved explicitly  by using results of Stieltjes on continued fractions \cite{stieltjes}.  
Stieltjes'  methods were applied in \cite{beals-sattinger-szmigielski:stieltjes} 
to derive the determinantal formulas for the amplitudes and positions of 
peakons and, in turn, to determine the asymptotic behaviour of peakon solutions.  

It has been known at least since the fundamental paper \cite{lax-kdv} by Lax 
on the Korteweg--de Vries equation that the isospectral deformations of boundary value problems 
may lead to  interesting non-linear 
equations.  The boundary value problem for the KdV equation is given by the Schr\"{o}dinger equation 
on the whole real axis, in which case the spectrum is a union of continuous and discrete spectra. 

In the late $1970$s, in a series of interesting papers \cite{Sabatier-Constants, Sabatier-Evolution, 
Sabatier-NEEDS}, Sabatier put forward an idea that spectral problems with just discrete spectrum 
can also be a source of interesting non-linear problems derived \textit{via} isospectral deformations.  
In particular, the inhomogeneous string boundary value problem was singled out  as a potential 
source of interesting boundary problems with the discrete spectrum, to be subjected to isospectral 
deformations.  Sabatier studied a very limited class of spectral deformations, applicable only to 
strictly positive, smooth densities.  The situation changed with the discovery of the CH equation 
\eqref{eq:CH} and subsequent realization in  
\cite{BSS-acoustic,beals-sattinger-szmigielski:stieltjes,beals-sattinger-szmigielski:moment} that the 
CH equation is, in disguise, an isospectral deformation of an inhomogeneous string boundary value 
problem.  This line of research was later generalized to other equations from the CH family 
(\cite{LS-DegasperisCubic, colville-gomez-sz, GS-string}). In the introduction to 
\cite{beals-sattinger-szmigielski:moment}, the authors speculated, 
guided by the work of Barcilon \cite{barcilon-beam-royal} on the Euler-Bernoulli beam problem, 
that higher order boundary value problems might provide an equally rich environment for
isospectral deformations.  This task required a new look at the beam boundary 
value problem that would be naturally amenable to a Lax type deformation.  The present paper aims 
to offer a new approach to the beam boundary value problem by rephrasing it as a matrix inhomogeneous 
string problem.  

In the remainder of this section we outline the main results of this paper and provide a context 
for some of the techniques used in our arguments.

In \autoref{2CH}  we propose a simple derivation of a system of nonlinear equations that generalizes 
the CH equation to a new two component equation which is structurally of the same type as the CH equation.  
We are keenly aware that there exist other two-component generalizations of the CH equation, for example, 
\begin{align*}
&m_t=(um)_x+u_xm +\rho \rho _x,  & m=u-u_{xx}, \\
&\rho_t=(\rho u)_x, &
\end{align*} 
first derived using tri-hamiltonian methods by P.J. Olver  and Rosenau in \cite{olver-rosenau-triH} 
and studied, for example, in \cite{holm-ivanov}.   Other generalizations have been proposed as well 
\cite{xia-qiao}.

The equation we propose (see \eqref{eq:evolBCH})  takes the form
\begin{align*} 
&n_t=(un)_x+u_x n +vn,  &&m_t=(um)_x+u_xm-vm, \\
&u_{xx}-4 u=n+m,  &&v_x=n-m,   
\end{align*} 
and comes from a matrix valued Lax pair, structurally identical to the original CH case, 
but involving two measures $m$ and $n$  rather than one.  

In \autoref{sec:finiteinterval}, using a Liouville transformation, we map the problem to a finite 
interval, following a similar procedure used  in \cite{BSS-acoustic} to study an acoustic scattering 
problem.  We note that the transformed $x$-member of the Lax pair (see \eqref{eq:x-LaxB}) is a matrix 
version of an inhomogeneous string boundary value problem.  

In \autoref{sec:EBbeam} we review the pertinent facts about the Euler--Bernoulli beam problem, which 
we show can be reformulated as a string problem with a matrix density -- the same system already 
encountered in \autoref{sec:finiteinterval}.  We then study 
the basic spectral properties of that matrix string problem.  

In \autoref{sec:Spectral T} we reformulate the Euler--Bernoulli beam problem as a standard spectral 
problem for a compact (in fact trace-class) operator $T$ and we study the properties of that 
operator on an appropriate Hilbert space.  In particular, we establish basic properties of 
the resolvent of that operator.  

In \autoref{sec:Wronskians}, in \autoref{Kl3}, we derive a closed-form expression for the resolvent 
of $T$ and introduce a special element of the resolvent, the Weyl function, which plays a key role in 
the formulation of the inverse problem.  

In \autoref{sec:discrete beam} we analyze the spectral problem for a discrete Euler-Bernoulli beam, 
that is, a beam in which both measures $m$ and $n$ are chosen to be finite sums of point masses..  
This is a beam counterpart of Stietljes' string.  Similar to what occurs for the string problem, the 
Weyl function admits a continued fraction expansion (\autoref{prop:WStieltjes}), albeit with 
non-commuting coefficients.  The general concept of continued fractions over non-commutative rings 
goes  back to Wedderburn \cite{wedderburn}, but our special non-commutative case is a direct 
generalization of continued fractions of Stieltjes' type, this time associated to a discrete inverse 
beam problem, rather than a string problem.  
The continued fraction expansion can be rephrased in terms of non-commutative Pad\'{e} approximations and 
we formulate Pad\'{e} approximation conditions needed for the inverse problem.

In \autoref{sec:inverse problem} we explicitly solve the inverse problem for the discrete 
Euler-Bernoulli beam.  To this end we construct a sequence of non-commutative Pad\'{e} approximations 
using the Weyl function, or, to be more precise, the spectral measure,  as an input data and, 
in the end, we recover  the discrete measures $m$ and $n$.  
In the process of solving the inverse problem we are prompted to introduce several variations on the 
Hankel moment matrix that are germane to the 
inverse beam problem.  The final formulas bear a remarkable resemblance to string formulas (see 
\cite{beals-sattinger-szmigielski:moment}) with the proviso that in the beam problem the usual 
Hankel determinants of the moment matrix are replaced by determinants of  suitably reduced 
Hankel matrices of moments.  

The paper concludes with (Appendix) \autoref{app1}, in which we provide 
a detailed analysis of the Lax pair parametrization giving rise to \eqref{eq:evolBCH}.

\section{A $2$-Component CH equation}\label{2CH} 
The Camassa-Holm (CH) equation \eqref{eq:CH} is the compatibility condition for a pair 
of scalar equations on the line, which we take for simplicity to have the form 
\begin{equation} \label{eq:xt-LaxCH}
\psi_{xx}=(1+\lambda m) \psi, \qquad \psi_t=a\psi +b\psi_x, 
\end{equation} 
where $-\infty<x<\infty$, and the subscripts in $x$ and $t$ represent distribution derivatives 
in $x,t$ respectively.  

\begin{remark} In general, the derivatives we consider are distribution derivatives.
It will sometimes be convenient to use $D_x$, $D_t$, or simply $D$ in the case of one variable.
\end{remark}

The CH flow corresponds to the choice 
$$
b=u+\frac1\lambda, \quad 
a=-\frac {u_x}{2}.
$$ 
In this case the compatibility condition reads 
\begin{equation} \label{eq:CHZCC}
m_t=(um)_x+u_xm, \qquad (u_{xx}-4 u)_x=2m_x. 
\end{equation} 
This compatibility condition holds even if $m$ is a measure, 
in which case $u_x$ is of bounded variation, $u$ is continuous, and 
the term $u_xm$ means that on the singular support of $m$ the multiplier is 
taken to be $u_x(a)=\avg{u_x}(a)$, where $\avg{f}(a)$ is $\frac12[f(a_-)+f(a_+)]$. 

\begin{remark} 
Our choice of the coefficients in \eqref{eq:xt-LaxCH} differs slightly from the original 
Lax pair in \cite{camassa-holm}.  This change results in a  different relation between 
$u$ and $m$ as one can see by comparing 
\eqref{eq:CH} and $\eqref{eq:CHZCC}$. 
\end{remark} 

Now, we consider a two-component version
\begin{subequations} 
\begin{align} 
\Psi_{xx}&=(\mathbf{1}+\lambda M) \Psi, \qquad M=\begin{bmatrix} 0&n\\m&0 \end{bmatrix}, \label{eq:x-LaxBCH} \\
\Psi_t&=a\Psi+b\Psi_x,  \label{eq:t-LaxBCH}
\end{align}
\end{subequations} 
where $\Psi, a$ and $b$ are $2\times 2$ matrix functions of $x,t, \lambda$ and $\bo$
denotes the $2\times 2$ identity matrix.   We assume for now that $a,b$ have only terms 
of degree $0, -1$ in $\lambda$.  As shown in the Appendix, if we add the 
assumption that the matrices $a$ and $b$ are bounded  as $x\to\pm\infty$, then,
up to a normalization, they have the form
\begin{equation} \label{eq:abBCH}
b=\begin{bmatrix} u&0\\
0&u \end{bmatrix}+\frac1\lambda \begin{bmatrix}0&1\\1&0 \end{bmatrix}, 
\qquad \qquad 
a=-\frac12 \begin{bmatrix}u_x-v&0\\0&u_x+v \end{bmatrix}.  
\end{equation} 
The compatibility conditions split into two constraints 
\begin{equation*} 
(u_{xx}-4 u)_x=(n+m)_x, \qquad v_x=n-m, 
\end{equation*} 
and a system of evolution equations
\begin{equation} \label{eq:evolBCH} 
n_t=(un)_x+u_x n +vn, \qquad m_t=(um)_x+u_xm-vm.  
\end{equation} 
Furthermore, if we are interested in bounded $u$ and compactly supported $m,n$, then we 
can replace one of the constraints with a more restrictive 
one, keeping the other constraint intact,  
\begin{equation} \label{eq:constraintsBCH}
u_{xx}-4 u=n+m, \qquad v_x=n-m.  
\end{equation} 
We assume that $M$ is compactly supported and take as solutions of \eqref{eq:constraintsBCH} 
the particular choices 
\begin{align}
u(x,t)&=-\frac14 \int_{-\infty}^\infty e^{-2\abs{x-y}}[m(y,t)+n(y,t)] \, dy, \\
v(x,t)&=\frac12 \int_{-\infty}^\infty \sgn (x-y) [n(y,t)-m(y,t)]\, dy. 
\end{align} 
It follows that 

\begin{equation} \label{eq:uvass}
\begin{matrix}\lim_{x\rightarrow \pm \infty} u(x,t)=0=\lim_{x\rightarrow \pm \infty}u_x(x,t)\\ 
\lim_{x\rightarrow \pm \infty}v(x,t)=\pm \frac12  \int_{-\infty}^\infty[n(y,t)-m(y,t)]\, dy. 
\end{matrix}\end{equation}
 
\begin{remark} The integrals appearing in this paper are all Stieltjes integrals, but we find it 
more convenient to write them as  $\int f(x)m(x,t)dx$, or, later in the paper, as $\int f(x)dm(x,t)$ .  The former 
notation is analogous to writing the point mass at the origin as $\d(x)dx$.\end{remark}

\begin{proposition} \label{prop:totalmass}
The integral 
\begin{equation*} 
\int_{-\infty}^\infty [m(y,t)+n(y,t)]\, dy
\end{equation*} 
is independent of $t$.  
\end{proposition}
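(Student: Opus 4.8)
The plan is to show the time-derivative of the total mass vanishes by differentiating under the integral sign and using the evolution equations \eqref{eq:evolBCH}. First I would write
\[
\frac{d}{dt}\int_{-\infty}^\infty [m(y,t)+n(y,t)]\, dy
= \int_{-\infty}^\infty [m_t + n_t]\, dy,
\]
justified because $m,n$ are compactly supported (uniformly on compact time intervals), and then substitute the right-hand sides of \eqref{eq:evolBCH}. Adding the two equations gives
\[
m_t + n_t = (u m)_x + (u n)_x + u_x(m+n) + v(n-m)
= \bigl( u(m+n) \bigr)_x + u_x(m+n) + v\,v_x,
\]
where I have used the constraint $v_x = n-m$ from \eqref{eq:constraintsBCH} to rewrite $v(n-m) = v v_x = \tfrac12 (v^2)_x$. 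Thus the integrand is a total $x$-derivative up to the extra term $u_x(m+n)$, so the remaining task is to handle that term.

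For the term $u_x(m+n)$ I would use the other constraint, $u_{xx} - 4u = n+m$, which yields
\[
u_x(m+n) = u_x(u_{xx} - 4u) = \tfrac12 (u_x^2)_x - 2 (u^2)_x
= \bigl( \tfrac12 u_x^2 - 2 u^2 \bigr)_x.
\]
Hence
\[
m_t + n_t = \Bigl( u(m+n) + \tfrac12 v^2 + \tfrac12 u_x^2 - 2u^2 \Bigr)_x,
\]
a pure $x$-derivative (in the distributional sense). Integrating over $\R$ then gives the boundary values of the bracketed expression at $x = \pm\infty$. By \eqref{eq:uvass}, $u\to 0$ and $u_x\to 0$ at both ends, so $u(m+n)$, $u_x^2$, and $u^2$ all contribute nothing (and $m,n$ vanish near infinity anyway); the only surviving piece is $\tfrac12 v^2$, whose limits at $\pm\infty$ are \emph{equal} by \eqref{eq:uvass} (both equal $\tfrac18 \bigl(\int (n-m)\bigr)^2$). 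Therefore the difference of boundary terms is zero and $\frac{d}{dt}\int (m+n) = 0$.

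The main obstacle is not the algebra but making the manipulations rigorous when $m$ and $n$ are measures rather than smooth functions: products like $u_x m$ must be interpreted via the averaging convention $\avg{u_x}$ stated after \eqref{eq:CHZCC}, and identities such as $u_x(u_{xx}-4u) = (\tfrac12 u_x^2 - 2u^2)_x$ involve multiplying the bounded-variation function $u_x$ by the measure $u_{xx}$. The clean way around this is to note that $u$ is continuous and $u_x$ is of bounded variation, so $u\,du_x = u\,(m+n)\,dx$ is an honest Stieltjes integral, and then integration by parts for Stieltjes integrals gives $\int u\, du_x = [u u_x]_{-\infty}^{\infty} - \int u_x^2\, dx$; combined with $\int 4 u u_x\, dx = [2u^2]_{-\infty}^\infty$ and the vanishing of $u, u_x$ at infinity, this reproduces $\int u_x(m+n)\, dx = -\int (u_x^2 + 4u^2)\,dx$ — wait, sign-checking aside, the point is that every step is a legitimate Stieltjes computation. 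Once the total-derivative identity is established distributionally on all of $\R$, pairing against a cutoff that is $\equiv 1$ on the (compact) support of $m+n$ and passing to the limit finishes the argument, and the equality of the $v^2$ boundary terms in \eqref{eq:uvass} is what makes the result come out to exactly zero.
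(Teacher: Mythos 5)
Your proposal is correct and follows essentially the same route as the paper: sum the two evolution equations, use the constraints $v_x=n-m$ and $u_{xx}-4u=n+m$ to rewrite the integrand as the exact derivative $\bigl(u(m+n)+\tfrac12 u_x^2-2u^2+\tfrac12 v^2\bigr)_x$, and observe that the boundary contributions vanish because $u,u_x\to 0$ and the limits of $v^2$ at $\pm\infty$ coincide. (Only your parenthetical Stieltjes aside has a slip — the identity you quote is for $\int u\,(m+n)\,dx$, not $\int u_x(m+n)\,dx$ — but it plays no role in the main argument.)
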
 
\begin{proof} 
According to \eqref{eq:evolBCH}, \eqref{eq:uvass}, 
\begin{equation*} 
\begin{split} 
 D_t \int_{-\infty}^\infty [m(y,t)+n(y,t)]\, dy&=\int_{-\infty}^\infty\{[u(n+m)]_y
+u_y(n+m)+v(n-m)\} \, dy\\
 &=\int_{-\infty}^\infty [u_y u_{yy}-4u_y u+vv_y]\, dy\\
 &=\frac12 \int_{-\infty}^\infty [u_y^2-4u^2+v^2]_y \, dy=0.
 \end{split}
 \end{equation*} 
 
 \end{proof} 
\begin{remark} Note, however, that separately $\int_{-\infty}^\infty m dy$ and 
$\int_{-\infty}^\infty n dy$ are not conserved.  Indeed, it follows from 
\eqref{eq:constraintsBCH} that
\begin{equation*} 
D_t \int_{-\infty}^\infty [n-m] \, dy=-4 \int_{-\infty}^\infty uv\, dy .  
\end{equation*} 
\end{remark} 

\section{Transfer to an interval} \label{sec:finiteinterval}
As in  \cite{beals-sattinger-szmigielski:stieltjes}, we transfer the problem on the  real 
axis to the interval $[-1,1]$.  In this section we will 
refer to the variable on the real axis as $\xi$ and its counterpart on $[-1,1]$ by $x$.
The functions originally defined on the real axis will carry a tilde.  Thus, for example,  
$M$ from the previous section will be denoted by $\tilde M$.  
We set 
\begin{equation*} 
x=\tanh \xi, \qquad D_x=\cosh^2 \xi D_\xi
\end{equation*} 
Define, for general $f: \R\rightarrow \R$, 
\begin{equation*} 
f^*(\tanh \xi)=f(\xi). 
\end{equation*} 
Note that $\sech^2 \xi=1-x^2$.  Then a straightforward computation shows that 
\begin{equation*} 
(1-x^2)^{\frac32} D_x^2 (f^* (1-x^2)^{-\frac12})=[(D_\xi^2-1)f]^*. 
\end{equation*} 
Therefore
\begin{equation*} 
[(D_\xi^2-\mathbf{1}-\lambda \tilde M)f]^*=(1-x^2)^\frac32[D_x^2 -\lambda (1-x^2)^{-2} 
\tilde M ](f^*(1-x^2)^\frac12) 
\end{equation*}
and solutions to \eqref{eq:x-LaxBCH}, after flipping $x$ with $\xi$ and $M$ with $\tilde M$, 
correspond to solutions of 
\begin{equation} \label{eq:x-LaxB} 
\Phi_{xx} =\lambda M \Phi
\end{equation} 
under the map 
\begin{equation*} 
\Phi=(1-x^2)^\frac12 \Psi^*, \qquad M=(1-x^2)^{-2} \tilde M ^*. 
\end{equation*} 
The flow equation \eqref{eq:t-LaxBCH} implies 
\begin{equation*} 
\begin{split} 
D_t\Phi=[(\cosh \xi)^{-1} D_t \Psi]^*&=[(\cosh \xi)^{-1} (\tilde a \Psi+\tilde b D_\xi \Psi]^*\\
&=[(\cosh \xi)^{-1} \tilde a \Psi+(\cosh \xi)^{-3} \tilde b \cosh ^2 \xi D_\xi \Psi]^*\\
&=\tilde a^* \Phi +(1-x^2)^\frac32 \tilde b^*D_x\big((1-x^2)^{-\frac12} \Phi\big)\\
&=\tilde a^* \Phi +(1-x^2)\tilde b^* \big[ D_x+\frac{x}{1-x^2}\big] \Phi\\
&=[\tilde a ^*+x\tilde b^*] \Phi +(1-x^2) \tilde b^* D_x \Phi. 
\end{split} 
\end{equation*} 
Thus 
\begin{equation}\label{eq:t-LaxB}
\Phi_t=a\Phi+ b\Phi_x, 
\end{equation} 
where 
\begin{equation*}
a=\tilde a^*+x\tilde b^*, \qquad b=(1-x^2)\tilde b^*, 
\end{equation*} 
or, more explicitly, 
\begin{equation} \label{eq:abuv}
\begin{split}
a&=\frac12 \begin{bmatrix} -u_x+v&0\\0&-u_x-v \end{bmatrix} -\frac{1}{2\lambda}
\begin{bmatrix} 0&\beta_x\\\beta_x&0 \end{bmatrix},  \\
b&=\begin{bmatrix} u&0\\0&u \end{bmatrix} +\frac1\lambda \begin{bmatrix} 0& \beta\\\beta&0 
\end{bmatrix},  \\
u&=(1-x^2)\tilde u^*, \qquad v=\tilde v^*, \qquad \beta=1-x^2. 
\end{split} 
\end{equation} 

It can be shown that 
\begin{equation} \label{eq:uv-Bconstraints}
u_{xxx}=(\beta (m+n))_x+\beta_x (m+n), \qquad v_x=\beta(n-m), 
\end{equation} 
and that the flow of $M$ takes the same form as the flow of $\tilde M$, namely, 
\begin{equation} \label{eq:mn-Bevol}
n_t=(un)_x+u_xn +vn, \qquad m_t=(um)_x+u_x m-vm. 
\end{equation} 
In fact \eqref{eq:uv-Bconstraints} and \eqref{eq:mn-Bevol} are consequences of the 
compatibility conditions for the Lax pair 
\eqref{eq:x-LaxB} and \eqref{eq:t-LaxB}.  It follows from 
\eqref{eq:uvass} and \eqref{eq:abuv} that 
\begin{equation} \label{eq:uvassB} 
u(\pm 1)=u_x(\pm 1)=0, \qquad v(-1)=-v(+1). 
\end{equation} 
\begin{proposition} \label{prop:totalmassB}
The integral 
\begin{equation*} 
\int_{-1}^1\beta [m+n]\, dx
\end{equation*} 
is independent of $t$.  
\end{proposition}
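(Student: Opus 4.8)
The plan is to imitate the proof of \autoref{prop:totalmass}: differentiate under the integral sign, insert the evolution equations \eqref{eq:mn-Bevol}, and then show that the integrand is an exact $x$-derivative whose endpoint contributions vanish by \eqref{eq:uvassB}. Abbreviating $w=m+n$ and $z=n-m$, the evolution equations give
\begin{equation*}
D_t\int_{-1}^1 \beta w\,dx=\int_{-1}^1 \beta\bigl[(uw)_x+u_xw+vz\bigr]\,dx.
\end{equation*}
The $vz$-term is disposed of immediately: by the constraint $v_x=\beta z$ in \eqref{eq:uv-Bconstraints} it equals $\int_{-1}^1 vv_x\,dx=\tfrac12[v^2]_{-1}^1$, which is $0$ since $v(-1)=-v(+1)$. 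In the term $\int_{-1}^1\beta(uw)_x\,dx$ I would integrate by parts; the boundary term $[\beta uw]_{-1}^1$ vanishes because $\beta(\pm1)=0$, and what remains is $-\int_{-1}^1\beta_xuw\,dx$. Adding the middle term, the whole $t$-derivative collapses to $\int_{-1}^1 w\,(\beta u_x-\beta_x u)\,dx$, and it remains to see that this is zero.

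The key algebraic fact is that the first constraint in \eqref{eq:uv-Bconstraints} is equivalent to $(\beta^2 w)_x=\beta\,u_{xxx}$: expanding, $u_{xxx}=(\beta w)_x+\beta_x w=\beta w_x+2\beta_x w$, and multiplying by $\beta$ gives exactly $\beta^2 w_x+2\beta\beta_x w=(\beta^2 w)_x$. Since $\beta u_x-\beta_x u=\beta^2(u/\beta)_x$ and $u=\beta\,\tilde u^{*}$ by \eqref{eq:abuv}, I would rewrite and integrate by parts once more,
\begin{equation*}
\begin{split}
\int_{-1}^1 w(\beta u_x-\beta_x u)\,dx
&=\int_{-1}^1 (\beta^2 w)\,(u/\beta)_x\,dx\\
&=\bigl[\beta w u\bigr]_{-1}^1-\int_{-1}^1 (\beta^2 w)_x\,\frac{u}{\beta}\,dx\\
&=-\int_{-1}^1 u\,u_{xxx}\,dx,
\end{split}
\end{equation*}
the boundary term vanishing because $w=m+n$ has compact support inside $(-1,1)$ (equivalently $\beta$ vanishes at $\pm1$). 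Finally $\int_{-1}^1 u\,u_{xxx}\,dx=[u u_{xx}]_{-1}^1-\tfrac12[u_x^2]_{-1}^1=0$ by the boundary conditions $u(\pm1)=u_x(\pm1)=0$ from \eqref{eq:uvassB}, which closes the argument.

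The one point that needs care — rather than a real obstacle — is that $m$ and $n$ are only measures, so $w$ is a distribution and the factor $1/\beta$ is singular at $\pm1$; the integrations by parts must therefore be legitimised. This is where compact support of $M$ inside $(-1,1)$ (inherited from the real line through the Liouville transformation of \autoref{sec:finiteinterval}) does the work: on a neighbourhood of each endpoint $w$ vanishes, so $\beta$ is bounded away from $0$ on the support of $w$, every endpoint term is literally $0$, and $\beta^2 w$ is a finite measure whose distributional derivative equals $\beta u_{xxx}$. If one prefers, the same computation can be carried out purely distributionally, by pairing the constraint with test functions and never dividing by $\beta$.
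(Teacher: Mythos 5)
Your proof is correct and follows essentially the same route as the paper's: differentiate under the integral, insert \eqref{eq:mn-Bevol}, turn the $\beta v(n-m)$ term into $\int_{-1}^1 v v_x\,dx$ via \eqref{eq:uv-Bconstraints}, and reduce the rest to $-\int_{-1}^1 u\,u_{xxx}\,dx$, which vanishes by \eqref{eq:uvassB}. The only difference is your detour through $\beta u_x-\beta_x u=\beta^2(u/\beta)_x$ and the rewritten constraint $(\beta^2(m+n))_x=\beta u_{xxx}$, which obliges you to discuss the $1/\beta$ singularity; the paper sidesteps this by integrating by parts directly on $\int_{-1}^1 \beta u_x(m+n)\,dx$ to get $-\int_{-1}^1 u(\beta(m+n))_x\,dx$, which combines with $-\int_{-1}^1 \beta_x u(m+n)\,dx$ to give $-\int_{-1}^1 u\,u_{xxx}\,dx$ from the constraint exactly as written.
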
 
\begin{proof} 
According to \eqref{eq:mn-Bevol}, \eqref{eq:uv-Bconstraints} and \eqref{eq:uvassB}, 
\beas
&&D_t \int_{-1}^1\beta(x) [m(x,t)+n(x,t)]\, dx\\
&&\qquad\=\int_{-1}^1 \beta(x)\{[u(n+m)]_x+u_x(n+m)+v(n-m)\} \, dx\\
&&\qquad\=\int_{-1}^1[-\beta_x u(m+n)-u(\beta(m+n))_x+vv_x]\, dx\\
&&\qquad\=\int_{-1}^1 (-u_{xxx} u +v_x v) \, dx\\
&&\qquad\=\frac12 \int_{-1}^1 [u_x^2+v^2]_x \, dx=\frac12(u_x^2+v^2 )\big| _{-1}^1=0.
\eeas
\end{proof} 

We end this section by stating integral formulas for $u$ and $v$, which, as can be 
easily checked, provide the unique solutions of \eqref{eq:uv-Bconstraints} subject to 
boundary conditions 
\eqref{eq:uvassB}. 
\begin{proposition} \label{prop:if-uv}
\begin{align*} 
u(x, t)&=-\int_{-1}^1 G_D(x,y)^2 (m(y,t)+n(y, t))\, dy,\\
 v(x,t)&=\int_{-1}^1 \sgn(x-y) G_D(y,y)  (n(y,t)-m(y, t)) \, dy. 
\end{align*} 
\end{proposition}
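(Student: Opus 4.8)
The plan is to verify that the two displayed integrals solve \eqref{eq:uv-Bconstraints} with the boundary behaviour \eqref{eq:uvassB}, and then to deduce uniqueness from the fact that the associated homogeneous problems have only the zero solution.

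The only facts about the Dirichlet Green's function $G_D$ that enter are $G_D(\pm1,y)=0$, the identity $\partial_x^2 G_D(x,y)=-\delta(x-y)$ in the sense of distributions, and — setting $x=y$ in the explicit formula $G_D(x,y)=\tfrac12\bigl(1+\min(x,y)\bigr)\bigl(1-\max(x,y)\bigr)$ — the diagonal value $G_D(y,y)=\tfrac12(1-y^2)=\tfrac12\beta(y)$. Granted these, the formula for $v$ costs one line: since $D_x\sgn(x-y)=2\delta(x-y)$, differentiating $v(x,t)=\int_{-1}^1\sgn(x-y)\,G_D(y,y)\,(n-m)(y,t)\,dy$ under the integral sign gives $v_x=2\,G_D(x,x)\,(n-m)=\beta\,(n-m)$, the second equation of \eqref{eq:uv-Bconstraints}; and evaluating the integral at the endpoints gives $v(\pm1)=\pm\int_{-1}^1 G_D(y,y)\,(n-m)\,dy$, so $v(-1)=-v(+1)$, as \eqref{eq:uvassB} demands.

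For $u$ the content is the single distributional identity in the variable $x$,
\[
  \partial_x^3\bigl[G_D(x,y)^2\bigr]=-\beta_x(y)\,\delta(x-y)-\beta(y)\,\delta'(x-y),
\]
which I would obtain by writing $G_D(x,y)^2$ as a polynomial on each of $\{x<y\}$ and $\{x>y\}$ and differentiating three times, tracking the jumps across $x=y$: $G_D^2$ is continuous, $\partial_x[G_D^2]$ jumps by $-\beta(y)$, hence $\partial_x^2[G_D^2]$ acquires the term $-\beta(y)\delta(x-y)$ and is otherwise piecewise constant in $x$ with jump $-\beta_x(y)$, and one more derivative gives the displayed identity. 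Integrating it against $m+n$ and using $\int\delta'(x-y)f(y)\,dy=f'(x)$,
\[
  u_{xxx}=-\int_{-1}^1\partial_x^3\bigl[G_D(x,y)^2\bigr]\,(m+n)(y,t)\,dy=\bigl(\beta(m+n)\bigr)_x+\beta_x(m+n),
\]
which is the first equation of \eqref{eq:uv-Bconstraints}. The boundary conditions are then automatic: $u(\pm1)=0$ because $G_D(\pm1,\cdot)=0$, and $u_x(\pm1)=0$ because $\partial_x[G_D(x,y)^2]=2\,G_D(x,y)\,\partial_x G_D(x,y)$ carries the factor $G_D$, which vanishes at $x=\pm1$ while $\partial_x G_D$ stays bounded.

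Uniqueness is immediate. If $u_1,u_2$ both solve the $u$-problem, then $w=u_1-u_2$ satisfies $w_{xxx}=0$ on $(-1,1)$, so $w$ is a polynomial of degree $\le2$, and the four conditions $w(\pm1)=w_x(\pm1)=0$ force $w\equiv0$. If $v_1,v_2$ both solve the $v$-problem, then $w=v_1-v_2$ is a constant with $w(-1)=-w(1)$, hence $w\equiv0$. The one genuinely delicate step is the distributional differentiation of $G_D(x,y)^2$ across the diagonal and — since $m$ and $n$ are merely measures — the reading of the resulting products with $\delta$ and $\delta'$; this is just the third-order, $\beta$-weighted analogue of the elementary fact that $\int G_D(x,y)\,d\nu(y)$ solves $-D_x^2w=\nu$ with Dirichlet data, which is precisely why the square of $G_D$ is the right kernel to use here.
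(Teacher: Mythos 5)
Your verification is correct: the jump computation giving $\partial_x^3\bigl[G_D(x,y)^2\bigr]=-\beta_x(y)\,\delta(x-y)-\beta(y)\,\delta'(x-y)$, the diagonal value $G_D(y,y)=\tfrac12\beta(y)$, the boundary checks, and the elementary uniqueness argument all hold up when read distributionally against the measures $m,n$. The paper offers no proof of \autoref{prop:if-uv} (it merely asserts the formulas ``can be easily checked''), and your direct check of \eqref{eq:uv-Bconstraints} and \eqref{eq:uvassB} plus uniqueness is exactly the verification it has in mind.
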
 

Here
\begin{equation} \label{eq:Green}
G_D(x,y)=\frac12 \begin{cases} (1+x)(1-y), \quad  x<y\\ (1-x)(1+y), \quad y<x \end{cases}
\end{equation}  
is the Green's function of the classical Dirichlet string problem $$-D_x^2 f=\lambda \rho f, 
\quad ~f(-1)=f(1)=0. $$

\section{The Dirichlet problem for a beam}\label{sec:EBbeam}

Vibrations of a beam parametrized by the interval $-1\le x\le1$ are characterized
by the equation 
\be\label{beam}
D^2[r D^2\phi]\=\l^2m\phi,\qquad D=D_x;
\ee
(see \cite{barcilon-beam-royal}  or, for a comprehensive view of vibration problems in engineering, 
\cite{Gladwell}).
What we refer to as a beam problem is often referred to as an {\it Euler--Bernoulli
beam problem\/}.  The two functions (or positive measures) $r,m$ are the flexural
rigidity and mass density.  The spectral parameter $\l^2$ denotes the square of the
frequency.

Setting
\be
D^2\vp_1\=\l n\vp_2,
\ee
where $n=1/r$, \eqref{beam} becomes
\be\label{beam2}
D^2\vp_1\=\l n\vp_2,\qquad D^2\vp_2\=\l m\vp_1.
\ee
The matrix form is 
\be\label{beam-system}
D^2\vp
\=\l M\vp,\qquad \vp\=\begin{bmatrix}\vp_1\\ \vp_2    
\end{bmatrix},\quad M\=\begin{bmatrix} 0&n\\ m&0   \end{bmatrix}.
\ee

We require that $\vp$ be continuous. We want to allow  $m$ and $n$ to be finite positive measures 
on the interval $-1\le x\le 1$.   We assume
\be\label{support}
\begin{matrix}\hbox{(a)\quad The endpoints $x=\pm 1$ are not atoms for $m$ or $n$.}\\
              \hbox{(b)\quad The measures $m$ and $n$ have the same support.}
\end{matrix}
\ee
Then $M\vp$ should be interpreted as
\bes
M\vp\=\bsm \vp_2 n\\ \vp_1 m\esm.
\ees

\begin{remark}Throughout this section and the next 
it is convenient to adopt the (more correct) way
of writing integrals with respect to $m$ and $n$, using $dm(x)$, $dn(x)$ rather than
the symbolic $m(x)dx$, $n(x)dx$.\end{remark}


A partial fundamental solution
$\Phi(x,\l)$ with the property ~~$\Phi(-1,\lambda)=0,\, $ $ D\Phi(-1,\l)=~~\bo$,
the identity matrix, can be constructed in the form
\be\label{Phi}
\Phi(x,\l)\=\sum_{k=0}^\infty\l^k\Phi_k(x)
\ee
with
\bes
\Phi_0(x)\=(1+x)\bo,\qquad \Phi_{k+1}(x)\=\pint^x\left[\pint^yM(z)\Phi_k(z)\,dz\right]\,dy.
\ees
To be specific, we take the integrals to run on the intervals $[-1,x)$ and $[-1,y)$.

\begin{proposition}\label{Phik} The functions $\Phi_k(x)$ are diagonal for even $k$, off-diagonal for
odd $k$.  The non-zero entries $\Phi_{k1}$, $\Phi_{k2}$ are non-negative, positive at $x=1$, and 
satisfy the estimate
\be\label{Phi-est}
0\ \le\  \Phi_{kj}(x)\ \le\ 2\frac{(1+x)^k(\overline m+\overline n)^k}{k^k\,k\,!}.
\qquad \ j=1,2.
\ee
where $\overline m$ and $\overline n$ denote the total mass of $m$ and $n$.
\end{proposition}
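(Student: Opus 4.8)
\emph{Plan.} The structural and sign statements follow from a short induction on~$k$; the sharp estimate~\eqref{Phi-est} follows by unfolding the recursion into a single multiple integral and applying the arithmetic--geometric mean inequality globally. Since $M$ is off-diagonal, $M\cdot(\text{diagonal})$ is off-diagonal and $M\cdot(\text{off-diagonal})$ is diagonal, so, starting from the diagonal matrix $\Phi_0=(1+x)\bo$, the recursion forces $\Phi_k$ to be diagonal for even~$k$ and off-diagonal for odd~$k$. The two surviving entries are non-negative by the same induction, since $M(z)\Phi_k(z)$ then has non-negative entries and both integrations in the recursion -- the inner Stieltjes integration against the positive measures $m,n$ and the outer Lebesgue integration -- preserve non-negativity; strict positivity at $x=1$ propagates because, once $\Phi_{kj}$ is positive on an interval reaching $x=1$, assumption~\eqref{support}(b) ensures that $m$ and $n$ carry mass on that interval, making $\Phi_{k+1,j}(1)$ a strictly positive integral.

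For the estimate I would first apply Fubini (legitimate, as all integrands are non-negative) to collapse the double integral into
\[
\Phi_{k+1}(x)=\int_{[-1,x)}(x-z)\,M(z)\,\Phi_k(z)\,dz ,
\]
and then iterate this identity down to $\Phi_0(z)=(1+z)\bo$ to obtain the closed form
\[
\Phi_k(x)=\int\!\cdots\!\int_{-1\le z_k<\cdots<z_1<x}(x-z_1)(z_1-z_2)\cdots(z_{k-1}-z_k)(1+z_k)\,M(z_1)\cdots M(z_k)\,dz_1\cdots dz_k .
\]
Setting $\mu:=m+n$, every off-diagonal entry of $M$ is dominated by~$\mu$, so each non-zero entry of $M(z_1)\cdots M(z_k)$ is bounded above by $d\mu(z_1)\cdots d\mu(z_k)$, and the problem reduces to a scalar estimate of the ``gap product'' against $\mu^{\otimes k}$ over the simplex $z_k<\cdots<z_1<x$.

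Two elementary observations then close the argument. The $k+1$ factors $(x-z_1),(z_1-z_2),\dots,(z_{k-1}-z_k),(1+z_k)$ are non-negative on the domain and telescope to $x-(-1)=1+x$, so by AM--GM their product is at most $\bigl(\tfrac{1+x}{k+1}\bigr)^{k+1}$; and $\int\!\cdots\!\int_{z_k<\cdots<z_1<x}d\mu(z_1)\cdots d\mu(z_k)$ is the $\mu^{\otimes k}$-measure of one chamber of the cube $[-1,x)^k$, hence at most $\tfrac1{k!}\,\mu([-1,x))^k\le\tfrac1{k!}(\overline m+\overline n)^k$ (the loss occurring only on diagonals, which is precisely why atoms are harmless). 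Multiplying these two bounds gives
\[
0\le\Phi_{kj}(x)\le\frac{(1+x)^{k+1}(\overline m+\overline n)^k}{(k+1)^{k+1}\,k!},
\]
and since $(1+x)\,k^k\le 2k^k\le 2(k+1)^{k+1}$ on $[-1,1]$, the right-hand side is at most $2\,(1+x)^k(\overline m+\overline n)^k/(k^k\,k!)$, which is~\eqref{Phi-est}.

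The only genuinely delicate point is the $k^{-k}$: a term-by-term induction applied directly to the recursion yields only the weaker bound $(1+x)^{k+1}(\overline m+\overline n)^k/(k+1)!$, because such an argument cannot exploit the simultaneous telescoping of all $k+1$ gap factors -- one really has to unfold the entire iteration and invoke AM--GM for all factors at once. The remaining technical points (justifying Fubini for the Stieltjes integrals, and the chamber-volume inequality in the presence of atoms) are routine, all quantities in sight being non-negative.
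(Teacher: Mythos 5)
Your proof is correct and takes essentially the same route as the paper: integrate out the Lebesgue variable, unfold the recursion into a simplex integral of the telescoping gap product against the $k$-fold product of $m+n$, bound that product by the equal-factors (AM--GM) argument, and bound the simplex measure by $(\overline m+\overline n)^k/k!$ via the $k!$ disjoint chambers (atoms only losing mass on the diagonals). The only cosmetic differences are that you apply AM--GM to all $k+1$ factors and then relax the resulting slightly sharper bound to \eqref{Phi-est}, whereas the paper first bounds $1+y_1\le 2$ and applies the argument to the remaining $k$ gaps, and that the paper channels the entrywise domination through the scalar majorant $\psi_k$ built from $p=m+n$, which is equivalent to your direct bound on the entries of $M(z_1)\cdots M(z_k)$ by the product measure.
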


\proof The first assertions follow by induction from the construction \eqref{Phi} and the
assumption that $m$ and $n$ are positive measures.  To prove the estimate, let
$p=m+n$, and define $\psi_k$ by
\bes
\psi_0(x)\=1+x,\qquad \psi_{k+1}(x)\=\pint^x\pint^y \psi_k(z)\,dp(z)\,dy.
\ees
Each $\Phi_{kj}(x)$ is bounded by $\psi_k(x)$.  Changing the order of integration and integrating 
first with respect to $y$, the result can be written as
\beas
&& \psi_k(x)\=\pint^x(x-y_k)\psi_{k-1}dp(y_k)\\
&&\=\pint^x\pint^{y_k}\left[(x-y_k)(y_k-y_{k-1})\right]\psi_{k-2}(y_{k-1})\,dp(y_{k-1})\,dp(y_k)\\
&&\=\pint^x\cdots\pint^{y_2}\left[(x-y_k)(y_k-y_{k-1})\cdots(y_2-y_1)\right]
\psi_0(y_1)\,dp(y_1)\,dp(y_2)\cdots\,dp(y_k). 
\eeas
The product in brackets is maximized when the $k$ factors are all the same.  The
interval $[y_1,x]$ has length at most $1+x$, and $\psi_0(x)\le2$, so
\be\label{first-estimate}
0\ \le\ \psi_k(x)\le 2\left[\frac{1+x}{k}\right]^k
\int_{0\le y_1<\dots<y_k< x}dp^{(k)}(y_1,y_2,\dots y_k),
\ee
where $p^{(k)}$ denotes the product measure on the $k$-cube $\{(x_1,\dots x_k): |x_j|\le 1\}$
in $\R^k$.  The domain of integration is one of $k\,!$ pairwise disjoint domains in the 
cube that are obtained by permuting the indices.  Each domain has the same measure, so
\be\label{second-estimate}
0\ \le\ \int_{0\le y_1<\dots<y_k< x}dp^{(k)}(y_1,y_2,\dots y_k)\ \le\ \frac{\left[\pint^xdp(y)\right]^k}
{k\,!}.
\ee
Since  $\int^1_{-1}dp(x)\=\overline m+\overline n$, the estimates \eqref{first-estimate} and
\eqref{second-estimate} imply \eqref{Phi-est}.\qquad$\qed$.

\begin{corollary}\label{Phi-order} 
(a) The function $\Phi(x,\l)$ is continuous in both variables, and entire as a function of
$\l$, for fixed $x$.

\sms\nin (b)
Each entry of $\Phi(1,\l)$ is dominated by 
\be\label{exp-est}
2\sum_{k=0}^\infty \frac{a^{2k}|\l|^k}{(2k)\,!}\ \le\ 2\exp(a|\l|^{1/2}),\qquad a=\sqrt{2(\overline m
+\overline n)}.
\ee
\end{corollary}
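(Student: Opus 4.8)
The plan is to read everything off the power series $\Phi(x,\l)=\sum_{k\ge0}\l^k\Phi_k(x)$ of \eqref{Phi} together with the estimate \eqref{Phi-est}. For part (a): every $\Phi_k$ is continuous in $x$ — $\Phi_0$ is, and the recursion presents $\Phi_{k+1}$ as a Lebesgue integral $\int_{-1}^x g_k(y)\,dy$ of a function $g_k(y)=\int_{-1}^y M\Phi_k$ of bounded variation, so continuity survives the atoms of $m,n$. By \eqref{Phi-est}, using $1+x\le2$ on $[-1,1]$,
\[
\sum_{k\ge0}|\l|^k\sup_{|x|\le1}\Phi_{kj}(x)\ \le\ 2\sum_{k\ge0}\frac{\bigl(2(\overline m+\overline n)|\l|\bigr)^k}{k^k\,k!},
\]
which converges for every $\l$, and uniformly for $|\l|\le R$ and all $x$; by the Weierstrass theorem $\Phi$ is jointly continuous and, for fixed $x$, a locally uniform limit of polynomials in $\l$, hence entire.

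For part (b) the real content is the coefficientwise bound $0\le\Phi_{kj}(1)\le 2a^{2k}/(2k)!$ with $a^2=2(\overline m+\overline n)$; summing and using $\sum_k t^{2k}/(2k)!=\cosh t\le e^t$ with $t=a|\l|^{1/2}$ then yields \eqref{exp-est}. A weaker version is already at hand: setting $x=1$ in \eqref{Phi-est} gives $\Phi_{kj}(1)\le 2\bigl(2(\overline m+\overline n)\bigr)^k/(k^k\,k!)$, and since $(2k)!=(k+1)\cdots(2k)\cdot k!\le(2k)^kk!=2^kk^kk!$ one gets $\Phi_{kj}(1)\le2\bigl(4(\overline m+\overline n)\bigr)^k/(2k)!$, i.e. the entries of $\Phi(1,\l)$ are bounded by $2\exp\!\bigl(2\sqrt{(\overline m+\overline n)|\l|}\bigr)$ — order $\tfrac12$, but with the coarser rate $2\sqrt{\overline m+\overline n}$ in place of $\sqrt{2(\overline m+\overline n)}$.

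To obtain the sharp rate I would estimate $\Phi_k(1)$ at the endpoint directly. All $\Phi_k$ are entrywise nonnegative and $\l\mapsto\Phi(1,\l)$ is entrywise increasing on $[0,\infty)$, so it suffices to treat real $\l>0$; moreover $m,n\le p:=m+n$ as measures, so by positivity of the recursion one may replace $M$ by $p\,\bigl[\begin{smallmatrix}0&1\\1&0\end{smallmatrix}\bigr]$, after which $\Phi_k$ becomes $\psi_k(x)$ times a power of $\bigl[\begin{smallmatrix}0&1\\1&0\end{smallmatrix}\bigr]$, where $\psi_k$ is the scalar iterate of \autoref{Phik} built from $p$. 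The entries of $\Phi(1,\l)$ are thus dominated by $\sum_k|\l|^k\psi_k(1)$ with
\[
\psi_k(1)=\int_{-1<z_1<\dots<z_k<1}\ \prod_{j=0}^{k}(z_{j+1}-z_j)\ dp(z_1)\cdots dp(z_k),\qquad z_0=-1,\ z_{k+1}=1,
\]
and the goal is $\psi_k(1)\le 2\bigl(2(\overline m+\overline n)\bigr)^k/(2k)!$. The reason the exponent is $\sqrt{2(\overline m+\overline n)|\l|}$ is a ``gap-unfolding'' identity: writing $z_{j+1}-z_j=\int_{z_j}^{z_{j+1}}dw$ rewrites $\psi_k(1)$ as an integral over a strictly increasing string of $2k+1$ points in $(-1,1)$ whose $k+1$ ``length'' points carry Lebesgue measure (total $2$) and whose $k$ ``mass'' points carry $dp$ (total $\overline m+\overline n$); in the constant-coefficient case $dp=\tfrac12(\overline m+\overline n)\,dx$ this integral equals, by the Dirichlet integral over the simplex, $2^{k+1}(\overline m+\overline n)^k/(2k+1)!$ — comfortably below the target.

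The hard part is this last comparison for a general measure $p$. The obvious route — bound the gap product by $(2/(k+1))^{k+1}$ (arithmetic--geometric mean) and the remaining measure factor by $(\overline m+\overline n)^k/k!$ — is too lossy: the resulting $2^{k+1}(\overline m+\overline n)^k/\bigl((k+1)^{k+1}k!\bigr)$ in fact exceeds $2\bigl(2(\overline m+\overline n)\bigr)^k/(2k)!$ once $k$ is moderately large, because the configuration maximizing the gap product pins the $z_i$ to one spot, where $dp^{\otimes k}$ carries little mass. One must integrate the gap product against $dp^{\otimes k}$ jointly; concretely, I would show that among positive measures of fixed total mass on $[-1,1]$ the functional $\psi_k(1)$ is maximized close enough to the uniform density that the $(2k)!$ denominator (versus the uniform case's $(2k+1)!$) absorbs the discrepancy — via a rearrangement/convexity argument on the simplex of gap lengths, or by transporting the classical order-$\tfrac12$ transfer-matrix bounds for the Krein string through the reduction above. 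Tracking the prefactor $2$ (needed only to cover $k=0$, where $\Phi_0(1)=2\bo$, and $\sinh t/t\le e^t$ near $0$) is then routine.
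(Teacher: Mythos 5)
Your part (a) and the first half of your part (b) are exactly the paper's argument: the authors also deduce joint continuity and entirety from uniform convergence of \eqref{Phi} via \eqref{Phi-est}, and their entire proof of (b) is the one line ``since $(2k)!\le(2k)^k k!$, \eqref{Phi-est} implies \eqref{exp-est}.'' Your diagnosis of that line is correct: at $x=1$ the estimate \eqref{Phi-est} reads $\Phi_{kj}(1)\le 2\,(2(\overline m+\overline n))^k/(k^k k!)$, and since $(2k)!>k^k k!$ for $k\ge1$ this does \emph{not} give the coefficientwise bound $2a^{2k}/(2k)!$ with $a^2=2(\overline m+\overline n)$; what it does give, exactly as you compute, is the bound with $4(\overline m+\overline n)$ in place of $2(\overline m+\overline n)$, i.e.\ domination by $2\exp\bigl(2\sqrt{(\overline m+\overline n)|\l|}\bigr)$. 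So the ``coarser rate'' you obtain is in fact all that the paper's own proof establishes (the constant in \eqref{exp-est} appears to carry a spurious $\sqrt2$), and it is all that is ever used: the only downstream consequence is that $\D(\l)$ has order $1/2$ and finite type, which feeds the Titchmarsh zero count \eqref{zero-sum0}.

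Where your proposal goes beyond this, it is not a proof. The claim that ``the real content is $\Phi_{kj}(1)\le 2a^{2k}/(2k)!$'' commits you to a sharp coefficientwise inequality that the paper never proves, and your route to it stops precisely at the hard point: the comparison of $\psi_k(1)=\int_{-1<z_1<\dots<z_k<1}\prod_{j=0}^{k}(z_{j+1}-z_j)\,dp(z_1)\cdots dp(z_k)$ for a general positive measure $p$ with the uniform-density value $2(2\overline p)^k/(2k+1)!$ is exactly the unestablished step, and the ``rearrangement/convexity on the simplex of gap lengths'' or ``transport of Krein-string transfer-matrix bounds'' is a plan, not an argument (you yourself show the obvious AM--GM route fails). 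The clean way to close your write-up is therefore either to prove that comparison, or---simpler and faithful to the paper's actual content---to state and prove (b) with $a=2\sqrt{\overline m+\overline n}$ (equivalently, with an unspecified constant in the exponent), noting that order $1/2$ and finite type is the only property required later.
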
.

\proof  (a) The estimates \eqref{Phi-est} imply that the series \eqref{Phi} converges uniformly 
on bounded sets in $[-1,1]\times\C$.  

\sms\nin(b)
Since $(2k)\,!\le (2k)^k k\,!$, the estimates \eqref{Phi-est} imply the
bound  \eqref{exp-est}.\qquad$\qed$

\mds
We are interested in the Dirichlet problem for solutions of \eqref{beam-system}:
\be\label{dirichlet}
\vp_1(\pm 1,\l)\=0\=\vp_2(\pm 1,\l).
\ee
A value $\l\in\C$ for which a non-zero solution of \eqref{beam-system},
\eqref{dirichlet} exists will be referred to as a Dirichlet eigenvalue.  Note that
zero is not an eigenvalue.  

\begin{proposition} The Dirichlet eigenvalues $\{\l_\nu\}$ are precisely the zeros 
of $\D$, where
\bes
\D(\l)\=\det \Phi(1,\l).
\ees
They satisfy 
\be\label{zero-sum}
\sum_\nu \frac1{|\l_\nu|}\ <\ \infty.
\ee\end{proposition}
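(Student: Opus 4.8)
The plan is to identify the Dirichlet eigenvalues with the zeros of the entire function $\D(\l)=\det\Phi(1,\l)$ and then use the growth estimate from \autoref{Phi-order}(b) together with a Hadamard-type argument to control the zeros. First I would argue the characterization: a non-zero solution $\vp$ of \eqref{beam-system} with $\vp(-1,\l)=0$ is not automatic, since the partial fundamental solution $\Phi$ only builds in $\Phi(-1,\l)=0$ and $D\Phi(-1,\l)=\bo$; a general solution vanishing at $x=-1$ together with its first derivative would be identically zero, so to get a non-trivial solution of the \emph{fourth-order} problem we must allow $D\vp(-1)$ to be an arbitrary (non-zero) column vector $c\in\C^2$, and then $\vp(x,\l)=\Phi(x,\l)c$. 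The Dirichlet condition at the right endpoint, $\vp(1,\l)=0$, becomes $\Phi(1,\l)c=0$, which has a non-zero solution $c$ precisely when $\det\Phi(1,\l)=\D(\l)=0$. (One should also note that the condition $\vp(-1)=0$ alone — rather than $\vp(-1)=D\vp(-1)=0$ — is the correct Dirichlet condition here because $M\vp$ involves no derivatives, so \eqref{beam-system} is a second-order system for the pair $\vp$, i.e. effectively fourth order for the beam; the solution space of $D^2\vp=\l M\vp$ with $\vp(-1)=0$ is exactly $\{\Phi(\cdot,\l)c:c\in\C^2\}$.) That zero is not an eigenvalue is immediate: at $\l=0$, $\Phi(1,0)=2\bo$, so $\D(0)=4\ne0$.

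Next I would establish the summability \eqref{zero-sum}. By \autoref{Phi-order}(b), every entry of $\Phi(1,\l)$ is dominated by $2\exp(a|\l|^{1/2})$ with $a=\sqrt{2(\overline m+\overline n)}$; hence $\D(\l)=\Phi_{11}(1,\l)\Phi_{22}(1,\l)-\Phi_{12}(1,\l)\Phi_{21}(1,\l)$, being a sum of products of two such entries, satisfies $|\D(\l)|\le C\exp(2a|\l|^{1/2})$ for a suitable constant $C$. Thus $\D$ is an entire function of order at most $1/2$ (indeed $\le 1/2$, since growth is controlled by $|\l|^{1/2}$), and in particular of finite order strictly less than $1$. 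Since $\D(0)\ne0$, none of its zeros is at the origin, and the standard convergence-exponent theory for entire functions of finite order (the elementary half of the Hadamard factorization theorem / Jensen's formula, e.g. as in Levin or Boas) gives that the exponent of convergence of the zero set is at most the order, hence $\le 1/2<1$; therefore $\sum_\nu |\l_\nu|^{-1}<\infty$. In fact $\sum_\nu|\l_\nu|^{-1/2-\epsilon}<\infty$ for every $\epsilon>0$, but \eqref{zero-sum} is all that is claimed.

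The main obstacle is the first step rather than the second: one must be careful that $\det\Phi(1,\l)=0$ genuinely captures \emph{all} Dirichlet eigenvalues and only those, given the mild hypotheses \eqref{support} on $m$ and $n$ (same support, endpoints not atoms) and the requirement that $\vp$ merely be continuous. The point to check is that every continuous solution of \eqref{beam-system} on $[-1,1]$ with $\vp(-1,\l)=0$ is of the form $\Phi(\cdot,\l)c$ — this follows by integrating \eqref{beam-system} twice against the measures, which produces a Volterra-type integral equation whose solution space is parametrized exactly by the initial data $D\vp(-1)=c$ (the value $\vp(-1)$ being pinned to $0$), and uniqueness for that Volterra equation is what makes the parametrization bijective. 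Once this is in hand, the equivalence ``$\l$ is a Dirichlet eigenvalue $\iff$ $\Phi(1,\l)$ is singular $\iff$ $\D(\l)=0$'' is immediate, and the growth estimate from the already-proved \autoref{Phi-order}(b) does the rest.
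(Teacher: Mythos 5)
Your proposal is correct and follows essentially the same route as the paper: parametrize the solutions vanishing at $x=-1$ by the columns of $\Phi(\cdot,\l)$ so that the Dirichlet eigenvalues are exactly the zeros of $\D(\l)=\det\Phi(1,\l)$, then use the entry bound of \autoref{Phi-order}(b) to see that $\D$ is entire of order at most $1/2$ and conclude \eqref{zero-sum} from standard zero-counting for entire functions of order less than one. The only cosmetic difference is that the paper cites Titchmarsh to get $|\l_\nu|\ge c\nu^2$ while you invoke the exponent-of-convergence bound directly; these are the same argument in substance.
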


\proof Any solution $\vp$ of the Dirichlet problem with eigenvalue $\l$ is a linear combination 
of the two columns of $\Phi(\cdot,\l)$:
\bes
\vp(x)\=\Phi(x,\l)\,v
\ees 
where $v$ is a constant 2-vector.  The condition at $x=1$ implies that $\Phi(1,\l)v=0$.
Thus the necessary and sufficient condition for the existence of a non-zero
solution (``eigenfunction'') of the Dirichlet problem 
\eqref{beam-system}, \eqref{dirichlet} is that $\det\Phi(1,\l)=0$.

\sms
Corollary \ref{Phi-order} implies that $|\D(\l)|$ is dominated by $\exp(4a|\l|^{1/2})$.
Therefore the zeros $\l_\nu$, if numbered with $|\l_\nu|$ non-decreasing,  satisfy
\be\label{zero-sum0}
|\l_\nu|\ \ge\ c\nu^2
\ee
for some constant $c>0$. \cite[\S 8.21]{titchmarsh}. In particular, 
\eqref{zero-sum} is true.\qquad$\qed$

\mds At the end of this section we will show that the zeros of $\D$ are simple.

\sms
If $\vp$ is a solution of the Dirichlet problem \eqref{beam-system}, \eqref{dirichlet},
then $f=D\vp $ is a solution of 
\be\label{f-problem}
Df\=\l M \vp.
\ee
It follows that $f$ is a function of bounded variation which is continuous at any
point that is not an atom.  In particular, $f$  is continuous at the endpoints $\pm 1$.
This fact, for $f$ and similar functions, justifies the various integration-by-parts
formulas in this and later sections.

\sms
The following identity is fundamental to our discussion of the Dirichlet problem.

\begin{lemma}\label{basic-ID} If $\vp$ is a solution of the Dirichlet problem with
eigenvalue $\l$, and $f=D\vp$, then 
\be\label{basic-id}
\intv [f_1\overline f_2+f_2\overline f_1]\,dx\=-\l\intv [|\vp_1|^2\,dm+|\vp_2|^2\,dn]\ \ne \ 0.
\ee\end{lemma}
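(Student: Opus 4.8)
The plan is to derive \eqref{basic-id} by the standard ``multiply by the conjugate and integrate by parts'' argument, exploiting the off-diagonal structure of $M$ so that the two components of the system feed into each other. First I would record the two scalar equations contained in \eqref{beam-system}: $D^2\vp_1 = \l n\vp_2$ and $D^2\vp_2=\l m\vp_1$, i.e. $Df_1 = \l\,\vp_2\,dn$ and $Df_2=\l\,\vp_1\,dm$ in the Stieltjes sense, where $f=(f_1,f_2)^{T}=D\vp$. The key computation is to evaluate $\intv D(f_1\overline\vp_2 + f_2\overline\vp_1)\,dx$ in two ways. On one hand, since $\vp$ satisfies the Dirichlet conditions \eqref{dirichlet} and $f$ is continuous at $\pm1$ (as noted just before the lemma, using \eqref{support}(a)), this integral is the boundary term $\bigl[f_1\overline\vp_2+f_2\overline\vp_1\bigr]_{-1}^{1}=0$.

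On the other hand, expanding the derivative by the Leibniz rule for functions of bounded variation gives
\begin{equation*}
\intv D(f_1\overline\vp_2+f_2\overline\vp_1)\,dx
= \intv (Df_1)\overline\vp_2 + \intv f_1\,D\overline\vp_2 + \intv (Df_2)\overline\vp_1 + \intv f_2\,D\overline\vp_1 .
\end{equation*}
Here $D\overline\vp_2=\overline f_2$ and $D\overline\vp_1=\overline f_1$, so the second and fourth terms combine to $\intv(f_1\overline f_2 + f_2\overline f_1)\,dx$, which is the left-hand side of \eqref{basic-id}. For the first and third terms I substitute $Df_1=\l\vp_2\,dn$ and $Df_2=\l\vp_1\,dm$, obtaining $\l\intv\vp_2\overline\vp_2\,dn + \l\intv\vp_1\overline\vp_1\,dm = \l\intv(|\vp_1|^2\,dm + |\vp_2|^2\,dn)$. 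Setting the whole expression equal to zero and rearranging yields the displayed identity $\intv[f_1\overline f_2+f_2\overline f_1]\,dx = -\l\intv[|\vp_1|^2\,dm+|\vp_2|^2\,dn]$.

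It remains to argue the inequality $\ne 0$. For this I would note that $\l\ne0$ (zero is not an eigenvalue, as observed right after \eqref{dirichlet}), and that the measure $|\vp_1|^2\,dm+|\vp_2|^2\,dn$ cannot vanish identically for an eigenfunction: if it did, then by \eqref{support}(b) both $\vp_1$ and $\vp_2$ would vanish on the common support of $m$ and $n$, hence $Df_1=Df_2=0$ there and in fact $f$ is constant off the support, so $f=D\vp$ is constant on all of $[-1,1]$ by continuity; combined with $\vp(-1)=\vp(1)=0$ this forces $f\equiv 0$, hence $\vp\equiv0$, contradicting that $\vp$ is a (nonzero) eigenfunction. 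Therefore the right-hand side is a nonzero (negative real, actually) multiple of $\l$.

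The only genuinely delicate point is the manipulation of the integration-by-parts formula when $m,n$ are measures and $f$ has jumps: one must be careful that the Leibniz rule $D(gh)=(Dg)h+g(Dh)$ holds in the distributional sense here, i.e. that there are no extra Dirac contributions at atoms. This is legitimate because in each product one factor ($\vp_1$ or $\vp_2$) is continuous and the other ($f_1$ or $f_2$) is of bounded variation with the convention for the value at an atom being the average of one-sided limits — exactly the regularity the paper flags in the paragraph preceding the lemma. I expect that to be the main (and essentially only) obstacle; once the integration-by-parts is justified, the identity and the nonvanishing are immediate.
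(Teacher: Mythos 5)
Your argument is correct and is essentially the paper's own proof: your identity $\intv D(f_1\overline\vp_2+f_2\overline\vp_1)\,dx=0$ is exactly the integration by parts $(D\vp,\s D\vp)=-(D^2\vp,\s\vp)=-\l(M\vp,\s\vp)$ written componentwise, and your nonvanishing argument (positivity of $m,n$ forces $\vp_1=0$ on $\operatorname{supp}m$, $\vp_2=0$ on $\operatorname{supp}n$, hence $M\vp=0$, $D\vp$ constant, and the Dirichlet conditions kill $\vp$) matches the paper's contradiction. The points you flag as delicate (Leibniz rule with one continuous factor, $\l\ne0$) are handled consistently with the paper, so nothing is missing.
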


\proof It is convenient to write the left side in the form $(D\vp,\s D\vp)$, where
the inner product $(\ ,\ )$ is the $L^2$ inner product for vector-valued functions:
\be\label{L2ip}
(f,g)\=\intv f(x)\cdot \overline{g(x)}\,dx\=
\intv [f_1(x)\overline{g_1(x)}+f_2(x)\overline{g_2(x)}]\,dx,
\ee
and $\s$ is the matrix
\be\label{sigma}
\s\=\begin{bmatrix} 0&1\\ 1&0\end{bmatrix}.
\ee
Then integration by parts gives
\be\label{basic-id2}
(D\vp,\s D\vp)\=-(D^2\vp,\s\vp)\=-\l(M\vp,\s\vp),
\ee
which is \eqref{basic-id}.

If the right side of \eqref{basic-id} is zero, then $\vp_1=0$ on the
support of $m$ and $\vp_2=0$ on the support of $n$.  As a consequence 
\bes
(D\vp,D\vp)\=-\l(M\vp,\vp)\=-\l\intv [\vp_1\overline\vp_2\,dm+\vp_2\overline\vp_1\,dn]\=0,
\ees
implying that $\vp$ is constant, hence zero, a contradiction.\qquad\qed

\begin{theorem}\label{phi-abcd} The Dirichlet eigenvalues satisfy the
following conditions:

\sms\nin (a) If $\l$ is an eigenvalue, so is $-\l$.  

\sms\nin (b) Each  eigenvalue $\l$ is real and its eigenspace has
dimension one.  \end{theorem}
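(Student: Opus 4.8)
The plan is to treat (a) by an elementary symmetry of the system, and (b) in two stages — first reality of the eigenvalues, obtained from the identity of \autoref{basic-ID}, and then one‑dimensionality of the eigenspace, obtained from the positivity already built into \autoref{Phik}. The order matters: the dimension argument will use that $\l$ is real.

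\emph{Part (a).} If $\vp=(\vp_1,\vp_2)$ solves \eqref{beam-system}, \eqref{dirichlet} with eigenvalue $\l$, set $\psi=(\vp_1,-\vp_2)$. Then $D^2\psi_1=D^2\vp_1=\l n\vp_2=-\l n\psi_2$ and $D^2\psi_2=-D^2\vp_2=-\l m\vp_1=-\l m\psi_1$, i.e.\ $D^2\psi=-\l M\psi$, while $\psi(\pm1)=0$; hence $-\l$ is a Dirichlet eigenvalue. (Equivalently, conjugation by $\operatorname{diag}(1,-1)$ turns $M$ into $-M$; at the level of \eqref{Phi} this reads $\Phi(x,-\l)=\operatorname{diag}(1,-1)\,\Phi(x,\l)\,\operatorname{diag}(1,-1)$, since $\Phi_k$ is diagonal for even $k$ and off-diagonal for odd $k$, so $\D$ is an even function of $\l$.)

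\emph{Part (b), reality.} Let $\vp$ be an eigenfunction for $\l$ and $f=D\vp$. By \autoref{basic-ID},
\[
\intv[f_1\overline f_2+f_2\overline f_1]\,dx\=-\l\intv[|\vp_1|^2\,dm+|\vp_2|^2\,dn]\ \ne\ 0 .
\]
The left-hand side equals $\intv 2\operatorname{Re}(f_1\overline f_2)\,dx$, hence is real; and since $\l\neq0$ while the common value is nonzero, the factor $\intv[|\vp_1|^2\,dm+|\vp_2|^2\,dn]$ is a \emph{strictly positive} real number. Dividing, $\l\in\R$.

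\emph{Part (b), dimension.} The solutions of \eqref{beam-system} vanishing at $x=-1$ form the two-dimensional space $\{x\mapsto\Phi(x,\l)v:v\in\C^2\}$, and $v\mapsto\Phi(\cdot,\l)v$ is injective because $D\Phi(-1,\l)=\bo$; so the eigenspace for $\l$ is isomorphic to $\ker\Phi(1,\l)$, of dimension $2-\operatorname{rank}\Phi(1,\l)$. Since $\l$ is a Dirichlet eigenvalue, $\D(\l)=\det\Phi(1,\l)=0$, so $\operatorname{rank}\Phi(1,\l)\le1$. It remains to exclude $\Phi(1,\l)=0$, and here is where reality is used: by \autoref{Phik}, for even $k$ the matrix $\Phi_k$ is diagonal with entries $\Phi_{k1}(1),\Phi_{k2}(1)\ge0$ and $\Phi_{01}(1)=\Phi_{02}(1)=2$, so because $\l^k\ge0$ for real $\l$ and even $k$,
\[
\bigl(\Phi(1,\l)\bigr)_{jj}\=\sum_{k\ \mathrm{even}}\l^{k}\,\Phi_{kj}(1)\ \ge\ \Phi_{0j}(1)\=2\ >\ 0,\qquad j=1,2 .
\]
Thus $\Phi(1,\l)\neq0$, hence $\operatorname{rank}\Phi(1,\l)=1$ and the eigenspace is one-dimensional. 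The only real subtlety I anticipate is recognizing that reality must be secured before the dimension count and that \autoref{Phik} already packages exactly the sign information (even powers of a real $\l$, nonnegative coefficients, a positive constant term) needed to keep the diagonal of $\Phi(1,\l)$ positive; the rest is bookkeeping.
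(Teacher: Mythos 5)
Your proposal is correct and follows essentially the same route as the paper: part (a) by the sign flip $\vp_2\mapsto-\vp_2$, reality via the identity of \autoref{basic-ID}, and one-dimensionality from the positivity of the diagonal entries of $\Phi(1,\l)$ supplied by \autoref{Phik} (the paper phrases this as the matrix having eigenvalues $0$ and $\tr\Phi(1,\l)>0$, you phrase it as rank exactly one — the same fact). Your reality step is a slight streamlining (reading off that the left side of \eqref{basic-id} is manifestly real and the mass integral manifestly positive, instead of comparing with $\overline\l$ as the paper does), but it rests on the same lemma and is perfectly sound.
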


\proof If $\vp=[\vp_1,\vp_2]^t$ is an eigenvector with eigenvalue $\l$, it
follows immediately from \eqref{beam2} that $[\vp_1,-\vp_2]^t$ has eigenvalue
$-\l$.  The complex conjugate $\overline \vp$ has eigenvalue $\overline \l$.  Integration
by parts shows that
\bes
(D^2\vp,\vp)\=(\vp, D^2\vp)\=\overline{(D^2\vp,\vp)},
\ees
so Lemma \eqref{basic-ID} and \eqref{basic-id2} show that 
\bes
0\ \ne\ \l\intv [|\vp_1|^2\,dm+|\vp_2|^2\,dn]\=\overline\l\intv [|\vp_1|^2\,dm+|\vp_2|^2\,dn].
\ees
Therefore $\l=\overline\l$. Then the singular matrix $\Phi(1,\l)$ has positive diagonal entries, 
so it has two eigenvalues, namely  $0$ and ${\rm tr}\,\Phi(1,\l)>0$.  
Thus the eigenspace for $\l$ has dimension 1.  \qquad $\qed$

\begin{proposition}\label{simple-zeros} The zeros of $\D=\det\Phi(1,\l)$ are simple.
\end{proposition}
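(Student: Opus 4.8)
Here is the plan.

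Fix a zero $\l_0$ of $\D$; it suffices to show $\D'(\l_0)\neq 0$, since $\D$ is entire (\autoref{Phi-order}). Because $\Phi(1,\l)$ is $2\times 2$, Jacobi's formula gives $\D'(\l)=\tr\bigl(\operatorname{adj}\Phi(1,\l)\cdot\partial_\l\Phi(1,\l)\bigr)$. By \autoref{phi-abcd}(b) the matrix $A:=\Phi(1,\l_0)$ has positive diagonal entries and vanishing determinant, hence rank one; so, choosing $v\neq 0$ with $Av=0$ and $w\neq 0$ with $w^tA=0$, we have $\operatorname{adj}A=c\,v\,w^t$ with $c\neq 0$ (columns of $\operatorname{adj}A$ lie in $\ker A$ because $A\operatorname{adj}A=0$; rows are multiples of $w^t$ because $\operatorname{adj}A\cdot A=0$; and $\operatorname{adj}A\neq 0$ since $A\neq 0$). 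Therefore
\[
\D'(\l_0)\=c\,w^t\chi(1),\qquad \chi(x):=\partial_\l\Phi(x,\l_0)\,v .
\]
Here $\vp(x):=\Phi(x,\l_0)v$ is the eigenfunction ($\vp(\pm1)=0$, $D\vp(-1)=v$, $\vp\not\equiv 0$; real, as $\l_0$ is real by \autoref{phi-abcd}), while differentiating $D^2\Phi=\l M\Phi$ in $\l$ gives $D^2\chi=M\vp+\l_0 M\chi$, with $\chi(-1)=0$ and $D\chi(-1)=0$ (the data of $\Phi$ at $x=-1$ do not depend on $\l$).

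The computational core is a Lagrange identity based on the symmetry $\s M=M^t\s$, which holds as an identity of matrix measures since both sides equal $\operatorname{diag}(dm,dn)$. For the bilinear concomitant $B[\psi,\vp]:=\psi^t\s D\vp-(D\psi)^t\s\vp$ one has, reading derivatives in the Stieltjes sense, $dB[\psi,\vp]=\psi^t\s\,d(D\vp)-(d(D\psi))^t\s\vp$. Inserting $\psi=\chi$ and $\vp=\vp$, using $d(D\vp)=\l_0 M\vp\,dx$ and $d(D\chi)=(M\vp+\l_0 M\chi)\,dx$, the two $\l_0$-terms cancel (again by $\s M=M^t\s$), leaving $dB[\chi,\vp]=-(\vp_1^2\,dm+\vp_2^2\,dn)$. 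Integrating over $[-1,1]$ and using $\chi(-1)=D\chi(-1)=0$, $\vp(1)=0$,
\[
\chi(1)^t\,\s\,D\vp(1)\=-\intv(\vp_1^2\,dm+\vp_2^2\,dn).
\]
The same identity applied to $\Phi$ with itself gives $B[\Phi,\Phi]\equiv 0$ (its $x$-derivative vanishes and it is $0$ at $x=-1$), i.e. $\Phi(x,\l)^t\s D\Phi(x,\l)$ is symmetric; evaluating at $x=1$, $\l=\l_0$ and multiplying on the right by $v$ yields $A^t\bigl(\s D\vp(1)\bigr)=0$. Hence $\s D\vp(1)$ is a multiple of $w$, and a \emph{nonzero} one: $D\vp(1)=0$ together with $\vp(1)=0$ would force $\vp\equiv 0$ by uniqueness for the initial value problem \eqref{beam-system}.

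Combining the two displays, $\D'(\l_0)$ is a nonzero scalar multiple of $\chi(1)^t\s D\vp(1)=-\intv(\vp_1^2\,dm+\vp_2^2\,dn)$, and this integral is strictly positive: it is $\ge 0$ because $m,n$ are positive measures, and nonzero by \autoref{basic-ID}. Thus $\D'(\l_0)\neq 0$. The conceptual crux is the observation that the left null vector $w$ of $\Phi(1,\l_0)$ is forced to equal $\s D\vp(1)$ — this is what lets the classical string argument (differentiate the eigenvalue condition, apply Green's identity) carry over to the $2\times 2$ beam problem. The main technical point demanding care is that $\vp$ and $\chi$ are only Lipschitz, with $D\vp,D\chi$ of bounded variation and possibly jumping at atoms of $m,n$; the Leibniz rule and integrations by parts above must therefore be read as Stieltjes-integral identities, which is exactly the regularity justified in the paragraph preceding \autoref{basic-ID}. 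A minor additional point is the interchange of $D_x$ and $\partial_\l$ for $\Phi$, justified by the uniform convergence on bounded $\l$-sets of \eqref{Phi} and of its termwise derivatives established in \autoref{Phi-order}.
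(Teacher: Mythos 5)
Your proof is correct, and its analytic core is the same as the paper's: differentiate the eigenvalue condition in $\l$, apply the Lagrange/Green identity built on $\s M=M^t\s$, and invoke the nonvanishing of $\intv(\vp_1^2\,dm+\vp_2^2\,dn)$ from Lemma \ref{basic-ID}. The packaging, however, is genuinely different. The paper sidesteps Jacobi's formula and the adjugate by choosing the analytic family $v(\l)=[d(\l),-c(\l)]^t$, so that $\vp(x,\l)=\Phi(x,\l)v(\l)$ satisfies $\vp(1,\l)=[\D(\l),0]^t$ identically in $\l$; the boundary term in the integration by parts then comes out directly as $-D_\l\D(\l)\,D_x\vp_2(1,\l)$, and since this product equals the nonzero integral, both factors are nonzero at once --- in particular no separate argument that $D\vp(1)\ne\bz$ is needed. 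Your route instead fixes $v\in\ker\Phi(1,\l_0)$, differentiates $\Phi$ itself, and must supply two extra ingredients: the quasi-Wronskian symmetry $\Phi^t\s D\Phi=(D\Phi)^t\s\Phi$ (which is the paper's identity \eqref{phiphi}, valid for all $\l$ since the constant is computed at $x=-1$) to identify the left null vector $w$ with $\s D\vp(1)$, and backward uniqueness for \eqref{beam-system} from $x=1$ (a Volterra/Gronwall argument, legitimate for measure coefficients and consistent with the paper's construction of $\Psi$) to see that this multiple is nonzero. What your version buys is a more invariant formulation --- it explains structurally why the left null vector must be $\s D\vp(1)$ and would carry over verbatim to larger block sizes --- at the cost of the adjugate bookkeeping and the extra uniqueness step; the paper's choice of $v(\l)$ is leaner but more tied to the $2\times2$ setting. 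Your attention to the Stieltjes reading of the integrations by parts and to the interchange of $D_x$ and $\partial_\l$ matches the regularity discussion preceding Lemma \ref{basic-ID} and Corollary \ref{Phi-order}, so there is no gap there.
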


\proof Write
\bes
\Phi(1,\l)\=\begin{bmatrix}a(\l)&b(\l)\\ c(\l)&d(\l)\end{bmatrix}
\qquad v(\l)\= \begin{bmatrix}d(\l)\\-c(\l)\end{bmatrix},
\ees
and let $\vp(x,\l)=\Phi(x,\l)v(\l)$.  The entries of $\Phi(1,\l)$ are non-zero,
so $\vp\ne0$.  We have $\vp(-1,\l)=\bz$ and 
\bes
\vp(1,\l)\=\begin{bmatrix}\vp_1(1,\l)\\ \vp_2(1,\l)\end{bmatrix}
\=\begin{bmatrix} \D(\l)\\ 0\end{bmatrix}.
\ees
We want to prove that the derivative $D_\l\vp(1,\l)$ does not vanish if $\l$
is an eigenvalue.  Differentiating with respect to $\l$ shows that at an
eigenvalue
\bes
(D_x^2-\l M)D_\l\vp\=M\vp.
\ees
Therefore, by Lemma \eqref{basic-ID}, and because all terms here are real, 
\beas
&&\intv[|\vp_1|^2\,dm+|\vp_2|^2\,dn]\=(M\vp,\s\vp)\\
&&\quad\=(D_x^2D_\l\vp,\s\vp)-(\l M D_\l\vp,\s\vp)\\
&&\quad\=-(D_xD_\l\vp,\s D_x\vp)-\l(M\vp,\s D_\l\vp)\\
&&\quad\=-D_\l \vp\cdot\s D_x\vp\big|^1_{-1} + [(D_\l\vp,\s D_x^2\vp)-(\l M\vp,\s D_\l\vp)]\\
&&\quad\=- D_\l\vp_1(1,\l)D_x\vp_2(1,\l).
\eeas
Therefore $D_\l \D(\l)=D_\l\vp_1(1,\l)\ne 0$.\qquad\qed

\section{Spectral theory} \label{sec:Spectral T}

In this section we rephrase the Dirichlet problem for the Euler-Bernoulli beam as a standard
eigenvalue problem for a compact operator, whose eigenfunctions are the derivatives $D\vp$.

Let $\H$ be the $L^2$ space of vector-valued
functions on the interval $[-1,1]$, with the inner product \ref{L2ip} and norm
\bes
||f||\=(f,f)^{1/2}.
\ees
We also introduce an indefinite form
\be\label{indefinite}
\la f,g\ra\=\intv [f_1(x)\overline{g_2(x)}+f_2(x)\overline{g_1(x)}]\,dx
\=(\s f,g)\=(f,\s g)
\ee
where $\s$ is the matrix \eqref{sigma}.

We know that the Dirichlet eigenvalues $\{\l_\nu\}$ are real, and we take their 
eigenvectors $\{\vp_\nu\}$ to be real as well. Lemma \ref{basic-ID} can be rephrased as
\be\label{f-indefinite}
\la f_\nu,f_\nu\ra\=-\l_\nu\intv[\vp_{\nu,1}^2\,dm+\vp_{\nu,2}^2\,dn]\ne 0.
\ee
Thus $\la f_\nu,f_\nu\ra$ and $\l_\nu$ have opposite signs.  Because of this we index the 
eigenvalues with
\be\label{indexing}
\dots\ <\ \l_{2}\ <\ \l_{1}\ <\ 0\ < \l_{-1}\ <\ \l_{-2}\ <\ \dots\ .
\ee
The calculation that led to \eqref{f-indefinite} leads to two formulations for
$\la f_\nu,f_\mu\ra$ that show
\be\label{fg-zero}
\la f_\nu,f_\mu\ra\=0\qquad \hbox{if\ \  $\mu\ne\nu$.}
\ee

Let $\H_0\subset \H$ consist of the constant functions, and let $\H_1$ be the orthogonal
complement:
\bes
\H_1\=\left\{f\in \H\,:\, \intv f=0\right\}.
\ees
Note that each $f_\nu$ belongs to $\H_1$, since $\intv f_\nu=\vp_\nu(1)-\vp_\nu(0)=0$.

This discussion leads us to introduce two inverses of $D=D_x$ that map $\H$ to $\H$:
\be\label{D0inv}
D_0\inv g(x)\=\begin{cases}\frac12 \pint^x g(y)\,dy-\frac12\int_x^1g(y)\,dy,& g\in \H_1,\\
0,& g\in \H_0,\end{cases}
\ee
and 
\be\label{D1inv}
D_1\inv g(x)\=\frac12\int_{-1}^x(y+1)g(y)\,dy+\frac12\int_x^1(y-1)g(y)\,dy.
\ee

\begin{lemma}\label{Dinverses} For any $g\in\H$,
\bes
D_0\inv g(-1)\=D_0\inv g(1)\=0; \qquad D_1\inv g\in\H_1.
\ees\end{lemma}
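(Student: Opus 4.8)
The plan is to verify the two assertions by direct, elementary computation from the definitions \eqref{D0inv} and \eqref{D1inv}; there is essentially no obstacle here, only bookkeeping, so the point of the proof is just to display the cancellations.

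First I would handle $D_0\inv g$. For $g\in\H_0$ the claim is trivial since $D_0\inv g=0$. For $g\in\H_1$, evaluating \eqref{D0inv} at $x=-1$ gives $\frac12\pint^{-1}g(y)\,dy-\frac12\int_{-1}^1 g(y)\,dy$; the first integral is empty, hence $0$, and the second integral is $\intv g=0$ because $g\in\H_1$, so $D_0\inv g(-1)=0$. Similarly, at $x=1$ the second integral $\int_1^1 g$ is empty and the first is $\intv g=0$, so $D_0\inv g(1)=0$. Since a general $g\in\H$ splits as $g=g_0+g_1$ with $g_0\in\H_0$, $g_1\in\H_1$, and $D_0\inv$ is linear, the vanishing at $\pm1$ follows in general.

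Next I would show $D_1\inv g\in\H_1$, i.e. $\intv D_1\inv g\,dx=0$. Write $D_1\inv g(x)=\frac12\int_{-1}^x(y+1)g(y)\,dy+\frac12\int_x^1(y-1)g(y)\,dy$ and integrate in $x$ over $[-1,1]$. Swapping the order of integration (Fubini, legitimate since $g\in L^2([-1,1])\subset L^1$ and the kernels are bounded), the first term contributes $\frac12\intv(y+1)g(y)\Big(\int_y^1 dx\Big)dy=\frac12\intv(y+1)(1-y)g(y)\,dy$, and the second contributes $\frac12\intv(y-1)g(y)\Big(\int_{-1}^y dx\Big)dy=\frac12\intv(y-1)(y+1)g(y)\,dy$. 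Adding these, the integrand becomes $\frac12(1-y^2)g(y)+\frac12(y^2-1)g(y)=0$ pointwise, so $\intv D_1\inv g=0$, which is exactly $D_1\inv g\in\H_1$.

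The only mildly delicate point is the Fubini step, and it is routine given the integrability hypotheses; one could alternatively avoid it by differentiating and using $D(D_1\inv g)(x)=\tfrac12\int_{-1}^x g - \tfrac12\int_x^1 g$, whose integral over $[-1,1]$ visibly telescopes, but the direct swap is cleanest. I expect no real obstacle: this lemma is a purely computational warm-up, recording that $D_0\inv$ lands in the Dirichlet-boundary-condition class while $D_1\inv$ lands in the codimension-one space $\H_1$, which is what makes these two operators the right "inverses of $D_x$" for the subsequent construction of the compact operator $T$.
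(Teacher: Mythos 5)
Your proposal is correct and follows essentially the same route as the paper, which simply notes that the vanishing of $D_0\inv g$ at $\pm1$ is clear for $g\in\H_1$ and holds by definition for $g\in\H_0$, and that $D_1\inv g\in\H_1$ follows from an easy calculation; your Fubini computation is exactly that calculation spelled out. Nothing is missing — the explicit endpoint evaluations and the interchange of integration order are legitimate since $D_0\inv g$ and $D_1\inv g$ are continuous representatives and the kernels are bounded.
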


\proof The first pair of identities is clear for $g\in\H_1$ and true by definition
for $g\in \H_0$.  The third identity follows from an easy calculation.
\qquad$\qed$

\mds
In view of these remarks, the Dirichlet eigenvalue problem can be reformulated as
\bes
Df\=\l M D_0\inv f,\qquad f\in \H_1.
\ees
or, equivalently,
\be\label{Tproblem}
f\in \H_1,\qquad f\=\l Tf, \qquad T\=D_1\inv M D_0\inv.
\ee
Thus the problem \eqref{beam-system}, \eqref{dirichlet} is 
equivalent to a standard eigenvalue problem for the operator $T$ mapping
$\H_1$ to $\H_1$ or $\H$ to $\H$.
Note that $T=0$ on $\H_0$, since this is true, by definition of $D_0\inv$. 

\begin{lemma} The operator $T$ is a compact operator in $\H$. If $f$ is in
the image of $T$, then $f$ has bounded variation and is continuous at the
endpoints $x=\pm1$.
\end{lemma}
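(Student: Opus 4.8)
The plan is to analyze the three factors in the composition $T = D_1^{-1} M D_0^{-1}$ separately, showing that the two ``integration'' operators $D_0^{-1}$ and $D_1^{-1}$ are bounded (in fact smoothing), while the middle operator $M$ — multiplication by the matrix measure $\begin{bmatrix}0&n\\ m&0\end{bmatrix}$ — is the source of the compactness. First I would record that $D_0^{-1}\colon\H\to\H$ is bounded: from the formula \eqref{D0inv}, $D_0^{-1}g$ is an antiderivative of $g$ (up to the choice that kills the $\H_0$-component), so $D_0^{-1}g$ is absolutely continuous with $D_0^{-1}g(\pm1)=0$ by \autoref{Dinverses}, and $\|D_0^{-1}g\|_{L^\infty}\le \frac12\|g\|_{L^1}\le C\|g\|$. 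In particular $D_0^{-1}$ maps $\H$ continuously into $C([-1,1])$ (vector-valued), which is the key gain: a merely $L^2$ input becomes a bounded \emph{continuous} function before it meets the measure.

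Next I would examine $M D_0^{-1}$. Since $g\mapsto D_0^{-1}g$ lands in continuous functions vanishing at the endpoints, and $m,n$ are finite positive measures with no atoms at $\pm1$ (assumption \eqref{support}), the product $M(D_0^{-1}g)$ is a well-defined \emph{finite vector-valued measure} on $[-1,1]$, with total variation bounded by $(\overline m+\overline n)\,\|D_0^{-1}g\|_{L^\infty}\le C\|g\|$. Then $D_1^{-1}$ applied to this measure — interpreting \eqref{D1inv} as a Stieltjes integral against the measure rather than against $g(y)\,dy$ — produces a function $f=Tg$ that is a linear combination of the two truncated integrals $\int_{-1}^x(y+1)\,d\mu(y)$ and $\int_x^1(y-1)\,d\mu(y)$ of a finite measure $\mu$; such a function has bounded variation, is continuous wherever $\mu$ has no atom, and in particular is continuous at $x=\pm1$ because the weights $y+1$ and $y-1$ vanish there and $\mu$ has no atoms at $\pm1$. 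This proves the second assertion of the lemma, and also shows $Tg\in\H_1$ via \autoref{Dinverses}.

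For compactness, the cleanest route is to factor $T$ through an inclusion that is compact. The operator $g\mapsto D_1^{-1}(M D_0^{-1}g)$ sends the unit ball of $\H$ into a set of functions that are (i) uniformly bounded in total variation (by the estimate above, $\operatorname{Var}(Tg)\le C\|g\|$) and uniformly bounded in sup-norm, hence (ii) precompact in $L^2([-1,1])$ by Helly's selection theorem together with dominated convergence — a uniformly BV, uniformly bounded family has a pointwise-convergent subsequence whose limit then converges in $L^2$. Therefore $T$ maps bounded sets to precompact sets, i.e. $T$ is compact on $\H$. The main obstacle, and the point requiring care, is the rigorous justification that the composition $M D_0^{-1}$ is well-defined and estimable when the input is only in $L^2$: this is exactly where assumption \eqref{support}(a) (no atoms at the endpoints) and the fact that $D_0^{-1}g$ is genuinely continuous (not merely $L^2$) are used, and it is also where one must be careful that the ``value on the singular support'' conventions introduced earlier are consistent — but since $D_0^{-1}g$ is an honest continuous function, there is no mean-value ambiguity at this stage, so the argument goes through cleanly.
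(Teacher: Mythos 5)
Your proof is correct, but the compactness argument follows a genuinely different route from the paper's. The paper localizes compactness in the first factor: the H\"older estimate $|D_0\inv g(y)-D_0\inv g(x)|\le \|g\|\,(y-x)^{1/2}$ together with the uniform bound $|D_0\inv g|\le 2\|g\|$ gives, via Arzel\`a--Ascoli, that $D_0\inv$ is compact from $\H$ into the bounded continuous functions, and then $g\mapsto D_1\inv(gM)$ is merely observed to be bounded from that space back into $\H$, so $T$ is compact as (bounded)$\,\circ\,$(compact); bounded variation of $Tf$ and continuity at $\pm1$ are read off from this last map and assumption (a) of \eqref{support}. You instead use only the boundedness of $D_0\inv:\H\to C([-1,1])$, push the compactness to the output side: the image of the unit ball under $T$ is uniformly bounded in sup norm and in total variation (since $D(Tg)$ is the finite measure $MD_0\inv g$, with mass $\le(\overline m+\overline n)\|D_0\inv g\|_\infty$), and Helly's selection theorem plus dominated convergence give $L^2$-precompactness. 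This is a perfectly valid alternative; its advantage is that the BV estimate doing the compactness work is exactly the one needed for the second assertion of the lemma, so both claims come from a single computation, whereas the paper's factorization is more modular and also records the equicontinuity of $D_0\inv g$, which is reused implicitly elsewhere. One small imprecision: your appeal to the vanishing of the weights $y+1$, $y-1$ at the endpoints is not what secures continuity there -- the jump of $D_1\inv\mu$ at an atom $a$ equals $\mu(\{a\})$ irrespective of the weights (the two weighted contributions combine to $\tfrac12[(a+1)-(a-1)]\mu(\{a\})$) -- but since you also invoke the absence of atoms of $m,n$ at $\pm1$ (assumption (a) of \eqref{support}), which is the operative hypothesis and the one the paper cites, there is no gap.
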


\proof Any inverse  $D\inv$ takes $L^2$ functions to functions that satisfy
a H\"older condition: if $u=D\inv f$ and $x<y$, then $Du=f$ so 
\bes
|u(y)-u(x)|\ \le\ \int_x^y |f(t)|\,dy\ \le\ \left[\int_x^y|f|^2\right]^{1/2}\,
\left[\int_x^y dt\right]^{1/2}\ \le\ ||f||\,(y-x)^{1/2}.
\ees
It follows that the image under $D_0\inv$ of a bounded sequence in $\H$ is uniformly
equicontinuous. The same argument applied to the two summands in \eqref{D0inv}
shows that $|D_0\inv f(x)|\le 2||f||$, so the image is also uniformly bounded.
By the theorem of Ascoli--Arzel\'a there is a uniformly 
convergent subsequence, $\{g_n\}$.  Thus $D_0\inv$ is compact from $\H$ to the space of bounded
continuous functions. The map  $g\to D_1\inv gM$ from this space to $\H$ is bounded, so $T$ is compact.  
Moreover $g\to D_1\inv gM$ maps to
functions of bounded variation; continuity of $Tf$ at the endpoints follows from assumption
(a) of \eqref{support}.
\qquad$\qed$

\begin{lemma}\label{eigen} If $K\ne (0)$ is a closed subspace of $\, \H$ that 
is invariant under $T$, then $K$ contains an eigenfunction of $T$.  \end{lemma}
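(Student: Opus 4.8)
The statement is a standard "invariant subspace contains an eigenvector" result for a compact operator, so I would run the usual spectral-theory argument, but carefully, because $T$ is \emph{not} self-adjoint in the ordinary $L^2$ inner product — it is self-adjoint with respect to the indefinite form $\la\cdot,\cdot\ra$ introduced in \eqref{indefinite}. The first thing to verify is precisely that symmetry: for $f,g\in\H$ one has $\la Tf,g\ra=\la f,Tg\ra$. This follows by unwinding $T=D_1\inv M D_0\inv$ and integrating by parts, using $\sigma M=M^{\mathrm t}\sigma$ (indeed $\sigma M\sigma = M^{\mathrm t}$ because $M$ is off-diagonal), together with the boundary behaviour from \autoref{Dinverses} that makes all boundary terms vanish; this is exactly the computation already carried out in the proof of \autoref{basic-ID} and its differentiated version in \autoref{simple-zeros}, so I would just cite that mechanism.

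\textbf{Key steps.} Given a nonzero closed $T$-invariant subspace $K\subset\H$, I would:
\begin{enumerate}
\item Reduce to the case $K\subset\H_1$: since $T=0$ on $\H_0$, if $K$ meets $\H_0$ nontrivially we already have an eigenfunction (eigenvalue $0$ — though one should note $T$ restricted to $\H_1$ has only nonzero eigenvalues, as $0$ is not a Dirichlet eigenvalue; in any case $K\cap\H_0$ supplies eigenfunctions for eigenvalue $0$). Otherwise project: actually the cleanest move is to replace $K$ by $\overline{T(K)}\subset K\cap\H_1$, which is still closed, $T$-invariant, and nonzero provided $T|_K\ne0$ — and if $T|_K=0$ then again every nonzero element of $K$ is an eigenfunction. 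So assume $T|_K\ne0$ and $K\subset\H_1$.
\item Show the indefinite form $\la\cdot,\cdot\ra$ is nondegenerate on $\H_1$, so that $T|_K$, being $\la\cdot,\cdot\ra$-symmetric and compact, has a nonzero spectral value. Concretely: if $\la f,f\ra = 0$ for all $f$ in the range of $T|_K$ one derives a contradiction exactly as in \autoref{basic-ID}. More robustly, I would argue via the resolvent / Fredholm alternative: $T$ is compact, so $\sigma(T|_K)\setminus\{0\}$ consists of eigenvalues; I must show this set is nonempty, i.e. $T|_K$ is not quasinilpotent.
\item Rule out quasinilpotency using symmetry. If $\lambda_0$ is a nonzero point of $\sigma(T|_K)$ it is an eigenvalue and we are done; so suppose for contradiction $\sigma(T|_K)=\{0\}$. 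Then for every $n$, $\|\,(T|_K)^n\,\|^{1/n}\to0$. But $\la (T|_K)^n f,g\ra = \la f,(T|_K)^n g\ra$ together with a lower bound $|\la (T|_K)^k f_0, f_0\ra| = |\la (T|_K)^{k/2}f_0,(T|_K)^{k/2}f_0\ra|$ for suitable $f_0$ (chosen so that $Tf_0$ has $\la Tf_0,Tf_0\ra\ne0$, which \autoref{basic-ID}-type reasoning guarantees when $f_0$ is built from an honest would-be eigenfunction) forces the spectral radius to be positive — contradiction. Equivalently, and I think more transparently: pick any $f_0\in K$ with $Tf_0\ne0$; since $\la\cdot,\cdot\ra$ restricted to the $T$-cyclic subspace generated by $f_0$ cannot be identically zero (again by the \autoref{basic-ID} argument, $\la Tf_0,Tf_0\ra=-\la f_0, T^2f_0\ra\cdot(\text{const})\ne0$ after the right normalization), the symmetric compact operator $T$ on that finite-codimension-irrelevant piece has a genuine eigenvalue.
\end{enumerate}

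\textbf{The main obstacle.} The subtlety is entirely the indefiniteness of $\la\cdot,\cdot\ra$: a symmetric compact operator on a space with an \emph{indefinite} inner product need not have any eigenvalues in general (nilpotent Jordan blocks on isotropic subspaces), so symmetry alone is not enough — one genuinely needs the extra input that the form does not vanish on the relevant invariant pieces, which is the content of the "$\ne0$" in \autoref{basic-ID}. I expect the cleanest write-up to isolate a small lemma: \emph{if $K\subset\H_1$ is closed, $T$-invariant, and $T|_K\ne0$, then there exists $f\in K$ with $\la Tf,Tf\ra\ne0$} — proved by the integration-by-parts identity $\la Tf,Tf\ra = (D_0\inv f \text{ against } M\text{-type terms})$ exactly mirroring \eqref{basic-id2} — and then feed this into the standard fact that a compact self-adjoint operator on an inner-product space on which the form is anisotropic on a cyclic subspace has an eigenvalue. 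Everything else (compactness of $T$, the $\H_0/\H_1$ decomposition, boundary terms vanishing) is already in hand from the preceding lemmas.
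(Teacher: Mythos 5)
Your route is not the paper's, and its central step has a genuine gap. The paper's proof is soft and never uses the symmetry of $T$ with respect to $\la\,\cdot\,,\cdot\,\ra$: since $T$ is compact, every non-zero point of the spectrum of $T|_K$ is an eigenvalue and $I-\l T$ is Fredholm of index zero; if $(I-\l T)|_K$ had trivial null space for every $\l\in\C$, its inverse would be entire in $\l$ and bounded at $\l=\infty$, which is impossible by Liouville's theorem, so $K$ contains an eigenvector. Neither \autoref{basic-ID}, nor the $\H_0/\H_1$ splitting, nor any anisotropy argument enters.

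The gap in your proposal is the ``standard fact'' invoked in your Step 3: that a compact operator, symmetric for the indefinite form, with $\la Tf,Tf\ra\ne0$ for some $f$ in the invariant (or cyclic) subspace, must have an eigenvalue there. This is not a standard fact and is false as stated. For instance, on $L^2\oplus L^2$ with the same form $\la f,g\ra=(\s f,g)$, the operator $S(f_1,f_2)=(Vf_1,V^*f_2)$, $V$ the Volterra operator, is compact and satisfies $\la Sf,g\ra=\la f,Sg\ra$, has no eigenvectors whatsoever (its point spectrum is empty, and $V$, $V^*$ are injective), and yet $\la Sf,Sf\ra\ne0$ for suitable $f$. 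The reason your sketch cannot close is exactly the obstacle you yourself flagged: without a Cauchy--Schwarz inequality for $\la\,\cdot\,,\cdot\,\ra$, a single non-vanishing $\la T^2f_0,f_0\ra$ cannot be iterated into a lower bound for $\|T^n\|^{1/n}$, so positivity of the spectral radius does not follow. In addition, your ``small lemma'' (existence of $f\in K$ with $\la Tf,Tf\ra\ne0$) is only asserted: the non-vanishing in \autoref{basic-ID} uses the eigenvalue equation for $\vp$, so it does not transfer to arbitrary elements of the range of $T|_K$. If you wanted to pursue an argument in your spirit, the usable extra input is the sign structure $\la Tf,f\ra=-\intv\bigl[|\vp_1|^2\,dm+|\vp_2|^2\,dn\bigr]\le 0$, $\vp=D_0\inv f$, so that $[f,g]:=-\la Tf,g\ra$ is positive semidefinite, $T$ is symmetric for it, and genuine Cauchy--Schwarz becomes available; but as written the proposal does not prove the lemma, while the paper's Fredholm--Liouville argument does.
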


\proof 
Since $T$ is compact, every non-zero point of the spectrum is an eigenvalue.
This applies also to the restriction of $T$ to the invariant subspace $K$.
Compactness implies that the operator $I-\l T$ is Fredholm with index zero. 
If the restriction to $K$ has no null space, for all $\l\in \C$, then it is invertible,
and the  inverse is entire and bounded at $\l=\infty$, leading to a contradiction.  Therefore
$T$ has an eigenvector in $K$.  \qquad$\qed$

\begin{lemma} The operator $T$ is symmetric with respect to the indefinite
form $\la\ ,\ \ra$.\end{lemma}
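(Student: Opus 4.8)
The plan is to prove directly that $\la Tf,g\ra=\la f,Tg\ra$ for all $f,g\in\H$, using the same integration by parts that produced \eqref{basic-id2}. First I would strip off the constants. Writing $f=f_0+f_1$ and $g=g_0+g_1$ with $f_j,g_j\in\H_j$, we have $Tf=Tf_1$ and $Tg=Tg_1$, both in $\H_1$; moreover $\s$ maps $\H_1$ into $\H_1$, since $\intv\s h=\s\intv h$, so $\s Tf_1\in\H_1$ and $\s f_0\in\H_0$. Because $\H_0\perp\H_1$ with respect to $(\ ,\ )$, the mixed terms $\la Tf_1,g_0\ra=(\s Tf_1,g_0)$ and $\la f_0,Tg_1\ra=(\s f_0,Tg_1)$ vanish, and the identity reduces to the case $f,g\in\H_1$.

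For $f,g\in\H_1$ I would set $u=D_0\inv f$ and $v=D_0\inv g$. By \autoref{Dinverses} both are continuous, $Du=f$, $Dv=g$, and $u(\pm1)=v(\pm1)=\bz$. By definition $Tf=D_1\inv(Mu)$, which — as was shown above when $T$ was proved compact — has bounded variation and is continuous at $x=\pm1$, with distributional derivative the vector measure $Mu$. Then
\bes
\la Tf,g\ra\=(\s Tf,g)\=\intv \s\,Tf(x)\cdot\overline{Dv(x)}\,dx,
\ees
and an integration by parts, legitimate because $Tf$ has bounded variation and is continuous at the endpoints while $v$ is continuous, produces a boundary term $\bigl[\s Tf\cdot\overline v\,\bigr]^{1}_{-1}$ which vanishes since $v(\pm1)=\bz$. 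What is left is
\bes
\la Tf,g\ra\=-\intv\overline v(x)\cdot\s M u\=-\intv\bigl[u_1\overline{v_1}\,dm+u_2\overline{v_2}\,dn\bigr],
\ees
where I used that $\s M=\bsm m&0\\ 0&n\esm$ is diagonal.

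Write $B(f,g)$ for the last expression. Interchanging $f$ and $g$ interchanges $u$ and $v$, so $\overline{B(g,f)}=B(f,g)$ is immediate. Since $\la\ ,\ \ra$ is Hermitian, i.e.\ $\la p,q\ra=\overline{\la q,p\ra}$, it follows that
\bes
\la f,Tg\ra\=\overline{\la Tg,f\ra}\=\overline{B(g,f)}\=B(f,g)\=\la Tf,g\ra,
\ees
which is the assertion. (If $m$, $n$ and the relevant functions are taken real, one may replace $\la\ ,\ \ra$ by the associated real bilinear form and drop all conjugations.)

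The only point that genuinely needs attention is the integration by parts: one must know that $Tf=D_1\inv(Mu)$ is of bounded variation with no jump at $x=\pm1$, so that pairing it with the continuous function $v$ and integrating by parts against the measure $Mu$ is justified and the boundary term is unambiguously zero. That regularity of the image of $T$ was already established (together with compactness of $T$), and the absence of a jump at $x=\pm1$ is exactly hypothesis (a) of \eqref{support}; everything else is bookkeeping with the $2\times2$ matrices $\s$ and $M$.
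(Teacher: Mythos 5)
Your proof is correct and takes essentially the same route as the paper's: reduce to $f,g\in\H_1$, integrate by parts using $D_0\inv g(\pm 1)=0$ so that $\la Tf,g\ra=-(D_0\inv f,\,M^t\s D_0\inv g)$, and conclude from the diagonality of $\s M=M^t\s$. Your extra care (working with bounded-variation images of $T$ and Stieltjes integration rather than the paper's ``assume smooth'' reduction, plus the explicit Hermitian bookkeeping and splitting off $\H_0$) only elaborates the same argument.
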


\proof We may assume that $f$ and $g$ are in $H_1$ and are smooth.  Since 
$g=D[D_0\inv g]$ and $D_0\inv g(\pm1)=0$ we may integrate by parts to get
\bea
\la Tf,g\ra&=&(Tf,\s DD_0\inv g\ra\=-(DTf,\s D_0\inv g)\nonumber\\
&=& -(M D_0\inv f,\s D_0\inv g)\=-(D_0\inv f, M^t \s D_0\inv g).
\eea
Since $M^t \s$ is diagonal, the last expression is symmetric in $f$ and $g$.
\qquad$\qed$

\mds
Let $\H_T$ be the closure in $\H$ of the range of $T$.  Any solution of
\eqref{Tproblem} will belong to $\H_T$. 

\begin{proposition}\label{completeness} The span of the eigenfunctions $\{f_\nu\}$
is dense in $\H_T$.\end{proposition}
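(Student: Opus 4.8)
The plan is to prove the stronger assertion that $\H_T\subseteq N$, where $N:=\overline{\operatorname{span}}\{f_\nu\}$; note that $N\subseteq\H_T$ already, since each $f_\nu$ solves $f=\l Tf$ and hence lies in $\operatorname{ran}T$. First I would rewrite the symmetry of $T$ established just above, $\la Tf,g\ra=\la f,Tg\ra$, in terms of the Hilbert-space adjoint: since $\la\cdot,\cdot\ra=(\s\,\cdot\,,\cdot)$ this says $\s T=T^{*}\s$, i.e.\ $T^{*}=\s T\s$. Two consequences follow: $\ker T^{*}=\s\ker T$, so $\H_T=\overline{\operatorname{ran}T}=(\ker T^{*})^{\perp}=(\s\ker T)^{\perp}$; and, $N$ being a closed $T$-invariant subspace, $N^{\perp}$ is $T^{*}$-invariant, whence $K:=\s N^{\perp}$ is a closed $T$-invariant subspace (for $k=\s g$ with $g\in N^{\perp}$, $Tk=T\s g=\s(\s T\s)g=\s T^{*}g\in\s N^{\perp}$). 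We may assume $N\ne\H$, so that $K\ne(0)$.

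Next I would run a dichotomy on $K$. If $K\subseteq\ker T$, then $N^{\perp}=\s K\subseteq\s\ker T=\ker T^{*}$, hence $\H_T=(\ker T^{*})^{\perp}\subseteq(N^{\perp})^{\perp}=N$ and we are done. So suppose $K\not\subseteq\ker T$; I claim this is impossible. In that case $K_{1}:=\overline{T(K)}$ is a non-zero, closed, $T$-invariant subspace contained in $\overline{\operatorname{ran}T}=\H_T$, so by Lemma \ref{eigen} it contains an eigenfunction $f\ne0$, $Tf=\mu f$, with $f\in K_{1}\subseteq K\cap\H_T$. If $\mu\ne0$: since $f\in\H_T\subseteq\H_1$, setting $\vp:=D_0^{-1}f$ gives $D\vp=f$ and $\vp(\pm1)=0$ (Lemma \ref{Dinverses}), and from $\mu f=D_1^{-1}M\vp$ I would apply $D$ twice to get $D^{2}\vp=\mu^{-1}M\vp$; thus $\vp$ is a non-zero solution of \eqref{beam-system}, \eqref{dirichlet} with eigenvalue $1/\mu$, so $1/\mu=\l_{\nu_0}$ for some $\nu_0$ and, the Dirichlet eigenspace being one-dimensional (Theorem \ref{phi-abcd}(b)), $f$ is a non-zero multiple of $f_{\nu_0}\in N$; but $f\in K=\s N^{\perp}$ gives $\la f,f_{\nu_0}\ra=(\s f,f_{\nu_0})=0$, contradicting $\la f_{\nu_0},f_{\nu_0}\ra\ne0$ from \eqref{f-indefinite}. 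If $\mu=0$, then $0\ne f\in\ker T\cap\H_T$, which is excluded by the lemma discussed next. Either way the case $K\not\subseteq\ker T$ cannot occur, and the proposition follows.

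The one substantive ingredient, which I expect to be the main obstacle, is the claim $\ker T\cap\H_T=(0)$; in view of $\H_T=(\s\ker T)^{\perp}$ and $\la f,g\ra=(f,\s g)$ this is precisely the statement that the indefinite form $\la\cdot,\cdot\ra$ is non-degenerate on $\ker T$. To prove it I would use the explicit form of the kernel: since $D_1^{-1}$ is injective, $\ker T=\{f:MD_0^{-1}f=0\}=\H_0\oplus L$, where $\H_0$ is the space of constant functions and $L=\{D\vp:\vp\in H^{1},\ \vp(\pm1)=0,\ \vp_1=0\text{ on }\operatorname{supp} m,\ \vp_2=0\text{ on }\operatorname{supp} n\}$. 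On $\H_0$ the form is $2\!\left[\begin{smallmatrix}0&1\\1&0\end{smallmatrix}\right]$, non-degenerate; $\H_0\perp L$ for $\la\cdot,\cdot\ra$ because $L\subseteq\H_1$; and for $f=D\vp,\ g=D\psi\in L$ an integration by parts (boundary terms vanishing since $\vp(\pm1)=0$) gives $\la f,g\ra=\intv(D\vp_1\,D\psi_2+D\vp_2\,D\psi_1)\,dx$. Here part (b) of \eqref{support} is crucial: it forces $\operatorname{supp} m=\operatorname{supp} n=:\Sigma$, so $\psi_1$ and $\psi_2$ vary independently over $V_0:=\{v\in H^{1}:v(\pm1)=0,\ v|_{\Sigma}=0\}$, which contains both $\vp_1$ and $\vp_2$; taking $g$ with $\psi_1=\vp_2,\psi_2=0$ and then with $\psi_2=\vp_1,\psi_1=0$ shows that $\la f,\cdot\ra\equiv0$ on $L$ forces $\intv(D\vp_2)^{2}=\intv(D\vp_1)^{2}=0$, i.e.\ $\vp=0$. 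Hence $\la\cdot,\cdot\ra$ is non-degenerate on $\ker T$.

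Everything except that last non-degeneracy step is soft: it rests on Lemma \ref{eigen}, the adjoint identity $T^{*}=\s T\s$, and the one-dimensionality of the Dirichlet eigenspaces. An alternative I considered is to transfer the problem to $L^{2}(dm)\oplus L^{2}(dn)$, where the Dirichlet problem restricted to $\Sigma$ is the eigenvalue problem for the compact self-adjoint operator $\left[\begin{smallmatrix}0&A^{*}\\ A&0\end{smallmatrix}\right]$ with $(Ah)(x)=\int G_D(x,y)\,h(y)\,dm(y)$, apply the spectral theorem there, and carry completeness back to $\H_T$; but making that transfer work in the $\H$-norm needs essentially the same non-degeneracy input, so I would present the invariant-subspace proof above.
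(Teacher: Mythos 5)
Your argument is correct, and its engine is the same as the paper's: combine the symmetry $T^{*}=\s T\s$, Lemma \ref{eigen} applied to a $\s$-twisted orthogonal complement of the eigenfunctions, and the non-vanishing $\la f_\nu,f_\nu\ra\ne 0$ of \eqref{f-indefinite}. What you do differently is take the complement in all of $\H$ rather than inside $\H_T$: the paper sets $N_T=\H_T\cap\{f_\nu\}^{\perp}$ and applies Lemma \ref{eigen} to $\s N_T$ directly, whereas you work with $K=\s N^{\perp}$ and must then eliminate the degenerate directions, via the dichotomy $K\subseteq\ker T$ (which already yields the conclusion) versus $K\not\subseteq\ker T$ (which you refute), together with the auxiliary claim $\ker T\cap\H_T=(0)$, i.e.\ non-degeneracy of $\la\ ,\ \ra$ on $\ker T$. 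That auxiliary claim is proved correctly, but it is essentially the content of the proposition the paper establishes immediately after this one ($N=\s N$ and $N$ is the orthogonal complement of $\H_T$ in both forms); moreover it is avoidable, since the proof of Lemma \ref{eigen} produces $f=\l Tf$, so the eigenfunction it supplies automatically has non-zero eigenvalue, your $\mu=0$ branch is vacuous, and dropping it removes your only appeal to condition (b) of \eqref{support} in this argument. What your route buys is that every invariance statement is justified exactly as written (the paper's assertion that $\s N_T$ is $T$-invariant tacitly uses $\s\ker T=\ker T$, i.e.\ the later proposition, to keep $T^{*}g$ inside $\H_T$), and you make explicit the translation, left implicit in the paper, from eigenvectors of $T$ with non-zero eigenvalue back to Dirichlet eigenfunctions via $\vp=D_0\inv f$, $D^{2}\vp=\mu\inv M\vp$ and the one-dimensionality of eigenspaces in Theorem \ref{phi-abcd}(b). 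Two cosmetic points: your $N$ clashes with the paper's use of $N$ for $\ker T$, and the integration by parts in the non-degeneracy step is unnecessary, since for real $f=D\vp$, $g=D\psi$ one has $\la f,g\ra=\intv\bigl(D\vp_1\,D\psi_2+D\vp_2\,D\psi_1\bigr)\,dx$ directly from the definition.
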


\proof Let $N_T\in \H_T$ be the orthogonal complement of $\{f_\nu\}$ in $\H_T$
with respect to the standard inner product.  Then $\s N_T$ is orthogonal
to $\H_T$ with respect to the indefinite form.  Since $T$ is symmetric,
$\s N_T$  is invariant for $T$. By construction,  $\s N_T$ is orthogonal to every
eigenfunction.  It follows from Lemma \ref{eigen} that $\s N_T=(0)$.
Therefore $N_T=(0)$. \qquad$\qed$

\begin{proposition} Let $N$ be the null space of $T$, i.e.\ $N=\{f\in \H;Tf=0\}$.
Then $N=\s N$. Moreover, $N$ coincides with each of
the subspaces

\sms\nin (a) The orthogonal complement of\ \ $\H_T$ with respect to $(\ ,\ )$;
 
\sms\nin (b) The orthogonal complement of\ \ $\H_T$ with respect to $\la\ ,\ \ra$.

\end{proposition}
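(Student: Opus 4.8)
The plan is to pin down the null space $N$ explicitly, note that the resulting description is visibly invariant under $\s$, and then read off (a) and (b) from the operator identity $T^{*}=\s T\s$, which follows from the symmetry of $T$ with respect to $\la\ ,\ \ra$ established above.

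First I would describe $N$ concretely. Since $D\,D_1\inv=I$, the map $D_1\inv$ is injective, so $Tf=D_1\inv\!\bigl(MD_0\inv f\bigr)$ vanishes if and only if $MD_0\inv f=0$. Set $h=D_0\inv f$; by the H\"older estimate used above to prove $T$ compact, $h$ is continuous. Recalling the convention that $Mh$ has components $h_2\,n$ and $h_1\,m$, the equation $Mh=0$ says $h_1=0$ $m$-almost everywhere and $h_2=0$ $n$-almost everywhere, hence, by continuity of $h$, that $h_1$ vanishes identically on $\mathrm{supp}\,m$ and $h_2$ on $\mathrm{supp}\,n$. Assumption \eqref{support}(b) says these supports coincide; calling the common set $S$, I obtain
\bes
N\=\bigl\{\,f\in\H:\ (D_0\inv f)\ \text{vanishes identically on}\ S\,\bigr\}.
\ees
Now $D_0\inv$ applies the same scalar operation to each component and annihilates the constants, so $D_0\inv\s=\s D_0\inv$; and $\s$ merely interchanges the two components, so $\s h$ vanishes on $S$ exactly when $h$ does. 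Hence $N=\s N$.

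For (a), I would invoke the Hilbert-space identity $\H_T^{\perp}=\bigl(\overline{\mathrm{ran}\,T}\bigr)^{\perp}=\ker T^{*}$, so that it suffices to show $\ker T^{*}=N$. The lemma that $T$ is symmetric for $\la\ ,\ \ra$ reads $\la Tf,g\ra=\la f,Tg\ra$; rewriting it through $\la u,v\ra=(\s u,v)$ gives $\s T=T^{*}\s$, i.e.\ $T^{*}=\s T\s$. Therefore $\ker T^{*}=\s(\ker T)=\s N=N$, which is (a). For (b), since $\la f,g\ra=(\s f,g)$, an element $f$ is $\la\ ,\ \ra$-orthogonal to $\H_T$ if and only if $\s f$ is $(\ ,\ )$-orthogonal to $\H_T$, i.e.\ (by (a)) if and only if $\s f\in N$, i.e.\ if and only if $f\in\s N=N$.

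The one step that genuinely needs care is the concrete identification of $N$: it relies on injectivity of $D_1\inv$, continuity of $D_0\inv f$, and — essentially — the hypothesis \eqref{support}(b) that $m$ and $n$ have the same support. Without equal supports, $MD_0\inv f=0$ would only force $D_0\inv f$ to vanish componentwise on two possibly different sets, $N$ would fail to be $\s$-invariant, and both identifications of $N$ with $\H_T^{\perp}$ would break down. Everything past that point is formal manipulation with the involution $\s$.
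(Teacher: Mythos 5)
Your proof is correct and follows essentially the same route as the paper: you identify $N$ via $MD_0\inv f=0$ and the equal-support assumption \eqref{support}(b) to get $\s N=N$, and then exploit the $\la\ ,\ \ra$-symmetry of $T$ (your identity $T^{*}=\s T\s$ is just the paper's identity $(f,Tg)=\la T(\s f),g\ra$ in operator form) to identify both orthogonal complements with $N$. The only cosmetic differences are that you make the injectivity of $D_1\inv$ and the passage from $m$-a.e.\ vanishing to vanishing on the support explicit, and you deduce (b) from (a) via $\s$ rather than directly from the symmetry identity.
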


\proof Note that $Tf=0$ if and only if either $f\in H_0$, in which case $\s H_0=H_0$,  
or $f\in H_1$ and $M\vp=0$, where $\vp =D_0 \inv f$. 
This, in turn, is equivalent to the conditions that $\vp_1$ vanish on the 
support of $m$  and $\vp_2$ vanish on the support of $n$.  In view of
condition (b) of \eqref{support}, this is equivalent to $M\s\vp=0$,
so $T\s f=0$ whenever $f\in N$.  

The relations between $N$ and the spaces (a), (b) follow from the identities,
valid for every $f,g$ in\ $\H$:
\beas
(f,Tg)&=&\la \s f,Tg\ra\=\la T(\s f), g \ra;\\
\la Tf,g\ra&=&\la f,Tg\ra.
\eeas
The first identity shows that $f$ is in space (a) if and only if $T(\s f)=0$,
which is equivalent to $Tf=0$.  The second shows that $f$ is in space (b) if and
only if $Tf=0$.\qquad $\qed$

\mds
There is a natural decomposition of $\H_T$ into two subspaces that are orthogonal
with respect to the indefinite form:
\bes
\H_T^\pm\=\hbox{closure of the span of $\{f_\nu\,:\,\pm \nu>0\}$.}
\ees

\nin
We introduce a new inner product in $\H_T$ by setting
\be\label{Tip}
(f,g)_T\=\begin{cases}\ \la f,g\ra,\quad f,g\in \H_T^+;\\
-\la f,g\ra,\quad f,g\in \H_T^-;\\
\ 0,\quad\qquad f\in \H_T^+,\ g\in \H_T^-.
\end{cases} 
\ee
Let $\wh \H_T$ be the completion of $\H_T$ with respect to the
norm $||f||_T^2\=(f,f)_T$.  Note that in all cases
\be\label{norms}
(f,f)_T\=|\la f,f\ra|\=|(f,\s f)|\ \le ||f||\,||\s f||\=||f||^2.
\ee
The $\{f_\nu\}$ are clearly an orthonormal basis for this space.
Since $Tf_\nu=\l_\nu\inv f_\nu$, the restriction of $T$ to $\H_T$ extends to a compact 
self-adjoint operator in $\wh\H_T$ with eigenvalues (in the usual sense) $\{\l_\nu\inv\}$.

The orthogonal projection of $\wh\H_T$ onto the span of the eigenfunction $f_\nu$ is
\be\label{projection1}
E_\nu f\=\frac{\la f,f_\nu\ra}{\la f_\nu,f_\nu\ra}\,f_\nu\=\frac{(f,\s f_\nu)}
{\la f_\nu,f_\nu\ra}\,f_\nu.
\ee
Abusing notation, we write $E_\nu$ also for the kernel of the operator \eqref{projection1}:
\bea
E_\nu f(x)&=&\intv E_\nu(x,y)f(y)\,dy,\qquad f\in \H_T\nonumber\\
E_\nu(x,y)&=&\frac1{\la f_\nu,f_\nu\ra}
\bsm f_{\nu,1}(x)f_{\nu,2}(y)&
f_{\nu,1}(x)f_{\nu,1}(y)\\ f_{\nu,2}(x)f_{\nu,2}(y)&f_{\nu,2}(x)f_{\nu,1}(y)\esm. 
\label{projection2}
\eea
We may also view the operator $E_\nu$ as a projection of $\H$ onto the span of $f_\nu$.
We view the kernels $E_\nu$ as belonging to the Hilbert space 
\bes
\H^{(2)}\= L^2(I\times I; M(2,\C)) 
\ees
of mappings from the square $I\times I=[-1,1]\times [-1,1]$ to the space $M(2,\C)$ of
complex $2\times 2$ matrices.

Let us write the integral kernel for the operator $D_1\inv$ as $\kappa$.
Since $\frac12|y\pm 1|\le 1$ for all $y\in[-1,1]$, we have $|\kappa|\le 1$.

If $r$ is a bounded positive measure on the interval $[-1,1]$, then
\bes
|D_1\inv r(x)|\=\left|\int _{-1}^1 \kappa(x,y)\,dr(y)\right|.
\ees

\begin{lemma}\label{key0} Suppose that $r$ is a bounded positive measure and $\psi$ is a 
continuous function on the interval $[-1,1]$.  Then
\be\label{key1}
|[D_1\inv \psi r](x)|^2\ \le \ \overline r\intv \psi(y)^2\,dr(y), \quad \overline r=\intv dr(y). 
\ee\end{lemma}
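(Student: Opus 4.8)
The plan is to prove the pointwise inequality \eqref{key1} by a direct Cauchy--Schwarz argument with respect to the measure $dr$, using the bound $|\kappa|\le 1$ on the kernel of $D_1\inv$ already noted just before the statement. Explicitly, since
\bes
[D_1\inv \psi r](x)\=\intv \kappa(x,y)\,\psi(y)\,dr(y),
\ees
I would write the integrand as the product of $\kappa(x,y)\,\psi(y)$ and $1$ (or, more symmetrically, split $dr$ as $\sqrt{dr}\cdot\sqrt{dr}$) and apply the Cauchy--Schwarz inequality for the positive measure $dr$:
\bes
\left|\intv \kappa(x,y)\,\psi(y)\,dr(y)\right|^2
\ \le\ \left(\intv |\kappa(x,y)|^2\,dr(y)\right)\left(\intv \psi(y)^2\,dr(y)\right).
\ees
Then, bounding $|\kappa(x,y)|^2\le 1$ uniformly gives $\intv|\kappa(x,y)|^2\,dr(y)\le \intv dr(y)=\overline r$, which yields exactly \eqref{key1}.

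The only points needing a word of care are: first, that $\psi r$ really is a signed (here, positive-times-continuous) measure to which $D_1\inv$ applies in the Stieltjes sense used throughout the paper, which is immediate since $\psi$ is continuous and $r$ is a bounded positive measure; and second, that $\psi(y)^2\,dr(y)$ is integrable, which again follows from continuity of $\psi$ on the compact interval $[-1,1]$ together with $\overline r<\infty$. There is no genuine obstacle here: the estimate is essentially a packaging of Cauchy--Schwarz, and the bound $|\kappa|\le1$ does all the work. If anything, the ``hard part'' is purely bookkeeping — making sure the factor $\overline r$ (rather than $\overline r^{\,2}$ or $\overline r^{1/2}$) is the right constant, which the computation above confirms.

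It is worth noting for later use that the argument in fact gives the slightly sharper statement with $\overline r$ replaced by the truncated mass $\int_{-1}^1 |\kappa(x,y)|^2\,dr(y)$, but the uniform form \eqref{key1} is the one that will be convenient when $r$ ranges over the measures $m$, $n$, or $m+n$ appearing in the beam problem, since it decouples the $x$-dependence entirely from the estimate.
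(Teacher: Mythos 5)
Your proof is correct and essentially the same as the paper's: both arguments rest on the bound $|\kappa|\le 1$ together with the Cauchy--Schwarz inequality for the measure $dr$. The only (immaterial) difference is the order of steps --- the paper first drops the kernel to get $\intv|\psi|\,dr$ and then applies Cauchy--Schwarz with the factor $1$, while you apply Cauchy--Schwarz directly with $\kappa\psi$ against $1$ and bound $\intv|\kappa(x,y)|^2\,dr(y)\le\overline r$ afterwards.
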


\proof  This follows by applying the Cauchy--Schwarz inequality to 
the term on the right in the inequality
\bes
\left|[D_1\inv (\psi r)](x)\right|\=\left|\intv\kappa(x,y)\psi(y)\,dr(y)\right|
\ \le \ \intv |\psi(y)|\,dr(y).\qquad\qed
\ees

\begin{proposition}\label{key2} Each element of the kernel \eqref{projection2} has absolute
value bounded by $(\overline m+\overline n)|\l_\nu|$, where $\overline m=\intv dm$ and 
$\overline n=\intv dn$.
\end{proposition}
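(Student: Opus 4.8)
The plan is to bound each entry of $E_\nu(x,y)$ in \eqref{projection2} by estimating the two factors separately: the denominator $|\la f_\nu,f_\nu\ra|$ from below, and the numerators $|f_{\nu,i}(x)f_{\nu,j}(y)|$ from above. For the denominator, \eqref{f-indefinite} gives
\bes
|\la f_\nu,f_\nu\ra|\=|\l_\nu|\intv[\vp_{\nu,1}^2\,dm+\vp_{\nu,2}^2\,dn],
\ees
so it suffices to relate the numerators to the integral $\intv[\vp_{\nu,1}^2\,dm+\vp_{\nu,2}^2\,dn]$. Here $\vp_\nu=D_0\inv f_\nu$ is the eigenfunction, and from the $T$-problem \eqref{Tproblem} we have $f_\nu=\l_\nu Tf_\nu=\l_\nu D_1\inv M\vp_\nu$, i.e.\ componentwise $f_{\nu,1}=\l_\nu D_1\inv(\vp_{\nu,2}\,n)$ and $f_{\nu,2}=\l_\nu D_1\inv(\vp_{\nu,1}\,m)$.

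The key step is then \autoref{key0}: applying \eqref{key1} with $r=n$, $\psi=\vp_{\nu,2}$ gives
\bes
|f_{\nu,1}(x)|^2\=|\l_\nu|^2\,|D_1\inv(\vp_{\nu,2}\,n)(x)|^2\ \le\ |\l_\nu|^2\,\overline n\intv \vp_{\nu,2}(y)^2\,dn(y),
\ees
and similarly $|f_{\nu,2}(x)|^2\le |\l_\nu|^2\,\overline m\intv \vp_{\nu,1}(y)^2\,dm(y)$. Writing $S=\intv[\vp_{\nu,1}^2\,dm+\vp_{\nu,2}^2\,dn]$, both of these are bounded by $|\l_\nu|^2(\overline m+\overline n)S$, so $|f_{\nu,i}(x)|\le |\l_\nu|\sqrt{(\overline m+\overline n)S}$ for $i=1,2$ and all $x$. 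Hence each product $|f_{\nu,i}(x)f_{\nu,j}(y)|\le |\l_\nu|^2(\overline m+\overline n)S$.

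Combining, each entry of the kernel satisfies
\bes
|E_\nu(x,y)|\=\frac{|f_{\nu,i}(x)f_{\nu,j}(y)|}{|\la f_\nu,f_\nu\ra|}\ \le\ \frac{|\l_\nu|^2(\overline m+\overline n)S}{|\l_\nu|\,S}\=(\overline m+\overline n)|\l_\nu|,
\ees
which is the claimed bound. The only mild subtlety is making sure the continuity hypothesis of \autoref{key0} applies: $\vp_\nu=D_0\inv f_\nu$ is continuous (indeed Hölder) by the estimate in the proof that $T$ is compact, so $\psi=\vp_{\nu,i}$ is a legitimate continuous integrand against the positive measures $m,n$. I expect no real obstacle here — the work is entirely in correctly bookkeeping which measure pairs with which component via the off-diagonal structure of $M$, and in recognizing that \eqref{key1} is exactly the tool that converts the $D_1\inv$ of a measure into a pointwise bound by the spectral integral $S$.
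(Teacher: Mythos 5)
Your proof is correct and follows essentially the same route as the paper: both use $f_\nu=\l_\nu Tf_\nu$ to write $f_{\nu,1}=\l_\nu D_1\inv(\vp_{\nu,2}\,n)$, $f_{\nu,2}=\l_\nu D_1\inv(\vp_{\nu,1}\,m)$, apply \autoref{key0}, and divide by $|\la f_\nu,f_\nu\ra|=|\l_\nu|\intv[\vp_{\nu,1}^2\,dm+\vp_{\nu,2}^2\,dn]$ from \eqref{f-indefinite}. The only cosmetic difference is that the paper records the sharper entrywise bounds $|\l_\nu|\overline m$, $|\l_\nu|\overline n$, $|\l_\nu|(\overline m\,\overline n)^{1/2}$ before relaxing to $(\overline m+\overline n)|\l_\nu|$, whereas you pass to the coarser bound immediately.
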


\proof Since $f_\nu=\l_\nu Tf_\nu$, we have 
\bes
f_{\nu,1}\=\l_\nu D_1\inv (\vp_{\nu,2}n),\quad \vp_\nu\=D_0\inv f_\nu.  
\ees
By Lemma \ref{key0},
\bes
|f_{\nu,1}(x)|^2\ \le\\\l_\nu^2 \overline n\intv \vp_{\nu,2}^2\,dn\ \le \ \overline n|\l_\nu \la f_\nu,f_\nu\ra|.   
\ees
Similarly,
\bes
|f_{\nu,2}(x)|^2\ \le\ \overline m|\l_\nu\la f_\nu,f_\nu\ra|.
\ees
Therefore each entry of $E_\nu$ is bounded by one of $|\l_\nu|\overline m$, $|\l_\nu|\overline n$, 
or $|\l_\nu|(\overline m \overline n)^{1/2}$.
\qquad$\qed$

\mds
For later use we introduce the operator $R_\l$ defined by
\be\label{Rl}
R_\l\=I-(I-\l T)\inv\=-\l T(1-\l T)\inv.
\ee
The operator $R_\l$ maps $\H$ to $\H_T$ and is compact, as is the extension to
$\wh H_T$ of its restriction to $\H_T$.  Since $Tf_\nu=\l_\nu\inv f_\nu$,
\bes
R_\l f_\nu\=-\frac{\l}{\l_\nu}\left(1-\frac{\l}{\l_\nu}\right)\inv f_\nu\=
\frac{\l}{\l-\l_\nu}\,f_\nu.
\ees 
Therefore, we have a formal expansion
\be
R_\l\=\sum_\nu\frac{\l}{\l-\l_\nu}E_\nu
\ee
with a formal kernel  
\bea
\wh K_\l(x,y)&=&\sum_\nu\frac{\l}{\l-\l_\nu}E_\nu(x,y)\nonumber\\
&=&\sum_\nu\frac{\l}{\l-\l_\nu}\frac1{\la f_\nu,f_\nu\ra}
\bsm f_{\nu,1}(x)f_{\nu,2}(y)&
f_{\nu,1}(x)f_{\nu,1}(y)\\ f_{\nu,2}(x)f_{\nu,2}(y)&f_{\nu,2}(x)f_{\nu,1}(y)\esm. 
\label{Kl1}
\eea
The question is: does this series converge, in some sense, to the kernel of $R_\l$?

\begin{theorem}\label{weak-convergence} 
For each $\l$ that is not in the set of eigenvalues $\{\l_\nu\}$,  
the partial sums of the series on the right in \eqref{Kl1}
converge weakly in $\H^{(2)}$.  The weak limit $\wh K_\l$ is the kernel for $R_\l$.
\end{theorem}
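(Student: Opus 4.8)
The plan is to show that for $\l$ outside $\{\l_\nu\}$, the partial sums $\wh K_\l^{(N)}\=\sum_{|\nu|\le N}\frac{\l}{\l-\l_\nu}E_\nu(x,y)$ form a bounded sequence in $\H^{(2)}$, and then identify the weak limit. Boundedness is the first and most substantial step. From \autoref{key2} each entry of $E_\nu$ is pointwise bounded by $(\overline m+\overline n)|\l_\nu|$; but a crude bound of this kind summed over $\nu$ would diverge, so a more careful estimate is needed. The right tool is orthogonality: the $\{f_\nu\}$ are pairwise orthogonal with respect to $\la\,,\,\ra$ (see \eqref{fg-zero}), so for fixed $y$ the matrix-valued functions $x\mapsto E_\nu(x,y)$ satisfy an orthogonality relation in $\H^{(2)}$, and likewise for fixed $x$. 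Concretely, writing $E_\nu(x,y)=\frac{1}{\la f_\nu,f_\nu\ra}f_\nu(x)\otimes(\s f_\nu)(y)$ in suitable notation, and using that $(f_\nu,\s f_\mu)=\la f_\nu,f_\mu\ra=0$ for $\mu\ne\nu$, one obtains that the $L^2(I\times I)$-norms of the $E_\nu$ (appropriately normalized) are mutually orthogonal. Thus
\bes
\|\wh K_\l^{(N)}\|_{\H^{(2)}}^2\=\sum_{|\nu|\le N}\left|\frac{\l}{\l-\l_\nu}\right|^2\|E_\nu\|_{\H^{(2)}}^2,
\ees
and it remains to bound $\sum_\nu \|E_\nu\|_{\H^{(2)}}^2\left|\frac{\l}{\l-\l_\nu}\right|^2$. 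Here I would combine two ingredients: $\|E_\nu\|_{\H^{(2)}}^2$ is controlled by the product of $\intv f_{\nu,1}^2$, $\intv f_{\nu,2}^2$ divided by $\la f_\nu,f_\nu\ra^2$, which by the Cauchy--Schwarz-type estimates in the proof of \autoref{key2} is $O(|\l_\nu|)$; and $\left|\frac{\l}{\l-\l_\nu}\right|^2$ is $O(|\l_\nu|^{-2})$ for $\nu$ large by \eqref{zero-sum0}. Together these give a summand that is $O(|\l_\nu|^{-1})$, and $\sum 1/|\l_\nu|<\infty$ by \eqref{zero-sum}. Hence the partial sums are bounded in the Hilbert space $\H^{(2)}$, so a subsequence converges weakly; by a standard argument (the partial sums themselves converge weakly once every weakly convergent subsequence has the same limit), the whole sequence converges weakly to some $\wh K_\l$.

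The second step is to identify $\wh K_\l$ with the kernel of $R_\l$. The operator $R_\l=-\l T(I-\l T)^{-1}$ is compact and maps $\H$ into $\H_T$; on each eigenfunction $R_\l f_\mu=\frac{\l}{\l-\l_\mu}f_\mu$. Let $K_\l$ denote the actual kernel of $R_\l$, which exists in $\H^{(2)}$ because $R_\l$ is Hilbert--Schmidt (indeed trace-class) — this follows since $T$ is, as noted in the introduction and provable from the kernel estimates, trace-class, and $(I-\l T)^{-1}$ is bounded for $\l\notin\{\l_\nu\}$. Now test the weak limit against product elements: for $g,h\in\H$ and any fixed $\mu$,
\bes
\la E_\nu\,[\text{paired with }g\otimes h], \cdot\ra \longrightarrow 0 \quad (\nu\ne\mu \text{ fixed contributions vanish}),
\ees
more precisely one checks that for every $\nu$, $\la \wh K_\l^{(N)}\,g, f_\nu\ra \to \la K_\l g, f_\nu\ra$ as $N\to\infty$: for $N\ge|\nu|$ the left side already equals $\frac{\l}{\l-\l_\nu}\la g, f_\nu\ra/\la f_\nu,f_\nu\ra\cdot\|f_\nu\|^2$-type expression, which matches $\la R_\l g, f_\nu\ra$ by the eigenrelation. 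Since by \autoref{completeness} the $\{f_\nu\}$ span $\H_T$ and both $\wh K_\l$ and $K_\l$ have range in $\H_T$ (the former because each $E_\nu$ does, the latter because $R_\l$ does), and since both kill $N=\ker T$ (the orthogonal complement of $\H_T$), the two kernels define the same operator; as elements of $\H^{(2)}$ with range in the closed subspace of kernels supported on $\H_T\times\H_T$, this forces $\wh K_\l=K_\l$ a.e.

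The main obstacle I anticipate is the boundedness estimate in the first step — specifically, proving the orthogonality of the $E_\nu$ in $\H^{(2)}$ cleanly enough that the norm of the partial sum splits as a sum of squares, and then matching the growth rates $\|E_\nu\|^2=O(|\l_\nu|)$ against the decay $(\l-\l_\nu)^{-2}=O(|\l_\nu|^{-2})$ so that the tail is summable via \eqref{zero-sum}. The off-diagonal structure of $M^t\s$ being diagonal (used in the symmetry lemma) and the precise form \eqref{projection2} of $E_\nu(x,y)$ are what make the orthogonality work: the $(1,2)$ and $(2,1)$ entries pair $f_{\nu,1}f_{\nu,1}$ and $f_{\nu,2}f_{\nu,2}$ type products, while the diagonal entries pair $f_{\nu,1}f_{\nu,2}$, and the relation $\la f_\nu,f_\mu\ra=\intv(f_{\nu,1}f_{\mu,2}+f_{\nu,2}f_{\mu,1})=0$ is exactly the cross-term that must vanish. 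A secondary technical point is confirming $R_\l$ genuinely has an $\H^{(2)}$-kernel (Hilbert--Schmidt-ness), which I would get from the kernel bound on $T=D_1\inv M D_0\inv$: its kernel is a product of the bounded kernel $\kappa$ of $D_1\inv$, the measure $M$, and the bounded kernel of $D_0\inv$, hence square-integrable against $dm\,dn$.
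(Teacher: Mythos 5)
Your strategy---norm-bounded partial sums via orthogonality of the $E_\nu$ in $\H^{(2)}$, then identification on a dense set---is not the paper's, and its first step contains a genuine error. The claimed mutual orthogonality does not follow from \eqref{fg-zero}: computing the Hilbert--Schmidt pairing directly from \eqref{projection2}, the sum of entrywise products factors as $\bigl[f_{\nu,1}(x)f_{\mu,1}(x)+f_{\nu,2}(x)f_{\mu,2}(x)\bigr]\bigl[f_{\nu,1}(y)f_{\mu,1}(y)+f_{\nu,2}(y)f_{\mu,2}(y)\bigr]$, so that $\langle E_\nu,E_\mu\rangle_{\H^{(2)}}=(f_\nu,f_\mu)^2/\bigl(\la f_\nu,f_\nu\ra\,\la f_\mu,f_\mu\ra\bigr)$. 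It is the \emph{standard} inner products $(f_\nu,f_\mu)$ that appear as cross terms, not the indefinite form, and the eigenfunctions are only $\la\ ,\ \ra$-orthogonal ($T$ is symmetric for the indefinite form, not self-adjoint for $(\ ,\ )$); for instance $f_{-\nu}$ is proportional to $[f_{\nu,1},-f_{\nu,2}]^t$, so $(f_\nu,f_{-\nu})$ is proportional to $\|f_{\nu,1}\|^2-\|f_{\nu,2}\|^2$, which need not vanish. Hence the Pythagorean splitting of the norm of the partial sums fails. Moreover, even granting orthogonality, the estimate $\|E_\nu\|_{\H^{(2)}}^2=O(|\l_\nu|)$ is not what \autoref{key2} gives: one finds $\|E_\nu\|_{\H^{(2)}}=(f_\nu,f_\nu)/|\la f_\nu,f_\nu\ra|\le 2(\overline m+\overline n)|\l_\nu|$, i.e.\ $\|E_\nu\|_{\H^{(2)}}^2=O(|\l_\nu|^2)$, so your summands are only $O(1)$ and the sum of squares is not controlled by \eqref{zero-sum}. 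Indeed, norm bounds of this kind are exactly what the paper does \emph{not} have in general; they become available only under the extra hypothesis \eqref{support2}, where \eqref{comparison} gives $\|E_\nu\|_{\H^{(2)}}\le C$ and the coefficients $|\l|/|\l-\l_\nu|=O(1/|\l_\nu|)$ are summable by \eqref{zero-sum} --- that is the content of \autoref{strong-convergence}, not of the present theorem.

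The paper's proof of the weak statement avoids any norm estimate on the partial sums: by \autoref{key2} each individual summand of \eqref{Kl1} is $O(1)$ in $\H^{(2)}$, and the pairing of the partial sums against rank-one kernels $f(x)g(y)^t$ with $f$ or $g$ in $N+\mathrm{span}\{f_\nu\}$ (such kernels are total in $\H^{(2)}$, since $\H=N\oplus\H_T$ and the span of the $f_\nu$ is dense in $\H_T$ by \autoref{completeness}) reduces to a finite sum equal to $(f,R_\l g)$; the weak limit then induces a bounded operator agreeing with $R_\l$ on a dense subspace, hence equals $R_\l$. Your second step (identification of the limit with the kernel of $R_\l$) is in the same spirit and would be acceptable, but it rests on the boundedness claim of your first step, which as written is unproved and cannot be rescued by the orthogonality you invoke.
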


\proof For $|\l_\nu|>2|\l$|, $|(\l-\l_\nu)\inv|$ is less than $2/|\l_\nu|$.  By Proposition \ref{key2},
the corresponding summand is $O(1)$ as an element of $\H^{(2)}$.  Linear combinations of
matrix functions $f(x)g(y)^t$, $f,g\in \H$ are dense in $\H^{(2)}$.   If either $f$ or $g$ is in
$N+{\rm span}\{f_\nu\}$, integration against \eqref{Kl1} yields a finite sum which is 
$( f,R_\l g)$.  This 
proves the weak convergence. Let $\wh K_\l$ be the weak limit.  As an element of
$\H^{(2)}$, it induces a bounded operator in $\H$.  This operator agrees with $R_\l$
on a dense subspace, so it is $R_\l$.\qquad$\qed$

\mds
In the next section we derive a closed-form expression for the kernel $\wh K_\l$.

\sms
Theorem \ref{weak-convergence} can be strengthened considerably under a strengthening
of the assumption \eqref{support} (b), that $m$ and $n$ have the same support:
namely that each is dominated by the other.  This can be put in the form
\be\label{support2}
{\rm (b')}  \hbox{ There is a constant $C$ such that $m+n\le Cm$ and  $m+n\le Cn$.}
\ee

\begin{lemma} Under assumption \eqref{support2},
for each eigenfunction $f_\nu$, 
\be\label{comparison}
|\la f_\nu,f_\nu\ra|\ \le\ (f_\nu,f_\nu)\ \le\ C|\la f_\nu,f_\nu\ra|. 
\ee\end{lemma}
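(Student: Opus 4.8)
The plan is to handle the two inequalities separately, since only the second uses the strengthened hypothesis \eqref{support2}. The left-hand inequality $|\la f_\nu,f_\nu\ra|\le(f_\nu,f_\nu)$ is just the general Cauchy--Schwarz bound already recorded in \eqref{norms}: for \emph{any} $f\in\H$ one has $|\la f,f\ra|=|(f,\s f)|\le\|f\|\,\|\s f\|=\|f\|^2=(f,f)$. Nothing beyond the standing assumptions on $m,n$ is needed for this, so the content of the lemma lies entirely in the reverse estimate $(f_\nu,f_\nu)\le C\,|\la f_\nu,f_\nu\ra|$.

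For the reverse estimate, first I would re-express $(f_\nu,f_\nu)$ by means of the eigenvalue equation. Since $\vp_\nu$ is real, vanishes at $x=\pm1$ by \eqref{dirichlet}, and $f_\nu=D\vp_\nu$ is of bounded variation and continuous at the endpoints (so the integration-by-parts formulas of Section~\ref{sec:EBbeam} are legitimate), integrating by parts and using $D^2\vp_\nu=\l_\nu M\vp_\nu$ with $M\vp_\nu=(\vp_{\nu,2}n,\vp_{\nu,1}m)^t$ gives
\begin{align*}
(f_\nu,f_\nu)&\=(D\vp_\nu,D\vp_\nu)\=-(D^2\vp_\nu,\vp_\nu)\\
&\=-\l_\nu(M\vp_\nu,\vp_\nu)\=-\l_\nu\intv\vp_{\nu,1}\vp_{\nu,2}\,d(m+n).
\end{align*}
As $f_\nu\ne0$, the quantity $(f_\nu,f_\nu)$ is positive, so in fact $(f_\nu,f_\nu)=|\l_\nu|\,\bigl|\intv\vp_{\nu,1}\vp_{\nu,2}\,d(m+n)\bigr|$. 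On the other hand \eqref{f-indefinite} gives $|\la f_\nu,f_\nu\ra|=|\l_\nu|\intv[\vp_{\nu,1}^2\,dm+\vp_{\nu,2}^2\,dn]$. Dividing out the common factor $|\l_\nu|$, the desired inequality collapses to the $\l$-free bound
\[
\Bigl|\intv\vp_{\nu,1}\vp_{\nu,2}\,d(m+n)\Bigr|\ \le\ C\intv[\vp_{\nu,1}^2\,dm+\vp_{\nu,2}^2\,dn].
\]

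This last bound is exactly where \eqref{support2} enters. I would apply the Cauchy--Schwarz inequality with respect to the positive measure $m+n$ to bound the left side by $\bigl(\intv\vp_{\nu,1}^2\,d(m+n)\bigr)^{1/2}\bigl(\intv\vp_{\nu,2}^2\,d(m+n)\bigr)^{1/2}$, then use $m+n\le Cm$ in the first factor and $m+n\le Cn$ in the second, and finish with $2ab\le a^2+b^2$; this even yields the constant $C/2$. I do not expect a genuine obstacle anywhere: the two points that need a moment's care are checking that the boundary terms in the integration by parts really vanish (they do, by the Dirichlet condition \eqref{dirichlet}), and making sure the two halves of \eqref{support2} are applied to the right factors after Cauchy--Schwarz — the cross product $\vp_{\nu,1}\vp_{\nu,2}$ is paired with $m+n$, whereas in \eqref{f-indefinite} $\vp_{\nu,1}$ is paired with $m$ and $\vp_{\nu,2}$ with $n$.
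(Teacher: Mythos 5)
Your proposal is correct and follows essentially the same route as the paper: the left inequality is \eqref{norms}, and the right one comes from $(f_\nu,f_\nu)=-\l_\nu(M\vp_\nu,\vp_\nu)$ via integration by parts, then Cauchy--Schwarz with respect to $m+n$, the two halves of \eqref{support2}, and comparison with \eqref{f-indefinite}. The only cosmetic difference is that you cancel $|\l_\nu|$ and use $2ab\le a^2+b^2$ (yielding the slightly sharper constant $C/2$), while the paper runs the same estimate on squared quantities.
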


\proof The first inequality is \eqref{norms}.
To prove the second inequality, let $\vp_\nu=D_0\inv f_\nu$.  Then
\be\label{fnu1}
(f_\nu,f_\nu)\=(D\vp_\nu,D\vp_\nu)\=-(D^2\vp_\nu,\vp_\nu)\=-\l_\nu(M\vp_\nu,\vp_\nu)
\ee
and
\bea
(M\vp_\nu,\vp_\nu)^2&=&\left[\intv \vp_{\nu,1}\vp_{\nu,2}d(m+n)\right]^2\nonumber\\
&\le&\intv\vp_{\nu,1}^2d(m+n)\intv\vp_{\nu,2}^2d(m+n)\nonumber\\
&\le& C^2\intv \vp_{\nu,1}^2\,dm\intv \vp_{\nu,2}^2\,dn\nonumber\\
&\le& C^2\left\{\intv[\vp_{\nu,1}^2\,dm+\vp_{\nu,2}^2\,dn]\right\}^2\nonumber\\
&=& C^2\l_\nu^{-2}\la f_\nu,f_\nu\ra^2.\label{fnu2}
\eea
Together, \eqref{fnu1} and \eqref{fnu2} establish the second inequality in 
\eqref{comparison}.  \qquad$\qed$

\mds This result leads to the following strengthening of the previous convergence
result.

\begin{theorem}\label{strong-convergence} Under assumption \eqref{support2}, for each $\l$
that is not an eigenvalue, the series \eqref{Kl1}
converges in $L^2$ norm to the kernel of $R_\l$.
\end{theorem}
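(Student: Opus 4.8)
The plan is to upgrade the weak convergence of Theorem \ref{weak-convergence} to $L^2$-norm convergence by establishing a uniform summable bound on the $\H^{(2)}$-norms of the tail terms, which turns the weak limit into a strong limit by a standard orthogonality argument. First I would observe that the summands $\frac{\l}{\l-\l_\nu}E_\nu$ are ``almost orthogonal'' in $\H^{(2)}$: by \eqref{fg-zero} the eigenfunctions satisfy $\la f_\nu,f_\mu\ra=0$ for $\mu\ne\nu$, but with respect to the \emph{standard} inner product the $f_\nu$ need not be orthogonal — unless we invoke \eqref{support2}. So the first real step is to use \autoref{comparison}: under \eqref{support2}, $(f_\nu,f_\nu)$ and $|\la f_\nu,f_\nu\ra|$ are comparable up to the constant $C$, and more importantly the Gram structure of the $f_\nu$ in the standard inner product is controlled by their (diagonal) Gram structure in the indefinite form. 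This should let me show that $\{f_\nu/\|f_\nu\|\}$ is not merely a set of vectors but behaves like a Riesz-type system, so that $\|\sum_\nu c_\nu E_\nu\|_{\H^{(2)}}^2$ is comparable to $\sum_\nu |c_\nu|^2 \|E_\nu\|_{\H^{(2)}}^2$.

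Next I would estimate $\|E_\nu\|_{\H^{(2)}}$. From the explicit kernel \eqref{projection2} and the bounds in \autoref{key2}, each entry of $E_\nu(x,y)$ is a product $f_{\nu,i}(x)f_{\nu,j}(y)/\la f_\nu,f_\nu\ra$, so
\bes
\|E_\nu\|_{\H^{(2)}}^2\ \le\ \frac{C'}{\la f_\nu,f_\nu\ra^2}\,\|f_\nu\|^4\ \le\ \frac{C''}{\la f_\nu,f_\nu\ra^2}\,\la f_\nu,f_\nu\ra^2\=C''
\ees
using \autoref{comparison} in the second step. Thus $\|E_\nu\|_{\H^{(2)}}$ is \emph{uniformly bounded}, and combined with the coefficient bound $|\l/(\l-\l_\nu)|=O(1/|\l_\nu|)$ for large $\nu$ (valid once $|\l_\nu|>2|\l|$, as in the proof of Theorem \ref{weak-convergence}) and the summability $\sum_\nu |\l_\nu|^{-1}<\infty$ from \eqref{zero-sum}, the tail $\sum_{|\nu|>N}\frac{\l}{\l-\l_\nu}E_\nu$ has $\H^{(2)}$-norm bounded by $C'''\sum_{|\nu|>N}|\l_\nu|^{-1}\to 0$.

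Finally I would assemble these: the partial sums $K_N=\sum_{|\nu|\le N}\frac{\l}{\l-\l_\nu}E_\nu$ form a Cauchy sequence in $\H^{(2)}$ by the tail estimate, hence converge in $L^2$ norm to some limit; that limit must coincide with the weak limit $\wh K_\l$ already identified in Theorem \ref{weak-convergence} as the kernel of $R_\l$, since norm convergence implies weak convergence and weak limits are unique. This gives $L^2$-norm convergence of \eqref{Kl1} to the kernel of $R_\l$, as claimed.

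The main obstacle I anticipate is the ``almost orthogonality'' step: showing that mutual non-orthogonality of the $f_\nu$ in the standard inner product does not destroy the Cauchy estimate. The cleanest route is probably to bypass a full Riesz-basis argument and instead bound $\|K_{N'}-K_N\|_{\H^{(2)}}$ directly by the triangle inequality, $\sum_{N<|\nu|\le N'} |\l/(\l-\l_\nu)|\,\|E_\nu\|_{\H^{(2)}}$, which needs only the uniform bound on $\|E_\nu\|_{\H^{(2)}}$ and the summability \eqref{zero-sum} — no orthogonality at all. That makes the uniform bound $\|E_\nu\|_{\H^{(2)}}\le C''$ from \autoref{comparison} and \autoref{key2} the genuine crux, and it is exactly the place where hypothesis \eqref{support2} (as opposed to the weaker \eqref{support}(b)) is essential, since without it $\|f_\nu\|^4/\la f_\nu,f_\nu\ra^2$ need not be bounded.
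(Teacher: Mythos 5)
Your final argument is correct and is essentially the proof the paper intends: the comparison estimate \eqref{comparison} gives the uniform bound $\|E_\nu\|_{\H^{(2)}}\le C$, which together with $|\l/(\l-\l_\nu)|=O(|\l_\nu|^{-1})$ and the summability \eqref{zero-sum} yields absolute convergence of \eqref{Kl1} in $\H^{(2)}$, and the norm limit must agree with the weak limit identified as the kernel of $R_\l$ in Theorem \ref{weak-convergence}. The ``almost orthogonality''/Riesz-system discussion is an unnecessary detour, as you yourself observe, since the triangle inequality suffices; also note the uniform bound on $\|E_\nu\|_{\H^{(2)}}$ comes from \eqref{projection2}, Cauchy--Schwarz and \eqref{comparison}, not from Proposition \ref{key2}.
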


\begin{remark} The preceding arguments can easily be extended to similar series.
The formal series
\bes
\sum_\nu \frac1{\l_\nu}E_\nu(x,y)
\ees
converges weakly to the kernel of $T$, and under the assumption \eqref{support2} it
converges in $L^2$ norm.  Also, under assumption \eqref{support2} the formal series
\bes
\sum_\nu E_\nu(x,y)
\ees
converges weakly to the kernel of the orthogonal projection of $\H$ onto $\H_T$.
\end{remark}

\section{Wronskians and Green's kernels} \label{sec:Wronskians} 
Let $\Phi(x,\l)$ be the partial fundamental matrix solution of \eqref{beam-system},
normalized at $x=-1$, as constructed earlier.  Let $\Psi(x,\l)$ be the matrix solution
normalized at $x=1$:
\beas
D^2\Phi&=&\l M\Phi, \quad \Phi(-1,\l)\=0, \quad D\Phi(-1,\l)\=\bo;\\
D^2\Psi&=&\l M\Psi, \quad \Psi(1,\l)\=0, \quad D\Psi(1,\l)\=-\bo.
\eeas
As for $\Phi$, condition \eqref{support} (a) implies that $\Psi$ and $D_x\Psi$ are 
continuous at $x=\pm1$.

If $C$ is a matrix, let 
\be\label{wh}
\wh C\=C^t\s.
\ee
Differentiating shows that quasi-Wronskians like $\wh\Phi_x \Psi-\wh\Phi\Psi_x$ are constant;
for example
\beas
D[\wh\Phi_x\Psi-\wh\Phi\Psi_x]&=&(\l\Phi^tM^t\s\Psi+\Phi_x^t\s\Psi_x)-
(\Phi_x^t\s\Psi_x+\l\Phi^t\s M\Psi)\\
&=&0,
\eeas
since $M^t\s=\s M$.  The value of the constant can be computed by taking $x=\pm 1$.
Considering the various possibilities, we have (taking $\l$ as given, $\l\in(\C\setminus\R))$:
\bea
\wh\Phi_x\Phi-\wh\Phi\Phi_x&=&0,\quad {\rm so}\ \ \wh\Phi\inv\wh\Phi_x\=\Phi_x\Phi\inv;
\label{phiphi}\\
\wh\Psi_x\Psi-\wh\Psi\Psi_x&=&0,\quad {\rm so}\ \ \wh\Psi\inv\wh\Psi_x\=\Psi_x\Psi\inv;
\label{psipsi}\\
\wh\Phi_x\Psi-\wh\Phi\Psi_x&=&-C_-\=\s\Psi(-1,\l)\=\wh\Phi(1,\l);
\label{phipsi}\\
\wh\Psi_x\Phi-\wh\Psi\Phi_x&=&C_+\=-\s\Phi(1,\l)\=-\wh\Psi(-1,\l).
\label{psiphi}
\eea
Combining some of these identities we find that
\bea
C_-&=&-\wh\Phi[\wh\Phi\inv\wh\Phi_x-\Psi_x\Psi\inv]\Psi\=\wh\Phi A\Psi;\label{Cminus}\\
C_+&=&\wh\Psi[\wh\Psi\inv\wh\Psi_x-\Phi_x\Phi\inv]\Phi\=\wh\Psi A\Phi;\label{Cplus}
\eea
where
\be\label{A}
A(x,\l)\=\Psi_x\Psi\inv-\Phi_x\Phi\inv.
\ee
The identities \eqref{phipsi} and  \eqref{Cplus} lead to two additional important identities:
\be\label{zero}
\Psi C_-\inv\wh\Phi-\Phi C_+\inv\wh \Psi\=A\inv-A\inv\=0
\ee
and
\be\label{Id}
\Psi_x C_-\inv\wh\Phi-\Phi_x C_+\inv \wh\Psi\=\Psi_x\Psi\inv A\inv-\Phi_x\Phi\inv A\inv
\=\bo.
\ee

The identities \eqref{phiphi}, \eqref{psipsi}, \eqref{Cminus}, \eqref{Cplus}, 
\eqref{A}, \eqref{zero} and \eqref{Id} call for some discussion.

\begin{lemma} Suppose $\l$ is not real.  Then  the matrix functions\ $\Phi(x,\l)$ and\ $\Psi(x,\l)$ are invertible
for $x\in (-1,1]$, $x\in [-1,1)$, respectively. \end{lemma}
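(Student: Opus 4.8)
\emph{The plan.} Both assertions are symmetric, so I would prove the claim for $\Phi$ and obtain the one for $\Psi$ by the mirror argument. Argue by contradiction: suppose $\Phi(x_0,\l)$ is singular for some $x_0\in(-1,1]$ and some non-real $\l$. Then there is a nonzero constant vector $v$ with $\Phi(x_0,\l)v=\bz$, and $\vp(x):=\Phi(x,\l)v$ solves $D^2\vp=\l M\vp$ with $\vp(-1)=\bz$, $D\vp(-1)=v\ne\bz$, and the extra relation $\vp(x_0)=\bz$. The idea is simply to run the integration-by-parts calculation behind Lemma \ref{basic-ID} on the subinterval $[-1,x_0]$ in place of $[-1,1]$, and to exploit the non-reality of $\l$.

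Set $f=D\vp$; as in the text, $f$ has bounded variation and is continuous away from atoms, so the integration-by-parts identities are legitimate on $[-1,x_0]$, and since $\vp$ is continuous and vanishes at the two endpoints $-1$ and $x_0$, all boundary terms disappear --- this remains true even if $x_0$ is an atom of $m$ or $n$, because then the boundary term carries the factor $\vp(x_0)=\bz$ and so drops out regardless of the one-sided limits of $f$ there. Exactly the computation proving \eqref{basic-id} then gives
\[
\int_{-1}^{x_0}\bigl[\,f_1\overline{f_2}+f_2\overline{f_1}\,\bigr]\,dx\=-\l R,\qquad R\=\int_{-1}^{x_0}\bigl[\,|\vp_1|^2\,dm+|\vp_2|^2\,dn\,\bigr]\ \ge\ 0.
\]
The left-hand side is real (its integrand equals $2\operatorname{Re}(f_1\overline{f_2})$); the right-hand side is $-\l$ times a nonnegative real number; and since $\l\notin\R$ this is possible only if $R=0$.

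It remains to turn $R=0$ into a contradiction. From $R=0$ we get $\vp_1=0$ $m$-a.e.\ and $\vp_2=0$ $n$-a.e.\ on $[-1,x_0]$, so by continuity $\vp_1$ vanishes on the support of $m$ in $[-1,x_0]$ and $\vp_2$ on the support of $n$ there. By hypothesis \eqref{support}(b) these supports coincide, whence $M\vp$ (with components $\vp_2\,n$ and $\vp_1\,m$) is the zero measure on $[-1,x_0]$ and therefore $D^2\vp=\l M\vp=0$ on that interval. Hence $\vp$ is affine there, and $\vp(-1)=\bz$, $D\vp(-1)=v$ force $\vp(x)=(1+x)v$; evaluating at $x_0$ and using $1+x_0>0$ gives $v=\bz$, a contradiction. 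Running the identical argument on $[x_0,1]$ with $\vp(x)=\Psi(x,\l)w$, using $\vp(1)=\bz$, $D\vp(1)=-w$, $\vp(x_0)=\bz$ and $1-x_0>0$, proves the statement for $\Psi$.

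\emph{Where the difficulty sits.} The computational heart is nothing more than Lemma \ref{basic-ID} localized to a subinterval, so there are really only two points that need care. First, the integration by parts when $x_0$ is an atom --- harmless for the reason noted above. Second, and more essential, the implication $R=0\Rightarrow D^2\vp=0$ on $[-1,x_0]$: this is exactly where assumption \eqref{support}(b) is used, since without common support one would only be able to annihilate $M\vp$ on a part of $\operatorname{supp}M$, not all of it. Finally, note that non-reality of $\l$ is consumed in a single step, the passage to $R=0$; this is unavoidable, since at a real Dirichlet eigenvalue $\l_\nu$ the matrix $\Phi(1,\l_\nu)$ really is singular, so the conclusion genuinely requires $\l\notin\R$.
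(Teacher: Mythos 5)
Your proof is correct and follows essentially the same route as the paper: the paper disposes of the lemma by observing that singularity of $\Phi(x_0,\l)$ would make $\l$ a Dirichlet eigenvalue of the beam problem restricted to $[-1,x_0]$, hence real, and you simply unpack that reduction by re-running the Lemma \ref{basic-ID} identity on the subinterval (your explicit treatment of a possible atom at $x_0$ is a nice point the paper leaves implicit). One small correction to your commentary: assumption \eqref{support}(b) is not actually needed where you invoke it, since $\vp_1=0$ $m$-a.e.\ and $\vp_2=0$ $n$-a.e.\ already force both components $\vp_2\,n$ and $\vp_1\,m$ of $M\vp$ to vanish, exactly as in the paper's proof of Lemma \ref{basic-ID}.
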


\proof Suppose that $\Phi(x,\l)$ is not invertible at some point $x_0$ in the interval $(-1,1)$.
Then $x_0>-1$ and $\l$ is a Dirichlet eigenvalue for the beam problem restricted to the
interval $[-1,x_0]$.  Therefore $\l$ is real.  The same argument applies to $\Psi$ for
an interval $[x_1,1]$.\qquad$\qed$\endproof

It follows that all the expressions above are well-defined when $\l$ is not real and $\abs{x}<1$. 
Moreover, \eqref{phipsi} and \eqref{psiphi} imply that $C_-$ and $C_+$ are invertible
if $\l$ is not real.  In turn, these imply that $A$ is invertible if $\l$ is not
real and $\abs{x}<1$.  In summary,

\begin{corollary} The identities \eqref{phiphi} -- \eqref{Id} are valid for each non-real 
$\l$ and $\abs{x}<1$.  \end{corollary}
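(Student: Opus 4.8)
The plan is to observe that the Corollary is exactly the assertion that the Lemma just proved supplies all the invertibility needed to legitimize the formal computations that produced \eqref{phiphi}--\eqref{Id}; so the proof is essentially bookkeeping, carried out in the following order.

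First I would fix the domain. By the Lemma, for non-real $\l$ the matrix $\Phi(x,\l)$ is invertible for $x\in(-1,1]$ and $\Psi(x,\l)$ is invertible for $x\in[-1,1)$. Hence for $\abs{x}<1$ both $\Phi(x,\l)$ and $\Psi(x,\l)$ are invertible, and since $\s$ is invertible so are $\wh\Phi=\Phi^t\s$ and $\wh\Psi=\Psi^t\s$. This already makes $\Phi_x\Phi\inv$, $\Psi_x\Psi\inv$, $\wh\Phi\inv\wh\Phi_x$, $\wh\Psi\inv\wh\Psi_x$ and the matrix $A$ of \eqref{A} well-defined for $\abs{x}<1$, and the one-line differentiations already displayed (which only use $M^t\s=\s M$ together with the continuity of $\Phi,\Psi,D_x\Phi,D_x\Psi$ on $[-1,1]$ guaranteed by \eqref{support}(a)) show that the quasi-Wronskians are constant; evaluating at $x=\pm1$ gives \eqref{phiphi}--\eqref{psiphi} with the stated constant values. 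In particular \eqref{phipsi} identifies $C_-$ with $\wh\Phi(1,\l)=\Phi(1,\l)^t\s$ and \eqref{psiphi} identifies $C_+$ with $-\wh\Psi(-1,\l)=-\Psi(-1,\l)^t\s$; since $\Phi(1,\l)$ and $\Psi(-1,\l)$ are invertible for non-real $\l$ by the Lemma, $C_-$ and $C_+$ are invertible.

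Next I would deduce that $A$ is invertible. Combining \eqref{phipsi} with \eqref{phiphi} (respectively \eqref{psiphi} with \eqref{psipsi}), exactly as in the derivation above, yields the factorizations $C_-=\wh\Phi A\Psi$ and $C_+=\wh\Psi A\Phi$, i.e.\ \eqref{Cminus} and \eqref{Cplus}; solving either one gives $A=\wh\Phi\inv C_-\Psi\inv$ for $\abs{x}<1$, a product of invertible matrices, so $A$ and $A\inv$ are defined there. With $C_-$, $C_+$, $A$ all invertible, the inverses $C_-\inv$, $C_+\inv$, $A\inv$ occurring in \eqref{zero} and \eqref{Id} exist, and substituting $C_-\inv\wh\Phi=A\inv\Psi\inv$ and $C_+\inv\wh\Psi=A\inv\Phi\inv$ (read off from the factorizations) into the left sides gives $A\inv-A\inv=0$ and $\Psi_x\Psi\inv A\inv-\Phi_x\Phi\inv A\inv=AA\inv=\bo$. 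This establishes all of \eqref{phiphi}--\eqref{Id} for non-real $\l$ and $\abs{x}<1$.

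There is no genuine obstacle; the entire analytic content sits in the Lemma, and the Corollary merely packages it. The one point that requires a little care is the logical order: one must first obtain invertibility of $C_-$ and $C_+$ from the \emph{boundary} values $\Phi(1,\l)$ and $\Psi(-1,\l)$, and only then conclude invertibility of $A$ from the factorization $C_-=\wh\Phi A\Psi$ --- attempting to argue invertibility of $A$ directly, and $C_\pm$ from it, would be circular.
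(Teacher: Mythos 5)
Your proposal is correct and takes essentially the same route as the paper: the Lemma yields invertibility of $\Phi$ and $\Psi$ (in particular at $x=1$ and $x=-1$), hence of $C_-=\wh\Phi(1,\l)$ and $C_+=-\wh\Psi(-1,\l)$ via \eqref{phipsi}--\eqref{psiphi}, hence of $A$ via the factorizations \eqref{Cminus}--\eqref{Cplus}, after which \eqref{zero} and \eqref{Id} follow. One minor order slip: from $C_-=\wh\Phi A\Psi$ one reads off $C_-\inv\wh\Phi=\Psi\inv A\inv$ (not $A\inv\Psi\inv$), and likewise $C_+\inv\wh\Psi=\Phi\inv A\inv$; with the factors in that order your final displayed computations are exactly the paper's.
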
  

\begin{theorem}\label{Gl} The matrix function
\be\label{G}
G_\l(x,y)\=\begin{cases} \Psi(x,\l)\,C_-\inv\,\wh\Phi(y,\l),\quad y<x;\\
\Phi(x,\l)\,C_+\inv\,\wh\Psi(y,\l),\quad y>x.\end{cases}
\ee
is the Green's kernel for the equation $D^2u-\l Mu=f$, $\l\notin\R$.\end{theorem}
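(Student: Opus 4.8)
The plan is to verify directly that the matrix kernel $G_\l(x,y)$ defined by \eqref{G} satisfies the defining properties of a Green's kernel for the operator $D^2 - \l M$: namely, that for $f$ in a suitable class, $u(x) = \intv G_\l(x,y) f(y)\,dy$ solves $D^2 u - \l M u = f$. Equivalently, I will check the three structural conditions characterizing the Green's kernel: (i) for fixed $y$, $G_\l(\cdot,y)$ solves the homogeneous equation $D^2 u - \l M u = 0$ on each of the intervals $x<y$ and $x>y$; (ii) $G_\l$ is continuous across the diagonal $x=y$; (iii) the jump in $D_x G_\l$ across $x=y$ equals the identity matrix $\bo$; and (iv) $G_\l(\cdot,y)$ satisfies the homogeneous Dirichlet boundary conditions at $x=\pm1$.

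First I would dispose of (i) and (iv). Condition (iv) is immediate: for $y>-1$ we are in the branch $G_\l(x,y) = \Psi(x,\l)C_+\inv\wh\Psi(y,\l)$ near $x=-1$, and $\Phi(-1,\l)=0$ (wait --- near $x=-1$ we have $x<y$, so the relevant branch is $\Phi(x,\l)C_+\inv\wh\Psi(y,\l)$); since $\Phi(-1,\l)=0$ the kernel vanishes at $x=-1$, and similarly $\Psi(1,\l)=0$ gives vanishing at $x=1$. Condition (i) holds because both $\Phi(\cdot,\l)$ and $\Psi(\cdot,\l)$ solve $D^2 w = \l M w$ by construction, so any constant right-multiple does too, and the factors $C_-\inv\wh\Phi(y,\l)$, $C_+\inv\wh\Psi(y,\l)$ are constant in $x$. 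The invertibility of $C_\pm$ for non-real $\l$, needed to make sense of the formula, is exactly the content of the Corollary preceding the theorem.

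The heart of the matter is the matching at $x=y$, and this is where I expect the main obstacle. Continuity (ii) requires
\bes
\Psi(y,\l)\,C_-\inv\,\wh\Phi(y,\l)\=\Phi(y,\l)\,C_+\inv\,\wh\Psi(y,\l),
\ees
which is precisely identity \eqref{zero} (rearranged: the displayed difference is $A\inv - A\inv = 0$), valid for $|y|<1$. For the jump condition (iii), differentiating each branch in $x$ and evaluating the difference of the inner (left) limit and outer (right) limit at $x=y$ gives
\bes
D_x G_\l(y_-,y) - D_x G_\l(y_+,y)\=\Psi_x(y,\l)\,C_-\inv\,\wh\Phi(y,\l)-\Phi_x(y,\l)\,C_+\inv\,\wh\Psi(y,\l),
\ees
and this is exactly identity \eqref{Id}, which equals $\bo$. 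So the two pre-established Wronskian identities \eqref{zero} and \eqref{Id} do all the real work; the task is just to assemble them correctly and to confirm the sign conventions line up (the jump should be $+\bo$ for the equation $D^2u - \l Mu = f$ written with $f$ on the right, given the $-1$ normalization of $D\Psi(1,\l)$).

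Finally I would package these facts: given $f$ continuous (or more generally of the regularity class for which the integration-by-parts manipulations earlier in the section are justified), set $u(x) = \intv G_\l(x,y)f(y)\,dy$, split the integral at $x$, differentiate twice using Leibniz's rule, and collect the boundary terms produced by the diagonal. The continuity identity \eqref{zero} kills the first-derivative boundary term, the jump identity \eqref{Id} produces exactly $f(x)$, and the homogeneous-equation property (i) handles the remaining interior integral, yielding $D^2u - \l Mu = f$; the boundary behavior of $\Phi,\Psi$ at $\pm1$ gives the homogeneous Dirichlet conditions on $u$. The only subtlety worth flagging is that $M$ may be a measure, so "$D^2 u - \l M u = f$" must be read in the distributional sense, with $D_x u$ of bounded variation and continuous off the atoms --- but this is consistent with the remarks made earlier in \autoref{sec:EBbeam} about \eqref{f-problem}, and requires no new argument beyond noting that the construction of $\Phi,\Psi$ already accommodates measure-valued $M$.
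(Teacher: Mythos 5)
Your proposal is correct and takes essentially the same route as the paper: the paper's proof is precisely your final ``packaging'' step---write $u(x)=\intv G_\l(x,y)f(y)\,dy$, split at $x$, differentiate twice, and use \eqref{zero} to kill the boundary term in $Du$ and \eqref{Id} to produce $f(x)$ in $D^2u$, the interior terms giving $\l Mu$. The only blemish is the labeling of the one-sided limits in your jump computation (the expression you display is the limit from $x>y$ minus that from $x<y$), but since it is exactly \eqref{Id} $=\bo$, the sign convention you flag does line up.
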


\proof Let
\bea
u(x)&=&\intv G_\l(x,y)f(y)\,dy\nonumber\\
&=&\Psi(x)\pint^x C_-\inv\wh\Phi(y)f(y)\,dy
+\Phi(x)\int_x^1 C_+\inv\wh\Psi(y)f(y)\,dy.\label{Gf}
\eea
Then, using \eqref{zero},
\beas
Du(x)&=&[\Psi(x)C_-\inv\wh\Phi(x)-\Phi(x)C_+\inv\wh\Psi(x)]f(x)\\
&&\quad +\Psi_x(x)\pint^x C_-\inv\wh\Phi(y)f(y)\,dy+\Phi_x(x)\int_x^1 C_+\inv\wh\Psi(y)f(y)\,dy\\
&=& 0+\Psi_x(x)\pint^x C_-\inv\wh\Phi(y) f(y)\,dy+\Phi_x(x)\int_x^1 C_+\inv\wh\Psi(y) f(y)\,dy.
\eeas
Therefore, using \eqref{Id},
\beas
D^2u(x)&=&\l M(x)u(x)+[\Psi_x C_-\inv\wh\Phi-\Phi_x C_+\inv\wh\Psi](x)f(x)\\
&=& \l M(x)u(x)+f(x).\qquad\qed
\eeas

\mds
The kernel $G_\l$ can be used to calculate the kernel for the operator
$R_\l$ defined in  \eqref{Rl}.  Given $f\in \H_1$ with $g=Df$ integrable,
consider the inhomogeneous problem
\be\label{inhom}
(D^2-\l M)u\=g
\ee
with solution 
\bes
u(x)\=\intv G_\l(x,y)g(y)\,dy.
\ees
Let $v=Du$, so \eqref{inhom} is equivalent to
\bes
Dv-\l MD_0\inv v\=Df 
\ees
or $v-\l Tv=f$, which is the same as 
\be\label{inhom2}
v \=f-R_\l f.
\ee
Now
\bes
u(x)\=\intv G_\l(x,y)g(y)\,dy.
\ees
Thus the solution to \eqref{inhom2} is, using \eqref{zero} and \eqref{Id} again, 
\beas
v(x)&=&Du(x)\\
&=&\Psi_x(x)\pint^x C_-\inv \wh\Phi(y)Df(y)\,dy+\Phi_x(x)\int_x^1C_+\inv
\Phi(y)Df(y)\,dy\\
&=& f(x)-\Psi_x(x)\pint^x C_-\inv\wh\Phi_y(y)f(y)\,dy-\Phi_x(x)\int_x^1 C_+\inv
\wh \Psi_y (y)f(y)\,dy.
\eeas
This shows that the kernel 
\be\label{Gxy}
[G_\l]_{xy}\=
\begin{cases} \Psi_x(x)\,C_-\inv \wh\Phi_y(y),\qquad y<x;\\
\Phi_x(x)\,C_+\inv\, \wh\Psi_y(y),\qquad y>x.\end{cases}
\ee
generates $R_\l$ for functions in $\H_1$.  The operator $R_\l$ is zero on
$\H_0$, while the integral of $[G_\l]_{xy}$ against $\bo$ is
\bes
\intv [G(x,y)_\l]_{xy}\,dy
\=\Psi_x(x)\pint^x C_-\inv \wh\Phi_y(y)\,dy
+\Phi_x(x)\int_x^1C_+\inv \wh\Psi_y(y)\,dy,
\ees 
which, by \eqref{Id}, is the identity matrix. We can compensate 
by subtracting $\frac12\bo$ from  $[G_\l]_{xy}(x,y)$ as kernel.

Summarizing, 

\begin{theorem}\label{Kl3} The kernel for the operator $R_\l=-\l T(I-\l T)\inv$ is
\be\label{Kl4}
K_\l(x,y)\=[G_\l]_{xy}(x,y)-\tfrac12\bo.
\ee\end{theorem}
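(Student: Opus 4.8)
The statement summarizes the computation carried out in the paragraphs just above it, and the plan is to organize that computation into three clean steps followed by one density argument. The input is the Green's kernel $G_\l$ of \autoref{Gl}, which inverts $D^2-\l M$ off the real axis. First I would fix $f\in\H_1$ for which $g\=Df$ is integrable, form $u(x)\=\intv G_\l(x,y)g(y)\,dy$, and set $v\=Du$. By the reformulation \eqref{Tproblem} together with the definitions \eqref{D0inv}, \eqref{D1inv} of the two inverses of $D$, solving $(D^2-\l M)u\=g$ this way is the same as solving $v-\l Tv\=f$, i.e.\ $v\=f-R_\l f$ with $R_\l$ as in \eqref{Rl}. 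So it suffices to exhibit $v$ explicitly as $f$ minus an integral operator and to read off that operator's kernel.

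Next I would differentiate $u$ using the split form \eqref{Gf}. The ``diagonal'' term produced by differentiating the two variable limits of integration is $[\Psi C_-\inv\wh\Phi-\Phi C_+\inv\wh\Psi](x)\,g(x)$, which vanishes identically by the Wronskian identity \eqref{zero}; what survives is the expression for $Du$ with $\Psi_x,\Phi_x$ pulled outside the integrals. An integration by parts in the $y$ variable then moves the derivative from $g\,dy=df$ onto $\wh\Phi$ and $\wh\Psi$. The endpoint contributions at $y=-1$ and $y=1$ drop because $\wh\Phi(-1,\l)=\Phi(-1,\l)^t\s=0$ and $\wh\Psi(1,\l)=\Psi(1,\l)^t\s=0$, while the contribution at $y=x$ assembles into $[\Psi_x C_-\inv\wh\Phi-\Phi_x C_+\inv\wh\Psi](x)\,f(x)$, which equals $f(x)$ by the second Wronskian identity \eqref{Id}. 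What remains is exactly $v(x)=f(x)-\intv[G_\l]_{xy}(x,y)f(y)\,dy$ with $[G_\l]_{xy}$ the kernel \eqref{Gxy}. Hence $R_\l f(x)=\intv[G_\l]_{xy}(x,y)f(y)\,dy$ for every $f\in\H_1$ with $Df$ integrable; since such $f$ are dense in $\H_1$ and $R_\l$ is bounded (because $T$ is compact and $\l\notin\{\l_\nu\}$), while $[G_\l]_{xy}$ is a bounded, hence $L^2$, kernel, the identity extends to all of $\H_1$.

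Finally I would account for the summand $\H_0$ in $\H=\H_0\oplus\H_1$. On constants $R_\l$ vanishes (as does $T$ on $\H_0$, by definition of $D_0\inv$), but the operator with kernel $[G_\l]_{xy}$ does not: by the fundamental theorem of calculus and one more application of \eqref{Id}, $\intv[G_\l]_{xy}(x,y)\,dy=\Psi_x(x)C_-\inv\wh\Phi(x)-\Phi_x(x)C_+\inv\wh\Psi(x)=\bo$, so $[G_\l]_{xy}$ acts as the identity on $\H_0$. The constant kernel $\tfrac12\bo$ is precisely the kernel of the operator that is the identity on $\H_0$ and zero on $\H_1$ — a constant vector $c$ gives $\intv\tfrac12\bo\,c\,dy=c$, while $\intv f=0$ for $f\in\H_1$. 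Subtracting it therefore repairs the action on $\H_0$ without touching the action on $\H_1$, so $K_\l(x,y)=[G_\l]_{xy}(x,y)-\tfrac12\bo$ is the kernel of $R_\l$ on all of $\H$, which is \eqref{Kl4}; in particular $K_\l$ agrees with the weak limit $\wh K_\l$ of \autoref{weak-convergence}, since both are bounded operators representing $R_\l$.

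None of the steps is deep; the only points that need care are the manipulations with $M$ merely a measure — differentiation under the integral sign and the Stieltjes integration by parts — which are licensed by the fact, noted earlier, that $\Phi,\Psi$ are continuous and $\Phi_x,\Psi_x$ have bounded variation and are continuous at $\pm1$ under assumption \eqref{support}(a), together with $f$ being absolutely continuous. I expect the genuine (if still modest) obstacle to be the bookkeeping in the density/extension step: checking that the formula derived on the dense class truly characterizes the bounded operator $R_\l$ on all of $\H$, and that the correction by $\tfrac12\bo$ is exactly what is forced by simultaneous correctness on $\H_0$ and $\H_1$.
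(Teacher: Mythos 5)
Your argument is correct and is essentially the paper's own proof: you reproduce the derivation of $v=f-R_\l f$ via the inhomogeneous problem, the use of \eqref{zero} and \eqref{Id} in the differentiation and integration by parts, and the $\tfrac12\bo$ correction accounting for $\H_0$. The extra care you take with the density/extension step and with spelling out why the constant kernel is exactly the identity on $\H_0$ and zero on $\H_1$ only makes explicit what the paper leaves implicit.
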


We can now relate this to the kernel $\wh K_\l$ defined using the formal series
\eqref{Kl1}.  We have shown that they  each define the same operator $R_\l$
in $\H$.  This means that they coincide as elements of the $L^2$ space $\H^{(2)}$ [*],
so we may choose to identify them at each point $(x,y)$.

\begin{theorem}\label{sameKl} The kernels $K_\l$ and $\wh K_\l$ are identical.\end{theorem}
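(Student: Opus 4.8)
The plan is to read the theorem as saying that $K_\l$ and $\wh K_\l$ coincide as elements of $\H^{(2)}=L^2(I\times I;M(2,\C))$, and to deduce this from the uniqueness of the $L^2$ kernel of an integral operator; the analytic content has already been supplied by the two preceding results. By Theorem~\ref{Kl3}, for a non-real $\l$ the explicit matrix function $K_\l=[G_\l]_{xy}-\tfrac12\bo$ is an integral kernel for $R_\l=-\l T(I-\l T)\inv$ acting on $\H$, and by Theorem~\ref{weak-convergence} the weak limit $\wh K_\l$ of the series \eqref{Kl1} is likewise an integral kernel for $R_\l$. So the whole task is to show that a single operator cannot be represented by two different members of $\H^{(2)}$.

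First I would confirm that both candidates actually lie in $\H^{(2)}$. For $\wh K_\l$ this is built into Theorem~\ref{weak-convergence}, which constructs it as a weak limit in $\H^{(2)}$. For $K_\l$ it suffices to observe that $[G_\l]_{xy}$ is bounded on the square: for fixed non-real $\l$ the matrices $C_\pm\inv$ are constant in $(x,y)$ (they are invertible precisely because $\l\notin\R$), while $\Phi_x(\cdot,\l)$ and $\Psi_x(\cdot,\l)$ are uniformly bounded on $[-1,1]$ by the termwise bounds underlying Proposition~\ref{Phik} and Corollary~\ref{Phi-order}; the one-sided limits at the diagonal $y=x$ are finite (they differ by $\bo$, by \eqref{Id}), so there is no singularity. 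A bounded matrix function on the bounded square belongs to $L^2$, hence $K_\l\in\H^{(2)}$.

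Second, I would check that the two kernels induce the same operator on all of $\H$, not merely on $\H_1$. On $\H_1$ this is exactly the assertion of Theorems~\ref{Kl3} and~\ref{weak-convergence}. On the complementary space $\H_0$ of constant functions $R_\l$ vanishes, and both kernels respect this: integrating $\wh K_\l$ against the constant $\bo$ gives $0$ because every projection $E_\nu$ kills constants (since $\intv f_\nu=0$), while integrating $[G_\l]_{xy}$ against $\bo$ gives $\bo$ by \eqref{Id}, which is cancelled by the subtracted $\tfrac12\bo$. Thus $K_\l$ and $\wh K_\l$ define one and the same bounded operator $R_\l$ on $\H$.

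Finally I would invoke the elementary uniqueness statement: if $k\in L^2(I\times I;M(2,\C))$ induces the zero operator on $\H$, then for an orthonormal basis $\{e_j\}$ of $L^2(I)$ one has $\int k(x,y)\overline{e_j(y)}\,dy=0$ for a.e.\ $x$ and every $j$, so $k(x,\cdot)=0$ for a.e.\ $x$, whence $k=0$ a.e.\ by Fubini. Applying this to $k=K_\l-\wh K_\l$ yields $K_\l=\wh K_\l$ in $\H^{(2)}$; this is the precise meaning of ``identical'' anticipated in the discussion preceding the theorem, and it lets us adopt the closed form $K_\l$ as the canonical pointwise representative of $\wh K_\l$. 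I do not expect a real obstacle: the theorem merely ties together the convergence result and the closed-form result, and the single computation that deserves a line of its own is the boundedness of $[G_\l]_{xy}$, which follows at once from the estimates of Section~\ref{sec:EBbeam}.
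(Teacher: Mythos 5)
Your argument is correct and is essentially the paper's own: the paper likewise concludes the identity of $K_\l$ and $\wh K_\l$ from the fact that both are $L^2$ kernels inducing the same operator $R_\l$ on $\H$ (including the $\tfrac12\bo$ compensation on $\H_0$), hence coincide as elements of $\H^{(2)}$. Your added details (membership of $K_\l$ in $\H^{(2)}$ and the standard uniqueness of $L^2$ kernels) only make explicit what the paper leaves implicit.
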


\mds
For later use we define here the {\it Weyl function\/} for the beam Dirichlet problem
to be 
\be\label{weyl-green}
W(\l)\=\frac1\l D\Phi(1,\l)\Phi(1,\l)\inv,
\ee
The representation \eqref{Kl4} shows that
\be\label{weyl-def}
W(\l)\=\frac1\l K_\l(1,1)+\frac{1}{2\l} \bo.
\ee
It follows that $W$ has a pole at the origin with residue 
\be\label{Czero}
D\Phi(1,0)\Phi(1,0)\inv\=\frac12\bo.
\ee
Then the representation \eqref{Kl1} shows that, formally at least,
\be\label{weyl-series}
W(\l)\=\frac1{2\l}\bo+\sum_\nu \frac1{\l-\l_\nu}\frac1{\la f_\nu,f_\nu\ra}
\bsm f_{\nu,1}(1)f_{\nu,2}(1)&
f_{\nu,1}(1)f_{\nu,1}(1)\\ f_{\nu,2}(1)f_{\nu,2}(1)&f_{\nu,2}(1)f_{\nu,1}(1)\esm. 
\ee
(We omit a detailed justification of \eqref{weyl-series} in the general case, since the
only use we shall make is to the case when $\{\l_\nu\}$ is finite.)

We assume here that the $f_\nu$ are chosen to be real.
It will be useful to understand the signs of the entries of the summands. 

\begin{lemma}\label{fnu-signs} 
If\ $\nu>0$, then $f_{\nu,1}$ and  $f_{\nu,2}$ have the same sign.\end{lemma}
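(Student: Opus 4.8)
The plan is to evaluate the boundary data $f_{\nu,1}(1)$ and $f_{\nu,2}(1)$—the numbers appearing in the numerators of \eqref{weyl-series}—in closed form in terms of $\Phi(1,\l_\nu)$ and $D\Phi(1,\l_\nu)$, and then to exhibit one as a strictly positive multiple of the other when $\l_\nu<0$. First I would fix the eigenfunction as in the proof of \autoref{simple-zeros}: since every Dirichlet eigenfunction vanishes at $x=-1$ and the $\l_\nu$-eigenspace is one dimensional (\autoref{phi-abcd}), we may write $\vp_\nu(x)=\Phi(x,\l_\nu)v_\nu$ with $v_\nu=D\vp_\nu(-1)$, and the condition at $x=1$ reads $\Phi(1,\l_\nu)v_\nu=0$. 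Writing
$$\Phi(1,\l_\nu)=\begin{bmatrix}a&b\\ c&d\end{bmatrix},\qquad D\Phi(1,\l_\nu)=\begin{bmatrix}a'&b'\\ c'&d'\end{bmatrix},$$
we have $ad=bc$ because $\det\Phi(1,\l_\nu)=\D(\l_\nu)=0$, while $a\neq0$ and $c\neq0$ by \autoref{Phik} ($\l_\nu$ being real and nonzero); hence the kernel of $\Phi(1,\l_\nu)$ is spanned by $(d,-c)^{t}$, so $v_\nu$ is a nonzero multiple of it and
$$f_{\nu,1}(1)=a'd-b'c,\qquad f_{\nu,2}(1)=c'd-d'c.$$

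The key step is a quasi-Wronskian identity for $\Phi$ with itself. Exactly as in the derivation of \eqref{phiphi}, the matrix $\wh\Phi_x\Phi-\wh\Phi\Phi_x$ has identically vanishing $x$-derivative—this uses only $D^{2}\Phi=\l M\Phi$ and $M^{t}\s=\s M$—and it vanishes at $x=-1$, so $\wh\Phi_x\Phi=\wh\Phi\Phi_x$ for every $x$. Comparing the off-diagonal entries of this identity at $x=1$ gives $a'd-ad'=b'c-bc'$, hence $f_{\nu,1}(1)=a'd-b'c=ad'-bc'$; multiplying through by $c$ and using $bc=ad$ collapses this to
$$c\,f_{\nu,1}(1)=a\,(cd'-dc')=-a\,f_{\nu,2}(1),\qquad\text{that is}\qquad f_{\nu,1}(1)=-\frac{a}{c}\,f_{\nu,2}(1).$$

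It then remains to check that $-a/c>0$ when $\nu>0$, i.e. $\l_\nu<0$. By \autoref{Phik}, $a=\Phi(1,\l)_{11}$ is a power series in $\l^{2}$ with nonnegative coefficients and constant term $\Phi_0(1)_{11}=2$, so $a\geq2$; and $c=\Phi(1,\l)_{21}$ is an odd power series in $\l$ with nonnegative coefficients whose linear coefficient is $\Phi_1(1)_{21}=\int_{-1}^{1}(1-y^{2})\,dm(y)>0$ (the measure $m$ is nonzero—otherwise the beam has no eigenvalues at all—and charges neither endpoint). Thus $c<0$ at $\l=\l_\nu<0$, so $-a/c>0$ and $f_{\nu,1}(1),f_{\nu,2}(1)$ have the same sign; they are both nonzero, since $f_{\nu,2}(1)=0$ would force $D\vp_\nu(1)=0=\vp_\nu(1)$ and hence $\vp_\nu\equiv0$. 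The step most prone to slips is the sign bookkeeping in the quasi-Wronskian reduction—verifying that the correct off-diagonal entry produces $a'd-ad'=b'c-bc'$ and that combining it with $ad=bc$ yields the factor $-a/c$ with the right sign; everything else is read straight off \autoref{Phik}. (If instead the pointwise assertion $f_{\nu,1}(x)f_{\nu,2}(x)\geq0$ for all $x$ is intended, this $\Phi$-based computation no longer closes at interior points and one would need oscillation-type arguments; but only the values at $x=1$ are used in \eqref{weyl-series}.)
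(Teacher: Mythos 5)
Your proof is correct, but it takes a genuinely different route from the paper's. The paper exploits the reflection symmetry $(\vp_1,\vp_2)\mapsto(\vp_1,-\vp_2)$, $\l\mapsto-\l$, to reduce to the case $\l_\nu>0$, and then works with the solution $\Psi$ normalized at $x=1$: writing $\vp_\nu=\Psi v$, the condition $\Psi(-1,\l_\nu)v=0$ together with strict positivity of all entries of $\Psi(-1,\l)$ for $\l>0$ (the ``dual'' of \autoref{Phik}) forces $v_1v_2<0$, and $f_\nu(1)=D\vp_\nu(1)=-v$ finishes the argument. You instead stay with $\Phi$ at the negative eigenvalue, take the explicit kernel vector $(d,-c)^t$ of the singular matrix $\Phi(1,\l_\nu)$ (as in the proof of \autoref{simple-zeros}), and use the self quasi-Wronskian identity \eqref{phiphi} --- whose derivation indeed needs neither invertibility nor $\l\notin\R$, only $M^t\s=\s M$ and $\Phi(-1)=0$ --- to obtain the clean relation $f_{\nu,1}(1)=-(a/c)\,f_{\nu,2}(1)$; the signs $a>0$, $c<0$ at $\l_\nu<0$ then come from the even/odd structure and nonnegativity of the coefficients in \autoref{Phik}. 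The sign bookkeeping checks out: the $(1,2)$ entry of \eqref{phiphi} at $x=1$ gives $a'd-ad'=b'c-bc'$, and with $ad=bc$ this yields $c\,f_{\nu,1}(1)=-a\,f_{\nu,2}(1)$ as you claim, and $\Phi_1(1)_{21}=\int_{-1}^1(1-y^2)\,dm(y)>0$ is correct. What each approach buys: the paper's proof is shorter but leans on an unproved-in-detail dual positivity statement for $\Psi$ plus the $\pm\l$ symmetry, whereas yours is self-contained given \autoref{Phik} and \eqref{phiphi} (already established for other purposes) and produces the quantitative ratio $-a/c$ rather than just a sign statement. Your nonvanishing argument (zero Cauchy data at $x=1$ forces $\vp_\nu\equiv0$) is acceptable and is in substance the same uniqueness fact the paper uses when it writes every eigenfunction as $\Psi v$; the paper gets strict nonvanishing for free from $v_1v_2<0$.
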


\proof Because of the relationship between eigenfunctions for $\pm\l_\nu$, 
this is equivalent to the statement that if $\nu<0$ then $f_{\nu,1}$ and
$f_{\nu,2}$ have opposite signs.  With our choice of indexing, $\nu<0$ means
$\l_\nu>0$.  The  corresponding $\phi_\nu$ is $\Psi v$ for some fixed 2-vector $v$.
Then  $\Psi(-1,\l_\nu)v=\bz$.  But $\l_\nu>0$ implies that all entries of $\Psi(-1,\l_\nu)$
are positive (this is the dual of the argument for Proposition \ref{Phik}) so $v_1v_2<0$.  Then
$D\phi_\nu(1)=\Psi_x(1,\l)v=-v$. Since $f_\nu$ is a multiple of $v$, its entries
have opposite signs.\qquad$\qed$

To simplify the notation in \eqref{weyl-series}, let 
\be
\a_\nu\=\frac{f_{\nu,1}(1)}{\sqrt{\la f_\nu,f_\nu\ra}},\quad
\b_\nu\=\frac{f_{\nu,2}(1)}{\sqrt{\la f_\nu,f_\nu\ra}},\quad \hbox{for\ \  $\nu>0$}.
\ee
By Lemma \ref{fnu-signs} we may take $\a_\nu$ and $\b_\nu$ positive.
Taking into account the relation between $f_\nu$ and $f_{-\nu}$ and between
$\la f_\nu,f_\nu\ra$ and $\la f_{-\nu},f_{-\nu}\ra$  it follows that 
\bea
W(\l)&=&\frac1{2\l}\bo+\sum_{\nu<0}\frac1{\l-\l_\nu}\bsm \a_\nu \b_\nu& -\a_\nu^2
\\ -\b_\nu^2& \a_\nu \b_\nu\esm\nonumber\\
&&\quad +\sum_{\nu>0}\frac1{\l-\l_\nu}\bsm \a_\nu \b_\nu& \a_\nu^2\\ \b_\nu^2& \a_\nu \b_\nu\esm, 
\label{weyl-series2}
\eea
where we set $\a_{-\nu}=\a_\nu, \, \b_{-\nu}=\b_\nu$.  

\section{The discrete beam} \label{sec:discrete beam}
The discrete beam is characterized by measures $m$ and $n$ that are supported on discrete
points
\bes
-1\ <\ x_1\ <\ x_2\ <\ \dots\ <\ x_{d-1}\ <\ x_d\ <\ 1,
\ees
with masses $m_j$, $n_j$.  For convenience we also define
\bes
x_0\=-1,\quad\ \  x_{d+1}\=1,\quad\ \  l_j\=x_{j+1}-x_j, \quad\ \  M_0\=\bz.
\ees
Here conditions \eqref{support} and \eqref{support2} both 
reduce to the assumption that $m_jn_j>0$, $j=1,\dots d$.  

The partial fundamental 
solution $\Phi(x,\l)$ satisfies $D^2 \Phi=0$ except at the $x_j$, so it is piecewise
linear in $x$, and the derivative $D\Phi$ is piecewise constant.  Thus for any given
$\l$ the function $\Phi$ is characterized by its values 
\be\label{Phij}
\Phi_j\=\Phi_j(\l)\=\Phi(x_j,\l),\quad j=0,\dots d+1.
\ee
Similarly, $D\Phi=\Phi_j'$ is characterized by its one-sided values
\be\label{DPhij}
\Phi_j'\=D\Phi_j(\l)\=D\Phi(x_j-,\l),\quad j=1,\dots d+1.
\ee
The beam equation $D^2\Phi=\l M\Phi$, with initial conditions 
\bes \Phi(-1,\l)\=\bz, \qquad D\Phi(-1,\l)=\bo,
\ees
 translates to the conditions
\beas
\Phi_0&=&\bz,\qquad \Phi_{j+1}\=\Phi_j+l_j\Phi'_{j+1},\\
\Phi'_1&=&\bo,\qquad \Phi'_{j+1}\=\Phi'_j+\l M_j\Phi_j.
\eeas
These relations can be put in two forms:
\be\label{Lj}
\bsm \Phi_{j+1}\\ \Phi'_{j+1}\esm\=\bsm \bo &  l_j\bo\\ \bz&\bo\esm
\bsm \Phi_{j}\\ \Phi'_{j+1}\esm 
\ee
and
\be\label{Tj}
\bsm \Phi_{j+1}\\ \Phi'_{j+1}\esm\=\bsm \bo+\l l_j M_j& l_j\bo \\ 
\l M_j&\bo\esm
\bsm \Phi_{j}\\ \Phi'_{j}\esm\=T_j\bsm \Phi_{j}\\ \Phi'_{j}\esm.
\ee

\begin{lemma}\label{Phij-degree} Each of $\Phi_j$ and $\Phi'_j$ is a polynomial of degree
$j-1$; the even part is  diagonal and the odd part is off-diagonal. The Dirichlet spectrum
has $2d$ elements. \end{lemma}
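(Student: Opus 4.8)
The plan is to establish the two structural assertions — degree $j-1$, and the splitting of the even part into diagonal and the odd part into off-diagonal matrices — simultaneously by induction on $j$ using the transfer relation \eqref{Tj}, and then to read off the cardinality of the Dirichlet spectrum from the leading coefficient of $\Delta(\l)=\det\Phi(1,\l)$. First I would record the elementary bookkeeping in $M(2,\C)$: a product of two diagonal matrices or of two off-diagonal matrices is diagonal, a diagonal times an off-diagonal matrix (in either order) is off-diagonal, and multiplying a $\l$-polynomial by $\l$ raises its degree by one and interchanges the parity of its coefficients. Call a matrix polynomial $P(\l)=\sum_k\l^kP_k$ \emph{admissible} if $P_k$ is diagonal for even $k$ and off-diagonal for odd $k$. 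Since $M_j=\bsm 0&n_j\\ m_j&0\esm$ is off-diagonal, the operation $P\mapsto\l M_jP$ carries admissible polynomials to admissible polynomials and raises the degree by one; moreover $m_jn_j>0$ makes each $M_j$ invertible, so this operation never lowers a degree.

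The induction then proceeds as follows. The base case $\Phi_1=(1+x_1)\bo$, $\Phi_1'=\bo$ (read off from $\Phi_0=\bz$, $D\Phi(-1,\l)=\bo$, and \eqref{Lj} with $j=0$) gives admissible polynomials of degree $0=1-1$ with invertible leading coefficients. Assuming $\Phi_j$ and $\Phi_j'$ are admissible of degree exactly $j-1$ with invertible leading coefficients $L_j$ and $L_j'$, the relations give $\Phi_{j+1}'=\Phi_j'+\l M_j\Phi_j$ by \eqref{Tj} and $\Phi_{j+1}=\Phi_j+l_j\Phi_{j+1}'$ by \eqref{Lj}; by the rules above both right-hand sides are admissible of degree $\le j$, with $\l^j$-coefficients $M_jL_j$ and $l_jM_jL_j$, which are invertible. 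Hence $\Phi_{j+1},\Phi_{j+1}'$ are admissible of degree exactly $j=(j{+}1)-1$, and the leading coefficient of $\Phi_{j+1}$ satisfies $L_{j+1}=l_jM_jL_j$. Unwinding this recursion, $L_{d+1}=\bigl(\prod_{j=1}^d l_j\bigr)M_dM_{d-1}\cdots M_1L_1=\bigl(\prod_{j=0}^d l_j\bigr)M_dM_{d-1}\cdots M_1$.

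For the spectral count, $\Phi(1,\l)=\Phi_{d+1}(\l)$ has degree $d$, so $\Delta(\l)=\det\Phi(1,\l)$ is a polynomial of degree $\le 2d$, and the coefficient of $\l^{2d}$ equals $\det L_{d+1}=\bigl(\prod_{j=0}^d l_j\bigr)^2\prod_{j=1}^d\det M_j=(-1)^d\bigl(\prod_{j=0}^d l_j\bigr)^2\prod_{j=1}^d m_jn_j\neq 0$, since $l_j>0$ and $m_jn_j>0$. Thus $\Delta$ has degree exactly $2d$, hence $2d$ roots counted with multiplicity; by Proposition~\ref{simple-zeros} these roots are simple, so there are $2d$ distinct ones, and none of them is $0$ because $0$ is not an eigenvalue. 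Since the Dirichlet eigenvalues are precisely the zeros of $\Delta$, the Dirichlet spectrum has exactly $2d$ elements.

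The step to watch is the non-dropping of the degree: it propagates through the induction and culminates in $\det L_{d+1}\neq 0$, and both depend entirely on the invertibility of the $M_j$, i.e.\ on the standing hypothesis $m_jn_j>0$. One must also handle the left endpoint with care — the transfer matrix $T_0$ involves $M_0=\bz$ — and therefore start the induction from the explicit values of $\Phi_1$ and $\Phi_1'$ rather than from $j=0$.
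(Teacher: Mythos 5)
Your proof is correct and follows essentially the same route as the paper: induction on the recursion \eqref{Lj}--\eqref{Tj} for the degree and parity structure, then counting Dirichlet eigenvalues via the degree of $\D(\l)=\det\Phi_{d+1}(\l)$ together with simplicity of its zeros. The only difference is that you make explicit the leading-coefficient bookkeeping ($L_{j+1}=l_jM_jL_j$, $\det L_{d+1}\neq0$ from $m_jn_j>0$) that the paper leaves implicit when asserting $\D$ has degree exactly $d$ in $\l^2$.
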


\proof The first statement follows by induction from the recursion relations \eqref{Tj}.
A consequence is that the determinant $\D(\l)=\det\Phi_{d+1}$ is a
polynomial of degree $d$ in $\l^2$, so the eigenvalues come in $d$ pairs. 
\qquad$\qed$

\mds
As shown in the general case, the eigenvalues are distinct and real. 

\sms
The Weyl function \eqref{weyl-green} in the discrete case is
\be\label{weyl-discrete}
W(\l)\=\frac1\l \Phi'_{d+1}\Phi_{d+1}\inv.
\ee
The recursion relations \eqref{Lj} 
imply that $W(\l)$  has a continued fraction expansion involving 
non-commuting coefficients \cite{stieltjes, wedderburn}.  

\begin{proposition} \label{prop:WStieltjes} 
\mbox{}
\begin{equation}\label{partial-frac}
 W(\lambda) =\cfrac{1}{\l l_d \mathbf{1}+\cfrac{1}{M_d+\cfrac{1}{\l l_{d-1} \mathbf{1}+
\cfrac{1}{M_{d-1}+\cfrac{1}{\ddots+\cfrac{1}{\l l_0\mathbf{1}}}}}}}
\end{equation} 
\end{proposition}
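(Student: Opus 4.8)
The plan is to unwind the continued fraction \eqref{partial-frac} from the bottom up, matching each stage against the recursion relations \eqref{Lj} and \eqref{Tj}. Concretely, I would introduce the sequence of matrix functions
\bes
W_j(\l)\=\frac1\l\,\Phi'_{j+1}\Phi_{j+1}\inv,\qquad j=0,1,\dots,d,
\ees
so that $W(\l)=W_d(\l)$ by \eqref{weyl-discrete}. The base case is $j=0$: since $\Phi_0=\bz$ and $\Phi'_1=\bo$, the relation $\Phi_1=\Phi_0+l_0\Phi'_1=l_0\bo$ gives $W_0(\l)=\frac1\l(\bo)(l_0\bo)\inv=\frac1{\l l_0}\bo$, which is exactly the innermost term of \eqref{partial-frac}. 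This identifies the ``seed'' of the continued fraction.

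Next I would derive the recursion $W_j\mapsto W_{j+1}$ in the form that produces one $M_j$-stage followed by one $\l l_j\bo$-stage of the continued fraction. Using \eqref{Tj}, $\Phi'_{j+1}=\Phi'_j+\l M_j\Phi_j$, so that
\bes
\frac1\l\Phi'_{j+1}\Phi_j\inv\=\frac1\l\Phi'_j\Phi_j\inv+M_j.
\ees
The left-hand side is not quite $W_{j-1}$ evaluated on the right data, so I need to be a little careful about which fundamental-matrix value sits in the denominator; the point is that \eqref{Lj} relates $\Phi_{j+1}$ to $\Phi_j$ through $\Phi'_{j+1}$, namely $\Phi_{j+1}=\Phi_j+l_j\Phi'_{j+1}$, hence
\bes
\Phi_{j+1}\Phi_{j+1}'^{-1}\=\Phi_j\Phi_{j+1}'^{-1}+l_j\bo.
\ees
Reading off $\l W_j\inv=\Phi_{j+1}\Phi_{j+1}'^{-1}$ and writing $\Phi_j\Phi_{j+1}'^{-1}=\bigl(\Phi_{j+1}'\Phi_j\inv\bigr)\inv$, the two displayed identities combine to
\bes
\l\,W_j(\l)\inv\=\l l_j\bo+\Bigl(M_j+\tfrac1\l\,\Phi'_j\Phi_j\inv\Bigr)\inv\=\l l_j\bo+\bigl(M_j+W_{j-1}(\l)\bigr)\inv,
\ees
i.e.
\be
W_j(\l)\=\cfrac{1}{\l l_j\bo+\cfrac{1}{M_j+W_{j-1}(\l)}}.
\ee
Iterating this from $j=0$ up to $j=d$, with the base case $W_0=\frac1{\l l_0}\bo$ inserted at the bottom, yields precisely \eqref{partial-frac}.

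The steps that need care — the ``hard part'' — are the invertibility and well-definedness issues: one must know that $\Phi_j(\l)$ and $\Phi'_j(\l)$ are invertible wherever these manipulations take place, and that $M_j+W_{j-1}(\l)$ is invertible at each stage. Invertibility of $\Phi_j$ for non-real $\l$ follows from the Lemma preceding \autoref{Gl} (a non-invertible $\Phi_j$ forces $\l$ to be a Dirichlet eigenvalue of the truncated beam on $[-1,x_j]$, hence real); invertibility of $\Phi'_j$ follows similarly from the dual normalization, since $\Phi'_j$ is the fundamental matrix for the beam run from $x_j$ with initial data shifted appropriately, and a kernel vector would again produce a real eigenvalue of a truncated problem. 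Once these facts are in hand the computation is a purely algebraic unwinding, and \eqref{partial-frac} then extends to all $\l$ by analytic continuation, the expression being a ratio of matrix polynomials with the indicated poles. I would therefore organize the write-up as: (i) define $W_j$ and record invertibility of $\Phi_j,\Phi'_j$; (ii) prove the one-step recursion above; (iii) check the base case; (iv) iterate.
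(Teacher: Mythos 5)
Your proposal is correct and takes essentially the same route as the paper: the paper likewise sets $W_j=\l\inv\Phi'_j\Phi_j\inv$ (your indexing is shifted by one), uses the two recursion relations $\Phi_{j+1}=\Phi_j+l_j\Phi'_{j+1}$, $\Phi'_{j+1}=\Phi'_j+\l M_j\Phi_j$ to obtain the one-step identity $W_j\inv=\l l_j\bo+(M_j+W_{j-1})\inv$, and iterates down to the seed $(\l l_0\bo)\inv$, simply noting invertibility of all $\Phi_j,\Phi'_j$ for $|\l|$ large instead of your non-real-$\l$ argument. One small slip to fix: since $W_j\inv=\l\,\Phi_{j+1}(\Phi'_{j+1})\inv$, your intermediate displays should have $W_j\inv$ (not $\l W_j\inv$) on the left; the final recursion you state is the correct one.
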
 
\begin{proof} 
Let $W_j=\lambda\inv \Phi'_j\Phi_j^{-1}$.  
Note that, for $\lambda$ large enough, all  $\Phi'_j$ and $\Phi_j$ are invertible.  
The relations
\bes
\Phi_{d+1}\=\Phi_d+l_d \Phi'_{d+1},\qquad
\Phi'_{d+1}=\Phi'_d+\lambda M_d\Phi_d, 
\ees
imply that
\bes
\Phi_{d+1}(\Phi'_{d+1})^{-1}=(\lambda M_d +\Phi'_d\Phi_d^{-1})^{-1} +l_d \mathbf{1}, 
\ees
hence 
\bes
W_{d+1}^{-1}=\l l_d \mathbf{1} +(M_d+W_d)^{-1}.  
\ees
Inverting this expression we obtain
\bes
W_{d+1}=[\l l_d \mathbf{1} +(M_d+W_{d} )^{-1}]^{-1}.
\ees
Iterating down to $W_1=(\l l_0 \mathbf{1})^{-1}$ concludes the proof.  
\qquad \end{proof} 

We want to reverse this procedure and recover the data $\{l_j\}$ and $\{M_j\}$ from the
function $W$.  We follow the procedure of Stieltjes \cite{stieltjes}, starting with the
determination of certain Pad\'e approximants of $W$.  At step zero, let
\bes
P_0\=0,\quad Q_0\=\bo,
\ees
so 
\be\label{Pade0}
Q_0 W\=P_0+O(\l\inv);\qquad W\=Q_0\inv P_0+O(\l\inv).
\ee
To proceed, we note that 
\be\label{Tj-inverse}
T_j\inv\=\bsm \bo &-l_j\bo\\ -\l M_j& \bo+\l l_j M_j\esm.
\ee
Therefore the identity \eqref{Tj} implies that
\bes
T_d\inv \bsm \Phi_{d+1}\\ \Phi'_{d+1}\esm\=
\bsm \Phi_{d+1}-l_d \Phi'_{d+1}\\ -\l M_d\Phi_{d+1}+(\bo+\l l_d M_d)\Phi'_{d+1} \esm
\=\bsm\Phi_d\\ \Phi'_d\esm.
\ees
Multiplying each (block) row on the right by $\Phi_{d+1}\inv$, we obtain
\bes
\bsm \bo-l_d(\l W)\\ -\l M_d+(1+\l l_dM_d)\l W\esm\=\bsm \Phi_d\Phi_{d+1}\inv\\
\Phi'_d\Phi_{d+1}\inv\esm.
\ees
The two equations for $W$ can be rewritten as
\bea
\l l_d W&=&\bo-\Phi_d\Phi_{d+1}\inv\=\bo+O(\l\inv);\label{Pade1}\\
(\bo+\l l_dM_d)W&=&M_d-\l\inv \Phi'_d\Phi_{d+1}\inv\=M_d+O(\l^{-2})\label{Pade2}.
\eea
Set 
\be\label{PQ12}
P_1=\bo,\quad Q_1\=\l l_d\bo;\qquad P_2\=M_d,\quad Q_2\=\l l_dM_d+\bo.
\ee
Then $Q_1\inv P_1$ and $Q_2\inv P_2$ are Pad\'e approximants to $W$ on the left:

\be\label{Pade12}
W\=Q_1\inv P_1+O(\l^{-2}),\qquad W\=Q_2\inv P_2+O(\l^{-3}).
\ee
These two approximates are uniquely determined by the conditions $P_1(0)=\bo$, 
$Q_2(0)=\bo$, respectively; see the next section.

This process can be continued.  We have
\be\label{fromd-j+1}
[T_dT_{d-1}\cdots T_{d-j+1}]\inv \bsm \Phi_{d+1}\\ \Phi'_{d+1}\esm\=\bsm \Phi_{d-j+1}
\\ \Phi'_{d-j+1}\esm.
\ee
Let us write, in a temporary notation for this section only,
\be\label{fromd-j+1b}
[T_d\dots T_{d-j+1}]\inv\=\bsm a_j(\l)&-b_j(\l)\\ -c_j(\l)& d_j(\l)\esm, \quad 1\le j\le d.
\ee

\begin{lemma}\label{abc}  
(a)\ The polynomials $a_j$ and $b_j$ have degree $j-1$; the polynomials
$c_j$ and $d_j$ have degree $j$, $1\le j\le d$.\end{lemma}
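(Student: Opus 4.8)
The plan is to proceed by induction on $j$, exactly as for \autoref{Phij-degree}, using the explicit block-triangular/anti-triangular structure of the $T_j$'s to track degrees. First I would record the base case $j=1$: from \eqref{Tj-inverse} we have
\[
[T_d]\inv\=\bsm \bo &-l_d\bo\\ -\l M_d& \bo+\l l_d M_d\esm,
\]
so $a_1=\bo$, $b_1=l_d\bo$ have degree $0=j-1$, while $c_1=\l M_d$ and $d_1=\bo+\l l_d M_d$ have degree $1=j$. For the inductive step I would write $[T_d\cdots T_{d-j}]\inv = [T_d\cdots T_{d-j+1}]\inv\, T_{d-j}\inv$, substitute the inductive form \eqref{fromd-j+1b} together with \eqref{Tj-inverse} for $T_{d-j}\inv$, and multiply out the four $2\times2$ block entries. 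Each new entry is a sum of two terms: one coming from the $\bo$-type blocks of $T_{d-j}\inv$ (which does not raise the degree) and one coming from the $\l M_{d-j}$ or $\bo+\l l_{d-j}M_{d-j}$ block (which raises the degree by exactly one). A short bookkeeping check then shows that $a_{j+1},b_{j+1}$ pick up degree $j = (j+1)-1$ and $c_{j+1},d_{j+1}$ pick up degree $j+1$, as claimed.

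Concretely, denoting $T_{d-j}\inv=\bsm \bo &-l\bo\\ -\l N& \bo+\l l N\esm$ with $l=l_{d-j}$, $N=M_{d-j}$, the product gives
\[
\bsm a_{j+1}&-b_{j+1}\\ -c_{j+1}&d_{j+1}\esm
\=\bsm a_j+\l b_j N& -l a_j - b_j(\bo+\l l N)\\
 -c_j-\l d_j N& -l c_j - d_j(\bo + \l l N)\esm.
\]
By the inductive hypothesis $a_j,b_j$ have degree $j-1$, so $a_j+\l b_jN$ has degree $j$, and $la_j+b_j(\bo+\l lN)$ has degree $j$; likewise $c_j,d_j$ have degree $j$, so $c_j+\l d_jN$ and $lc_j+d_j(\bo+\l lN)$ have degree $j+1$. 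This is precisely the assertion for index $j+1$, completing the induction. (One should also note in passing the even/odd parity structure — diagonal even part, off-diagonal odd part — which is inherited the same way and is used elsewhere, though strictly speaking part (a) only asks for the degrees.)

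The only subtlety — and the place where one must be slightly careful rather than merely routine — is making sure the top-degree coefficients do not accidentally cancel, since we want the degrees to be \emph{exactly} $j-1$ and $j$, not merely bounded by them. For the scalar Stieltjes case this is automatic from positivity of the masses; here the relevant leading coefficients are products and sums of the $M_{d-k}$ (which under the discrete-beam hypothesis $m_jn_j>0$ are invertible off-diagonal matrices) and the positive lengths $l_{d-k}$, so the leading term of each entry is a nonzero matrix built from an alternating product of the $M$'s and $l$'s with no sign cancellation. I expect this non-degeneracy check to be the main (mild) obstacle; everything else is the straightforward block matrix multiplication above.
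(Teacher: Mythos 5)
Your overall strategy --- induction on $j$ from the base case $T_d\inv$, tracking degrees through the $2\times2$ block multiplication --- is exactly the paper's. However, your inductive step has the factors in the wrong order, and since these matrices do not commute this is a genuine error rather than a notational one: from $[T_dT_{d-1}\cdots T_{d-j}]\inv = T_{d-j}\inv\,[T_dT_{d-1}\cdots T_{d-j+1}]\inv$ the new factor must multiply on the \emph{left}, whereas you wrote $[T_d\cdots T_{d-j+1}]\inv\,T_{d-j}\inv$, which is the inverse of $T_{d-j}T_d\cdots T_{d-j+1}$ --- a different family of matrices from the $a_j,b_j,c_j,d_j$ defined in \eqref{fromd-j+1b}, so your induction as written does not speak about the polynomials in the statement. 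The correct transition (the paper's \eqref{transition}) reads $a_{j+1}=a_j+l_{d-j}c_j$, $b_{j+1}=b_j+l_{d-j}d_j$, $c_{j+1}=\l M_{d-j}a_j+(\bo+\l l_{d-j}M_{d-j})c_j$, $d_{j+1}=\l M_{d-j}b_j+(\bo+\l l_{d-j}M_{d-j})d_j$. Fortunately the degree bookkeeping is insensitive to which side you multiply on, so once the order is fixed your argument goes through essentially verbatim; note also the sign slip in your $(2,2)$ entry, which under your own (right-multiplication) convention should be $+lc_j+d_j(\bo+\l l N)$, not its negative.

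Your concern about the degrees being \emph{exact} (no cancellation of top coefficients) is legitimate, and it is resolved the way you guess: with the correct recursion the leading coefficients satisfy $\la a_j\ra=l_{d-j+1}\la c_{j-1}\ra$, $\la b_j\ra=l_{d-j+1}\la d_{j-1}\ra$, $\la c_j\ra=l_{d-j+1}M_{d-j+1}\la c_{j-1}\ra$, $\la d_j\ra=l_{d-j+1}M_{d-j+1}\la d_{j-1}\ra$ (this is the paper's \eqref{principal-recursion}, used later for Proposition \ref{leading}), so each leading coefficient is a single alternating product of positive lengths $l_k$ and off-diagonal matrices $M_k$ with positive entries --- there are no sums at top order, hence nothing to cancel. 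The paper compresses all of this into ``follows easily by induction,'' so your extra care on this point is reasonable; just be sure to attach it to the correctly ordered recursion.
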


\sms\nin (b)\ For each $1\le j\le d$, $a_j(0)=d_j(0)\=\bo$,\ $b_j(0)=c_j(0)=\bz$.

\sms\nin(c)\ The coefficients of even powers in $a_j$,$b_j$, $c_j$ and $d_j$ are
diagonal and the coefficients of odd powers are off-diagonal.

\proof Note that each of these statements is true at $j=1$:
\bes
\bsm a_1 & -b_1\\ -c_1& d_1\esm\=T_d\inv\=\bsm \bo&-l_d\bo\\ -\l M_d& \bo+\l l_d M_d.\esm
\ees
Note that
\bea
&&\bsm a_{j+1}&-b_{j+1}\\ -c_{j+1}& d_{j+1}\esm\= T_{d-j}\inv
\bsm a_j&-b_j\\ -c_j& d_j\esm \label{transition}\\
&&=\bsm a_j+l_{d-j}c_j & -b_j-l_{d-j}d_j\\ -
\l M_{d-j} a_j-(\bo+\l l_{d-j}M_{d-j})c_j&
\l M_{d-j}b_j+(\bo+\l l_{d-j}M_{d-j})d_j\esm.\nonumber
\eea
The assertion (a) follows by easily by induction.  Each $T_j(0)\=\bo$, which implies (b).
Assertion (c) follows from the fact that 
multiplication by any entry of $T_j\inv$ preserves these properties.\qquad$\qed$

\mds
In analogy with the computations that led to \eqref{Pade1} and \eqref{Pade2}, we
multiply each (block) row of the identity
\bes
\bsm \Phi_{d-j+1}\\ \Phi'_{d-j+1}\esm\=\bsm a_j&-b_j\\ -c_j&d_j\esm\bsm\Phi_{d+1}\\ \Phi'_{d+1}\esm
\ees
on the right by $\Phi_{d+1}\inv$ and obtain the equations
\beas
\l b_j W&=&a_{j}-\Phi_{d-j+1}\Phi_{d+1}\inv;\\
d_{j}W&=& \l\inv c_{j}+\l\inv\Phi'_{d-j+1}\Phi_{d+1}\inv.
\eeas
Accordingly, and consistent with previous definitions for $j=1$,
\bea
P_{2j-1}&=&a_j,\qquad  Q_{2j-1}=\l b_j,\quad 1\le j\le d;\label{PQ2j-1}\\
P_{2j}&=&\l\inv c_j,\qquad Q_{2j}\=d_j,\quad 0\le j\le d.\label{PQ2j}
\eea
Note that since $c_j(0)=0$, each of the $P_k$, $Q_k$ is a polynomial.

In view of \eqref{Pade0} \eqref{Pade1}, \eqref{Pade2}, and Lemma \ref{abc}, we have

\begin{proposition}\label{Pade-properties} 
The polynomials $P_k$, $Q_k$, $1\le k\le 2d$, have the properties

\sms\nin (a) $Q_{2j-1}$ and $Q_{2j}$ have degree $j$, $P_{2j-1}$ and $P_{2j}$ have degree  
$j-1$;

\sms\nin (b) The coefficient of odd powers of $Q_{2j-1}$ are diagonal, and the coefficients
of even powers are off-diagonal;

\sms\nin (c) The coefficient of even powers of $Q_{2j}$ are diagonal, and the coefficients
of odd powers are off-diagonal.

\sms Moreover
\bea
Q_{2j-1}W&=&P_{2j-1}+O(\l^{-j});\label{WQodd}\\
Q_{2j}W&=&P_{2j}+O(\l^{-j-1}).\label{WQeven}
\eea
\end{proposition}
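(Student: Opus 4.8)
The plan is to read off the degree and parity statements directly from Lemma~\ref{abc} together with the explicit definitions \eqref{PQ2j-1}, \eqref{PQ2j}, and then to verify the two asymptotic expansions \eqref{WQodd}, \eqref{WQeven} by going back to the identities used to obtain \eqref{Pade1}, \eqref{Pade2} in the general $j$ case. First I would handle part~(a): by Lemma~\ref{abc}(a) the polynomials $a_j$, $b_j$ have degree $j-1$ and $c_j$, $d_j$ have degree $j$. Since $P_{2j-1}=a_j$ and $P_{2j}=\l\inv c_j$ (and $c_j(0)=\bz$ by Lemma~\ref{abc}(b), so this is genuinely a polynomial), both have degree $j-1$. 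Since $Q_{2j-1}=\l b_j$ and $Q_{2j}=d_j$, both have degree $j$. For $j=1$ one checks consistency against \eqref{PQ12}. Parts~(b) and~(c) are immediate from Lemma~\ref{abc}(c): the even/odd-power coefficients of $a_j,b_j,c_j,d_j$ are respectively diagonal/off-diagonal, and multiplying $b_j$ (resp.\ $c_j$) by $\l$ (resp.\ $\l\inv$) shifts the parity of the power by one, so $Q_{2j-1}=\l b_j$ has diagonal coefficients at odd powers and off-diagonal at even powers, while $Q_{2j}=d_j$ has diagonal at even powers and off-diagonal at odd powers.

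Next I would establish the asymptotic relations. Starting from \eqref{fromd-j+1} and \eqref{fromd-j+1b}, write
\bes
\bsm \Phi_{d-j+1}\\ \Phi'_{d-j+1}\esm\=\bsm a_j&-b_j\\ -c_j&d_j\esm\bsm\Phi_{d+1}\\ \Phi'_{d+1}\esm,
\ees
and multiply each block row on the right by $\Phi_{d+1}\inv$, exactly as was done for $j=1$ in the derivation of \eqref{Pade1}, \eqref{Pade2}. Using $\l W=\Phi'_{d+1}\Phi_{d+1}\inv$ this gives
\beas
\l b_j W&=&a_j-\Phi_{d-j+1}\Phi_{d+1}\inv,\\
d_j W&=&\l\inv c_j+\l\inv\Phi'_{d-j+1}\Phi_{d+1}\inv.
\eeas
So the error terms are, respectively, $-\Phi_{d-j+1}\Phi_{d+1}\inv$ and $\l\inv\Phi'_{d-j+1}\Phi_{d+1}\inv$, and the task reduces to estimating these as $\l\to\infty$.

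The main obstacle is precisely the order of vanishing of these error terms. From Lemma~\ref{Phij-degree}, $\Phi_{d+1}$ is a polynomial of degree $d$ in $\l$ and $\Phi_{d-j+1}$ has degree $d-j$, so $\Phi_{d-j+1}\Phi_{d+1}\inv=O(\l^{-j})$, which gives \eqref{WQodd}; similarly $\Phi'_{d-j+1}$ has degree $d-j$ (it is a polynomial of degree $(d-j+1)-1=d-j$), so $\l\inv\Phi'_{d-j+1}\Phi_{d+1}\inv=O(\l^{-j-1})$, which gives \eqref{WQeven}. Here one must be slightly careful that $\Phi_{d+1}\inv$ really behaves like $\l^{-d}$ times a bounded matrix as $\l\to\infty$: the leading coefficient of $\det\Phi_{d+1}$ in $\l$ must be nonzero. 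This follows from the recursion \eqref{Tj}, since the top-degree part of $\Phi_{d+1}$ is $\l^d l_d l_{d-1}\cdots l_0\, M_d M_{d-1}\cdots M_1$ (read off by iterating the upper-left block $\bo+\l l_j M_j$ and the coupling $\l M_j$), and this is invertible because $m_jn_j>0$ for all $j$ forces each $M_j$ invertible. With that point settled the degree bookkeeping closes the argument, and the uniqueness remark recorded after \eqref{Pade12} — that $Q_1\inv P_1$, $Q_2\inv P_2$ are pinned down by $P_1(0)=\bo$, $Q_2(0)=\bo$ — is deferred to the normalization discussion of the next section.
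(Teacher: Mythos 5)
Your proposal is correct and follows essentially the same route as the paper: the degree and parity claims are read off from Lemma \ref{abc} via the definitions \eqref{PQ2j-1}, \eqref{PQ2j}, and the asymptotics \eqref{WQodd}, \eqref{WQeven} come from the identities $\l b_j W = a_j - \Phi_{d-j+1}\Phi_{d+1}\inv$ and $d_j W = \l\inv c_j + \l\inv\Phi'_{d-j+1}\Phi_{d+1}\inv$ together with the degree count of Lemma \ref{Phij-degree}. Your added verification that $\Phi_{d+1}\inv = O(\l^{-d})$, via invertibility of the leading coefficient $l_d\cdots l_0\,M_d\cdots M_1$, is a detail the paper leaves implicit and is correctly handled.
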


\mds
In the next section we treat the {\it inverse problem\/}: the problem of recovering the
beam data $\{l_j\}$, $\{M_j\}$ from $W$.  The final step of the process described there
uses the fact that the data can be recovered from the leading coefficients of the 
polynomials $\{Q_k\}$.

\begin{proposition}\label{leading} Let $\la Q_k\ra$ denote the leading coefficient of
$Q_k$.  Then for $1\le j\le d$,
\bea
\la Q_{2j-1}\ra \la Q_{2j-2}\ra\inv&=& l_{d-j+1}\bo;\label{l-quotient}\\
\la Q_{2j}\ra \la Q_{2j-1}\ra\inv&=&M_{d-j+1}.\label{M-quotient}
\eea
\end{proposition}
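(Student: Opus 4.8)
The plan is to extract both identities from the transition relation \eqref{transition} by tracking leading coefficients only. Extend the bracket notation of the statement by writing $\la b_j\ra$ and $\la d_j\ra$ for the leading coefficients of the polynomials $b_j$ and $d_j$. By \eqref{PQ2j-1} and \eqref{PQ2j}, $Q_{2j-1}=\l b_j$ and $Q_{2j}=d_j$, so (using the empty-product convention $d_0=Q_0=\bo$) we have $\la Q_{2j-1}\ra=\la b_j\ra$, $\la Q_{2j}\ra=\la d_j\ra$ and $\la Q_{2j-2}\ra=\la d_{j-1}\ra$; thus the proposition reduces to a statement about $\la b_j\ra$ and $\la d_j\ra$ alone.

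First I would replace $j$ by $j-1$ in the last block-column of \eqref{transition} to obtain
\bes
b_j\=b_{j-1}+l_{d-j+1}\,d_{j-1},\qquad
d_j\=\l M_{d-j+1}b_{j-1}+(\bo+\l\, l_{d-j+1}M_{d-j+1})\,d_{j-1},\qquad 1\le j\le d.
\ees
By Lemma \ref{abc}(a), $\deg b_{j-1}=j-2$ and $\deg d_{j-1}=j-1$. Hence in the first recursion the $\l^{j-1}$-coefficient comes solely from $l_{d-j+1}d_{j-1}$, while in the second the $\l^{j}$-coefficient comes solely from $\l\, l_{d-j+1}M_{d-j+1}d_{j-1}$, the cross term $\l M_{d-j+1}b_{j-1}$ having degree only $j-1$. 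Therefore
\bes
\la b_j\ra\=l_{d-j+1}\la d_{j-1}\ra,\qquad\la d_j\ra\=l_{d-j+1}M_{d-j+1}\la d_{j-1}\ra,\qquad 1\le j\le d.
\ees
Iterating the second relation down to $\la d_0\ra=\bo$ gives $\la d_j\ra=\bigl(\prod_{k=d-j+1}^{d}l_k\bigr)M_{d-j+1}M_{d-j+2}\cdots M_d$; since each $l_k>0$ and each $M_k$ is invertible (here the standing hypothesis $m_kn_k>0$ is used), every $\la d_j\ra$ is invertible, and then so is every $\la b_j\ra=l_{d-j+1}\la d_{j-1}\ra$, so the quotients in the statement are well defined.

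The conclusion is then immediate: since the $l_k$ are scalars, hence central,
\bes
\la Q_{2j-1}\ra\la Q_{2j-2}\ra\inv\=\la b_j\ra\la d_{j-1}\ra\inv\=l_{d-j+1}\la d_{j-1}\ra\la d_{j-1}\ra\inv\=l_{d-j+1}\bo,
\ees
\bes
\la Q_{2j}\ra\la Q_{2j-1}\ra\inv\=\la d_j\ra\la b_j\ra\inv\=l_{d-j+1}M_{d-j+1}\la d_{j-1}\ra\bigl(l_{d-j+1}\la d_{j-1}\ra\bigr)\inv\=M_{d-j+1},
\ees
for $1\le j\le d$, which are \eqref{l-quotient} and \eqref{M-quotient}. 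There is no deep obstacle; the only points demanding care are that in \eqref{transition} the new factor $T_{d-j}\inv$ multiplies on the \emph{left}, so $\la d_j\ra$ accumulates the $M$-factors on the left in precisely the order that makes the cancellation $l_{d-j+1}M_{d-j+1}l_{d-j+1}\inv=M_{d-j+1}$ valid, and the degree count from Lemma \ref{abc}(a), which is exactly what discards the cross term $\l M_{d-j+1}b_{j-1}$ when computing $\la d_j\ra$.
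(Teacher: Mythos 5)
Your proof is correct and takes essentially the same route as the paper's: the paper likewise extracts the leading-coefficient recursion $\la b_j\ra=l_{d-j+1}\la d_{j-1}\ra$, $\la d_j\ra=l_{d-j+1}M_{d-j+1}\la d_{j-1}\ra$ from \eqref{transition} and Lemma \ref{abc}(a), and then forms the quotients via $Q_{2j-1}=\l b_j$, $Q_{2j}=d_j$. Your explicit product formula verifying invertibility of the leading coefficients is a small addition the paper leaves implicit, not a different argument.
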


\proof Let $\la a_j\ra$, $\la b_j\ra$, $\la c_j\ra$, $\la d_j\ra$ denote the leading
coefficients of $a_j$, $b_j$, $c_j$, $d_j$.  Because of Lemma \ref{abc} (a) and 
\eqref{transition}, it follows that  the recursion for the matrix of principal 
coefficients is given by
\be\label{principal-recursion}
\bsm \la a_{j}\ra & -\la b_{j}\ra\\ -\la c_{j}\ra&\la d_{j}\ra\esm
\=\bsm l_{d-j+1}\la c_{j-1}\ra& -l_{d-j+1}\la d_{j-1} \ra    \\ -l_{d-j+1}M_{d-j+1} \la c_{j-1} \ra    
& l_{d-j+1}M_{d-j+1}\la d_{j-1}  \ra\esm.
\ee
At the first step, $Q_0=\bo$ and $\la Q_1\ra=l_d\bo$, so
\bes
\la Q_1\ra\la Q_0\ra\inv\=\la Q_1\ra\=l_d\bo.
\ees
At each subsequent step, \eqref{principal-recursion} implies that
\beas
\la Q_{2j-1}\ra&=&\la b_j\ra\=l_{d-j+1}\la d_{j-1}\ra\=l_{d-j+1}\la Q_{2j-2}\ra,
\eeas
which proves \eqref{l-quotient}. Similarly, at each step \eqref{principal-recursion} 
implies that
\bes
\la Q_{2j}\ra\=\la d_j\ra\=M_{d-j+1}\la b_j\ra\=M_{d-j+1}\la Q_{2j-1}\ra,
\ees
which proves \eqref{M-quotient}. \qquad$\qed$

\section{The inverse problem for the discrete beam} \label{sec:inverse problem} 

We shall show that the Weyl function $W$ has an asymptotic expansion
\be\label{W-asymptotics} 
W(\l)\= \frac1\l\,C_0+\frac1{\l^{2}}\,C_1+\dots\ \frac1{\l^{n+1}}\,C_n
+O\left(\frac1{\l^{n+2}}\right) \quad\hbox{as\ \ $\l\to\infty$.}
\ee

The denominators  $Q_k$ of the Pad\'e approximants to $W$ can be recovered
from this asymptotic expansion of $W$.  For example, subsitute the 
expansion \eqref{W-asymptotics} for $W$ in \eqref{WQodd} and expand.
Since $Q_{2j-1}$ has no constant term and the constant term of $P_{2j-1}$
is $\bo$, the term of order $0$ in the expansion is $\bo$ and the terms of order $-1,\dots,1-j$ 
in the expansion are zero.  Writing
\be\label{Qs-odd}
Q_{2j-1}\=\l^{j}Q^{(j-)}_{j}+\dots+\l^2Q^{(j-)}_2+\l Q^{(j-)}_1,
\ee
the resulting system of equations can be written
\bea
&&\bsm Q^{(j-)}_{1} & Q^{(j-)}_2&\dots&Q^{(j-)}_{j}\esm
\bsm C_0&C_1&\dots& C_{j-1}\\ C_1&C_2&\dots&C_j\\
& &\dots& \\ C_{j-1}&C_{j}&\dots& C_{2j-2}\esm\nonumber\\
&&\quad \=\bsm \bo&0&\dots&0\esm.\label{Qodd-system}
\eea
Write
\be\label{Qs-even}
Q_{2j}\=\l^{j}Q^{(j+)}_{j}+\dots+\l Q^{(j+)}_{1}+\bo.
\ee
Since $Q_{2j}(0)=\bo$, the same argument leads to the system 
\bea
&&\bsm Q^{(j+)}_{1} & Q^{(j+)}_2&\dots&Q^{(j+)}_{j}\esm
\bsm C_1&C_2&\dots& C_{j}\\ C_2&C_3&\dots&C_{j+1}\\
& &\dots& \\ C_j&C_{j+1}&\dots &C_{2j-1}\esm\nonumber\\
&&\quad \=-\bsm C_0&C_1&\dots&C_{j-1}\esm.
\label{Qeven-system}
\eea

In principle, the matrix equations \eqref{Qodd-system} and \eqref{Qeven-system}, considered as
scalar equations, consist of $4j$ linear equations in $4j$ unknowns.  However we know 
that each of the coefficients of $Q^{(j^\pm)}$ is either a diagonal or an off-diagonal 
matrix, so there are only $2d$ unknowns.  Moreover, as we shall show, the same is true of each 
of the matrices $C_k$, so the associated $2j\times 2j$ matrix for these equations has only
$2\cdot j^2$ non-zero entries.  As we shall show, each system \eqref{Qodd-system} and
\eqref{Qeven-system} decomposes easily into two uncoupled systems of $j$ equations in $j$
unknowns, permitting simple formulas for the leading coefficients.  

To understand the $C_k$, we return to the formula \eqref{weyl-series2} for $W$: 

\bes
W(\l)\=\frac1{2\l}\bo+\sum_{\nu=-d}^{-1}\frac1{\l-\l_\nu}\bsm \a_\nu\b_\nu& -\a_\nu^2\\ 
-\b_\nu^2& \a_\nu\b_\nu\esm
+\sum_{\nu=1}^d\frac1{\l-\l_\nu}\bsm \a_\nu\b_\nu& \a_\nu^2\\ \b_\nu^2& \a_\nu\b_\nu\esm, 
\ees
where $\a_\nu=\a_{-\nu}$ and  $\b_\nu=\b_{-\nu}$ are positive.  

For large $|\l|$, $(\l-\l_\nu)\inv=\sum_{n=0}^\infty \l_\nu^k/\l^{k+1}$, so 
\be\label{C0}
C_0\=
\left[\frac12+\sum_{\nu=1}^d 2\a_\nu\b_\nu\right]\bo,
\ee
and
\bes
C_k\=\sum_{\nu=1}^d \left\{\l_{-\nu}^k\bsm \a_\nu\b_\nu& -\a_\nu^2\\ -\b_\nu^2& \a_\nu\b_\nu\esm
+\l_\nu^k\bsm \a_\nu\b_\nu& \a_\nu^2\\ \b_\nu^2& \a_\nu\b_\nu \esm\right\},\qquad k\ge 1.
\ees
Recall that $\nu$ and $\l_\nu$ have opposite signs, so
\bes
C_k\=\begin{cases}\ \ \sum_{\nu=1}^d 2|\l_\nu|^k\bsm 0& -\a_\nu^2\\ -\b_\nu^2&0\esm,     
&\ \hbox{$k$\ \ odd};\\
& \\
\ \ \sum_{\nu=1}^d 2|\l_\nu|^k\bsm \a_\nu\b_\nu &0\\ 0&\a_\nu\b_\nu\esm, &\ 
\hbox{$k$\ \  even,\ \ \ $k\ge 2$.}\end{cases}
\ees
Thus
\be\label{C_k}
C_k\=\bsm a_k&0\\0&a_k\esm,\quad k\ \ \hbox{even};\qquad
C_k\=\bsm 0&b_k\\ c_k&0\esm,\quad k\ \ \hbox{odd},
\ee
where
\beas a_0&=&\frac12+2\sum_{\nu=1}^d \a_\nu\b_\nu;\qquad a_{k}
\=2\sum_{\nu=1}^d|\l_\nu|^{k} \a_\nu\b_\nu,\quad k \  {  \rm even }, \, k\ge 2;\\ 
b_k&=&-\sum_{\nu>0}2|\l_\nu|^k\a_\nu^2;\qquad
c_k=-\sum_{\nu>0}2|\l_\nu|^k \b_\nu^2,\quad\ \ \  k\ \ {\rm odd.}
\eeas

Let us consider the systems \eqref{Qodd-system} and \eqref{Qeven-system} for $j=2$:
\be\label{2-systems}
\left[Q^{(2-)}_1\ Q^{(2-)}_2\right]\bsm C_0&C_1\\ C_1& C_2\esm\=[\bo\ \ \bz];
\qquad \left[Q^{(2+)}_1\ Q^{(2+)}_2\right]\bsm C_1&C_2\\ C_2& C_3\esm\=-[C_0\ C_1].
\ee
The key structural fact here is that each row or column consists of one diagonal matrix
and one off-diagonal matrix.  For larger values of $j$ there is a similar structure, with
diagonal matrices and off-diagonal matrices alternating.  Filling in the entries, the
first of the systems \eqref{2-systems} is
\be\label{Q2minus}
\bsm x_{11}&0&0&x_{21}\\ 0&x_{12}&x_{22}&0\esm
\bsm a_0&0&0&b_1\\ 0&a_0&c_1&0\\ 0&b_1& a_2&0\\ c_1&0&0&a_2\esm\=\bsm 1&0&0&0\\
0&1&0&0\esm,
\ee
where $x_{k1}$, $x_{k2}$ are the non-zero elements in the first and second rows
of the coefficient $Q^{(2-)}_k$, respectively. 

Because of the way that the positions of zero and non-zero elements in the
rows and columns either match or complement each other, there are cancellations.
For example, the product of the first row of the matrix on the left with the second
or third columns of the matrix on the right is zero.  Therefore the four equations
associated to the first row reduce to two, which can be written as a system
\be\label{Q2minus1}
\bsm x_{11}& x_{21}\esm\bsm a_0& b_1\\ c_1&a_2\esm \=\bsm 1& 0\esm.
\ee
Similarly, the four equations associated with the second row reduce to
\be\label{Q2minus2}
\bsm x_{12}& x_{22}\esm \bsm a_0& c_1\\ b_1&a_2\esm \=\bsm 1& 0\esm.
\ee
A second way to organize this is by a suitable permutation of rows and columns,
so that rows with the same pattern of zero entries are juxtaposed, and the same
for columns.  Then the original system of $8$ equations becomes
\be\label{Q2minus-reduction}
\bsm x_{11}& x_{21}&0&0\\0&0& x_{12}& x_{22} \esm
\bsm a_0 & b_1&0&0\\ c_1& a_2&0&0\\ 0&0& a_0&c_1\\0&0& b_1&a_2\esm
\=\bsm 1&0&0&0\\ 0&0&1&0\esm.  
\ee
The same procedure applies in general to the equations for the coefficients
$Q^{(j-)}_k$ of $Q_{2j-1}$, yielding an equivalent form in which 
the original $2j\times 2j$ matrix is reduced to a diagonal form
with two $j\times j$ matrices, adjoints of each other, on the diagonal. We write this
explicitly below.

\sms
A similar analysis of the second of the systems \eqref{2-systems} yields a
different form of canonical reduction.  Here the system has the form
\be\label{Q2plus} \bsm 0&x_{11}&x_{21}&0\\ x_{12}&0&0&x_{22}\esm 
\bsm 0&b_1&a_2&0\\ c_1&0&0&a_2\\ a_2&0&0&b_3\\ 0&a_2&c_3&0\esm\= 
-\bsm a_0&0&0&b_1\\ 0&a_0&c_1&0\esm
\ee 
where $x_{k1}$ and $x_{k2}$ are the non-zero entries of the first and
second rows of the coefficient $Q^{(2+)}_k$, respectively. Again the
positioning of the zeros in the rows and columns tells us that these
equations reduce to two uncoupled systems
\bea
\bsm x_{11}&x_{21}\esm \bsm c_1&a_2\\ a_2&b_3\esm&=& -\bsm a_0&b_1\esm;\label{Q2plus1}\\
\bsm x_{12}&x_{22}\esm \bsm b_1& a_2\\ a_2&c_3\esm&=&-\bsm a_0&c_1\esm.\label{Q2plus2}  
\eea
As in the case of \eqref{Q2minus1}, \eqref{Q2minus2}, the system \eqref{Q2plus} can 
be rearranged to the form
\be\label{Q2plus-reduction}
\bsm x_{11}&x_{21}&0&0\\ 0&0&x_{12}&x_{22}\esm 
\bsm 0&0&c_1&a_2\\0&0& a_2&b_3\\ b_1&a_2&0&0\\ a_2&c_3&0&0\esm\=
 -\bsm 0&0&a_0&b_1\\ a_0&c_1&0&0\esm.
\ee

Let us pass to the general case for the coefficients $Q_k^{(2j\pm)}$ of $Q_{2j-1}$
and $Q_{2j}$.
We start with the $(2d+2)\times (2d+2)$ Hankel matrix
\be\label{Hankel}
H\=\bsm C_0&C_1&C_2&\dots&C_{d}\\ C_1&C_2&C_3&\dots &C_{d+1}\\
C_2&C_3&C_4&\dots &C_{d+2}\\
&&&\ddots& \\
C_{d}&C_{d+1}&C_{d+2}&\dots& C_{2d}\esm.
\ee
Writing out the $2\times 2$ blocks,
\be\label{Hankelb}
H\=\bsm a_0&0&0&b_1&a_2&0&0&b_3\ \ \dots\ \\
        0&a_0&c_1&0&0&a_2&c_3&0\ \ \dots\ \\
        0&b_1&a_2&0&0&b_3&a_4&0\ \ \dots\ \\
        c_1&0&0&a_2&c_3&0&0&a_4\ \dots\ \\
        a_2&0&0&b_3&a_4&0&0&b_5 \ \dots\ \\
        0&a_2&c_3&0&0&a_4&c_5&0\ \ \dots\ \\
        0&b_3&a_4&0&0&b_5&a_6&0\ \dots\ \\
        c_3&0&0&a_4&c_5&0&0&c_7\ \dots\ \\      

         &&\ddots&&&\ddots&& \esm
\ee
In a notation that is best explained by \eqref{Q2minus-reduction}
we introduce  two $(d+1)\times (d+1)$ matrices constructed by reorganizing $H$: 
\bes
H^{NW}\=\bsm a_0&b_1&a_2&b_3\ \  \dots \\ 
              c_1&a_2&c_3&a_4\ \ \dots \\ 
              a_2&b_3&a_4&b_5\ \ \dots \\ 
              c_3&a_4&c_5&a_6\ \ \dots \\ 
              &&\ddots&\esm;\quad\ \
H^{SE}\=\bsm a_0&c_1&a_2&c_3\ \ \ \dots \\
             b_1&a_2&b_3&a_4\ \ \dots \\
             a_2&c_3&a_4&c_5\ \ \dots \\
             b_3&a_4&b_5&a_6\ \ \dots \\
             &&\ddots&\esm.
\ees
We denote the $j\times j$ principal minors of $H^{NW}$ and $H^{SE}$ by $H^{NW}_j$ and
$H^{SE}_j$, respectively.  Note that they are transposes of each other: 
\bes
[H^{NW}_j]^t \= H^{SE}_j.
\ees
Therefore they have the same determinant 
\be\label{HNEt}
\det H^{NW}_j\=\det H^{SE}_j\=\D_j.
\ee
Following the same procedure as for $Q_3$, the 
equations for the coefficients of $Q_{2j-1}$ are
\bea
\bsm x_{11}&x_{21}&\dots& x_{j1}\esm H^{NW}_j&=&\bsm 1&0&\dots&0\esm;\label{2j-11}\\
\bsm x_{12}&x_{22}&\dots& x_{j2}\esm H^{SE}_j&=&\bsm 1&0&\dots&0\esm,\label{2j-12}
\eea
where $x_{k1}$ and $x_{k2}$ are the non-zero elements in the first and second
rows of the coefficient of $\l^k$ in $Q_{2j-1}$.

As remarked in  \eqref{l-quotient} and \eqref{M-quotient}, we can reconstruct the beam data 
$\{l_j\}$, $\{M_j\}$ from the leading coefficients $\la Q_k\ra$ of the polynomials $\{Q_k\}$.
For $Q_{2j-1}$, we want to compute $x_{j1}$ and $x_{j2}$ in \eqref{2j-11}.  By Cramer's
rule, $x_{j1}$ can be obtained by replacing the last row of the matrix in \eqref{2j-11} by
the right-hand side of \eqref{2j-11} and computing the determinant. The same procedure for
\eqref{2j-12} gives
\be\label{2j-1top}
x_{j1}\= (-1)^{j-1}\frac{\D^{NW}_{j1}}{\D_j},\qquad x_{j2}\=
(-1)^{j-1}\frac{\D^{SE}_{j1}}{\D_j},
\ee
where $\D^{NW}_{j1}$ is the determinant of 
$H^{NW}_j$ with the first column and last row eliminated, and similarly for $\D^{SE}_{j1}$. 
By Proposition \ref{Pade-properties}, since  $Q_{2j-1}$ has degree $j$, the leading coefficient 
$\la Q_{2j-1}\ra$ is diagonal if $j$ is odd and off-diagonal if $j$ is even.  Thus we have

\begin{proposition} The leading coefficient of
$Q_{2j-1}$ is 
\be\label{leading2j-1a}
\la Q_{2j-1}\ra\=\bsm\frac{\D^{NW}_{j1}}{\D_j} &0\\0&\frac{\D^{SE}_{j1}}{\D_{j}}\esm
\ee
if $j$ is odd,
\be\label{leading2j-1b}
\la Q_{2j-1}\ra \=-\bsm 0&\frac{\D^{NW}_{j1}}{\D_j}\\
 \frac{\D^{SE}_{j1}}{\D_j}&0\ \esm
\ee 
if $j$ is even. \end{proposition}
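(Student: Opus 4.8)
The plan is to read off $\la Q_{2j-1}\ra$ — which by \eqref{Qs-odd} is the coefficient $Q^{(j-)}_j$ of $\l^j$ in $Q_{2j-1}$ — from the two decoupled linear systems \eqref{2j-11} and \eqref{2j-12}, and then to place its entries using the parity structure recorded in Proposition~\ref{Pade-properties}. First I would observe that $Q^{(j-)}_j$ has exactly two nonzero scalar entries, namely the numbers $x_{j1}$ and $x_{j2}$ occupying the last slot of the two unknown row vectors in \eqref{2j-11} and \eqref{2j-12} respectively. By Proposition~\ref{Pade-properties}(b) the coefficient of $\l^j$ in $Q_{2j-1}$ is a diagonal matrix when $j$ is odd, so $x_{j1}$ sits in position $(1,1)$ and $x_{j2}$ in position $(2,2)$; when $j$ is even it is off-diagonal, so $x_{j1}$ sits in position $(1,2)$ and $x_{j2}$ in position $(2,1)$.

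Next I would evaluate the two numbers. Each of \eqref{2j-11}, \eqref{2j-12} has a coefficient matrix, $H^{NW}_j$ or $H^{SE}_j$, of determinant $\D_j$ by \eqref{HNEt}, and right-hand side the basis row vector $(1,0,\dots,0)$. Transposing and applying Cramer's rule for the last component, one replaces the last column of $(H^{NW}_j)^t$ — equivalently the last row of $H^{NW}_j$ — by that basis vector; a cofactor expansion along the new column leaves only the $(1,j)$ term, which is $(-1)^{j+1}$ times the minor of $H^{NW}_j$ obtained by deleting the first column and the last row, i.e.\ $(-1)^{j-1}\D^{NW}_{j1}$. Dividing by $\D_j$ gives $x_{j1}=(-1)^{j-1}\D^{NW}_{j1}/\D_j$, and the identical computation for \eqref{2j-12} gives $x_{j2}=(-1)^{j-1}\D^{SE}_{j1}/\D_j$; this is exactly \eqref{2j-1top}, which is already recorded above, so this step can simply be cited rather than re-derived.

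Finally I would assemble the matrix. When $j$ is odd one has $(-1)^{j-1}=1$ and the two values sit on the diagonal, which gives \eqref{leading2j-1a}; when $j$ is even one has $(-1)^{j-1}=-1$ and the two values sit off the diagonal, so pulling the common sign out front gives \eqref{leading2j-1b}. I do not expect a real obstacle here: once \eqref{2j-11} and \eqref{2j-12} and the parity statement of Proposition~\ref{Pade-properties} are in hand, the proof is essentially bookkeeping. The one point that demands care is keeping straight which of $x_{j1}$, $x_{j2}$ lands in which matrix slot in each parity class, and carrying the sign $(-1)^{j-1}$ from Cramer's rule correctly, so that it is absorbed as the ``$+$'' in the odd case and as the overall ``$-$'' in the even case.
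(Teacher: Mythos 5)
Your proposal is correct and follows essentially the same route as the paper: the paper also obtains $x_{j1}$ and $x_{j2}$ from the decoupled systems \eqref{2j-11}, \eqref{2j-12} via Cramer's rule (that is exactly the derivation of \eqref{2j-1top}), and then places the two entries diagonally or off-diagonally according to the parity statement of Proposition~\ref{Pade-properties}, with the sign $(-1)^{j-1}$ absorbed as you describe. The only difference is that you spell out the transposition and cofactor expansion behind \eqref{2j-1top}, which the paper leaves implicit.
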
  

\mds We turn now to consideration of the coefficients of $Q_{2j}$.
In line with \eqref{Q2plus-reduction}, we introduce two $d\times d$ matrices
that are obtained by reorganizing $H$ after removing the first two columns and last
two rows:
\bes
H^{NE}\=\bsm c_1&a_2&c_3&a_4\ \dots\\
             a_2&b_3&a_4&b_5\ \dots\\
             c_3&a_4&c_5&a_6\ \dots\\
             a_4&b_5&a_6&b_7\ \dots\\
             &&\ddots&\esm;\quad\ \ 
H^{SW}\=\bsm b_1&a_2&b_3&a_4\ \dots\\
             a_2&c_3&a_4&c_5\ \dots\\
             b_3&a_4&b_5&a_6\ \dots\\
             a_4&c_5&a_6&c_7\ \dots\\ 
             &&\ddots&\esm.
\ees
Let $H^{NE}_j$ and $H^{SW}_j$ be the $j\times j$ principal minors of $H^{NE}$ and $H^{SW}$,
respectively.  The equations for the coefficients of $Q_{2j}$ are
\bea
\bsm x_{11}&x_{21}&\dots& x_{j1}\esm H^{NE}_j&=&-\bsm a_0&b_1&\dots&a_{j-1}\esm;
\label{2j1}\\
\bsm x_{12}&x_{22}&\dots& x_{j2}\esm H^{SW}_j&=&-\bsm a_0&c_1&\dots&a_{j-1}\esm.
\label{2j2}
\eea
Here $x_{k1}$ and $x_{k2}$ are the non-zero entries in the first and second rows of the
coefficient of $\l^k$ in $Q_{2j}$. Replacing the last row of $H^{NE}_J$ by the negative of
the right-hand side of \eqref{2j1}, then moving that to be the first row, gives
$H^{NW}_j$.  Applying the same reasoning to \eqref{2j2}, we obtain
\be\label{2j-top}
x_{j1}\=(-1)^j\frac{\D_j}{\D^{NE}_{j}} ;\qquad x_{j2}\=(-1)^j\frac{\D_j}{\D^{SW}_{j}}.
\ee
By Lemma \ref{Pade-properties}, since $Q_{2j}$ has degree $j$, the top coefficient is 
off-diagonal if $j$ is odd and diagonal if $j$ is even.  Therefore

\begin{proposition} The leading coefficient of $Q_{2j}$ is
\be\label{leading2ja}
\la Q_{2j}\ra\=-\bsm 0&\frac{\D_j}{\D^{NE}_j}\\ \frac{\D_j}{\D^{SW}_j}&0\esm
\ee
if $j$ is odd, 
\be\label{leading2jb}
\la Q_{2j}\ra\=\bsm \frac{\D_j}{\D^{NE}_j}&0\\ 0&\frac{\D_j}{\D^{SW}_j}\esm
\ee
if $j$ is even.\end{proposition}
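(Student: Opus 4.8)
The plan is to deduce the formula from the two uncoupled scalar systems \eqref{2j1} and \eqref{2j2} for the coefficients of $Q_{2j}$, together with the parity statement in \autoref{Pade-properties}(c): since only the leading ($\l^{j}$) coefficient of $Q_{2j}$ is needed, the whole argument reduces to extracting one entry of each of those two systems by Cramer's rule and then reassembling a $2\times2$ matrix.

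First I would recall how \eqref{2j1} and \eqref{2j2} arise. Writing $Q_{2j}$ as in \eqref{Qs-even}, substituting the asymptotic expansion \eqref{W-asymptotics} into \eqref{WQeven} and using $Q_{2j}(0)=\bo$ produces the Hankel system \eqref{Qeven-system}; the block form \eqref{C_k} of each $C_k$, combined with the fact (from \autoref{Pade-properties}(c)) that the coefficients $Q^{(j+)}_k$ of $Q_{2j}$ are alternately diagonal and off-diagonal, then splits that $2j\times2j$ system into the two uncoupled $j\times j$ scalar systems \eqref{2j1}, \eqref{2j2} governed by the reorganized Hankel minors $H^{NE}_j$ and $H^{SW}_j$ — exactly the splitting exhibited for $j=2$ in \eqref{Q2plus}--\eqref{Q2plus-reduction}. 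Here $x_{k1}$ and $x_{k2}$ are the unique non-zero entries in the first and second rows of the coefficient of $\l^{k}$ in $Q_{2j}$.

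Next I would read off the top coefficients $x_{j1},x_{j2}$ by Cramer's rule. The system \eqref{2j1} has the form $\mathbf{x}^{t}H^{NE}_j=\mathbf{r}^{t}$ with $\mathbf{r}=-(a_0,b_1,\dots,a_{j-1})^{t}$, hence $x_{j1}$ equals the determinant of $H^{NE}_j$ with its last row replaced by $-(a_0,b_1,\dots,a_{j-1})$, divided by $\det H^{NE}_j$ (the quantity $\D^{NE}_j$). By multilinearity this numerator is $-\det$ of $H^{NE}_j$ with last row $(a_0,b_1,\dots,a_{j-1})$, and the combinatorial key, already noted just before \eqref{2j-top}, is that the first $j-1$ rows of $H^{NE}$ are rows $2,\dots,j$ of $H^{NW}$ while $(a_0,b_1,a_2,b_3,\dots)$ is the first row of $H^{NW}$; cyclically moving the appended row to the top — a row permutation of sign $(-1)^{j-1}$ — therefore turns the matrix into $H^{NW}_j$, of determinant $\D_j$. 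Collecting signs gives $x_{j1}=(-1)^{j}\D_j/\D^{NE}_j$, and the same reasoning applied to \eqref{2j2}, using that appending $(a_0,c_1,a_2,c_3,\dots)$ to $H^{SW}_j$ and cycling gives $H^{SE}_j$ (of determinant $\D_j$ by \eqref{HNEt}), gives $x_{j2}=(-1)^{j}\D_j/\D^{SW}_j$; this is \eqref{2j-top}.

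Finally I would assemble $\la Q_{2j}\ra$. By \autoref{Pade-properties}(c), $Q_{2j}$ has degree $j$ and its coefficient of $\l^{j}$ is diagonal for $j$ even and off-diagonal for $j$ odd. In the diagonal case the first-row entry $x_{j1}$ occupies position $(1,1)$ and $x_{j2}$ occupies $(2,2)$, so $\la Q_{2j}\ra=\bsm x_{j1}&0\\0&x_{j2}\esm$; in the off-diagonal case $x_{j1}$ is in position $(1,2)$ and $x_{j2}$ in $(2,1)$, so $\la Q_{2j}\ra=\bsm 0&x_{j1}\\ x_{j2}&0\esm$. Substituting \eqref{2j-top} and absorbing the factor $(-1)^{j}$ (which is $+1$ for $j$ even and $-1$ for $j$ odd) yields \eqref{leading2jb} and \eqref{leading2ja}, respectively. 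I expect the only delicate point to be the sign bookkeeping in the third paragraph — balancing the $(-1)^{j-1}$ from the cyclic row permutation against the minus sign carried by the right-hand sides of \eqref{2j1} and \eqref{2j2}, and keeping the four relabelings $H^{NW},H^{NE},H^{SW},H^{SE}$ of the single Hankel matrix \eqref{Hankel} straight; everything else is a direct reading of structure already established.
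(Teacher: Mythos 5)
Your proposal is correct and follows essentially the same route as the paper: derive the two uncoupled systems \eqref{2j1}, \eqref{2j2} from \eqref{Qeven-system} via the block structure of the $C_k$, apply Cramer's rule, identify the Cramer numerator with $H^{NW}_j$ (resp.\ $H^{SE}_j$) by replacing the last row and cycling it to the top, and then place the entries using the parity statement of \autoref{Pade-properties}. Your sign bookkeeping (the $(-1)^{j-1}$ from the cyclic permutation against the minus sign of the right-hand side) reproduces \eqref{2j-top} exactly, so nothing is missing.
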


\mds
We are now in a position to compute the data $\{l_k\}$, $\{M_k\}$, via
\eqref{l-quotient}, \eqref{M-quotient}. Note that 
\be\label{NW-SW}
H^{NW}_{j1}\=H^{SW}_{j-1};
\qquad H^{SE}_{j1}\=H^{NE}_{j-1}
\ee
We use these identities  to rewrite \eqref{leading2j-1a} and \eqref{leading2j-1b}.

If $j$ is odd, we have
\beas
l_{d-j+1}\bo&=&\la Q_{2j-1}\ra\la Q_{2j-2}\ra\inv\\
&=&\bsm\frac{\D^{SW}_{j-1}}{\D_j} &0\\
 0&\frac{\D^{NE}_{j-1}}{\D_j}\esm 
\bsm \frac{\D_{j-1}}{\D^{NE}_{j-1}}&0\\ 0&\frac{\D_{j-1}}{\D^{SW}_{j-1}}\esm\inv
\=\frac{\D^{SW}_{j-1}\D^{NE}_{j-1}}{\D_j\D_{j-1}}\,\bo.
\eeas
If $j$ is even, we have
\bes
l_{d-j+1}\bo\=\bsm 0&\frac{\D^{SW}_{j-1}}{\D_j}\\
 \frac{\D^{NE}_{j-1}}{\D_j}&0\ \esm 
\bsm 0&\frac{\D_{j-1}}{\D^{NE}_{j-1}}\\ \frac{\D_{j-1}}{\D^{SW}_{j-1}}&0\esm\inv
\=\frac{\D^{SW}_{j-1}\D^{NE}_{j-1}}{\D_j\D_{j-1}}\,\bo.
\ees

If $j$ is odd, we have
\beas
M_{d-j+1}&=&\la Q_{2j}\ra\la Q_{2j-1}\ra\inv\\
&=&-\bsm 0&\frac{\D_j}{\D^{NE}_j}\\ \frac{\D_j}{\D^{SW}_j}&0\esm
\bsm\frac{\D^{SW}_{j-1}}{\D_j} &0\\0&\frac{\D^{NE}_{j-1}}{\D_{j}}\esm\inv\=
-\bsm0& \frac{\D_j^2}{\D^{NE}_{j}\D^{NE}_{j-1}}\\ \frac{\D_j^2}{\D^{SW}_j\D^{SW}_{j-1}}&0\esm.
\eeas
If $j$ is even, we have
\bes
M_{d-j+1}\=-\bsm \frac{\D_j}{\D^{NE}_j}&0\\ 0&\frac{\D_j}{\D^{SW}_j}\esm
\bsm 0&\frac{\D^{SW}_{j-1}}{\D_j}\\\frac{\D^{NE}_{j-1}}{\D_j}&0\ \esm\inv\=
-\bsm0& \frac{\D_j^2}{\D^{NE}_{j}\D^{NE}_{j-1}}\\ \frac{\D_j^2}{\D^{SW}_j\D^{SW}_{j-1}}&0\esm.
\ees

\section{Appendix: The consistency conditions; the smooth case}\label{app1} 
\renewcommand{\theequation}{A.\arabic{equation}}

Under an additional assumption of smoothness, the  compatibility conditions that relate
\be\label{Lax1} 
D_x^2\Phi\ =\ (\bo+\l M)\Phi,\qquad M\ =\ \bsm 0&n\\ m&0\esm.
\ee
and
\be\label{Lax2} 
D_t\Phi\= [bD_x+a]\Phi,
\ee
namely
\bes
D_tD_x^2\Phi\=D_x^2D_t\Phi\
\ees
lead to
\bea
\l M_t\Phi&=&\left\{b_{xx}+2a_x+\l[b,M]\right\}D_x\Phi\nonumber\\
&&\ +\left\{a_{xx} + 2b_x +\l (bM)_x+\l b_xM+\l[a,M]\right\}\Phi.\label{compat}
\eea
At a given value of $t$, this is a differential equation for $\Phi$ of order at most one.  
We are assuming that $\Phi$ is  a solution of a nontrivial second--order equation.  
We assume that the differential operator in \eqref{compat}
trivializes: 
\bea
0&=&b_{xx}+2a_x+\l[b,M],\label{compat1}\\
\l M_t&=&a_{xx}+ 2b_x+\l (b M)_x+\l b_xM+ \l [a,M],\label{compat2}
\eea
since otherwise the system is degenerate.

As in Section \ref{2CH} we suppose that 
\bes
a\= a_0+\l\inv a_1,\qquad b\= b_0+\l\inv b_1,
\ees
and that $a_j$ and $b_j$ are bounded, $j=0,1$. Each equation in \eqref{compat1}, \eqref{compat2} 
leads to three equations, for the coefficients of the powers $\l^k$, $k=-1,0,1$.  

\medskip
For $k=-1$ the equations are
\be\label{minusone}
(a_1)_{xx}+2(b_1)_x\= 0\,; \qquad (b_1)_{xx}+2(a_1)_x\ = 0.
\ee
Thus 
\begin{equation} 
(a_1)_x=-\frac{(b_1)_{xx}}{2},   \qquad (b_1)_{xxx}-4 (b_1)_x=0.    
\end{equation}

The second equation implies that $b_1=C_1 e^{2x}+C_2 e^{-2x} +C_3$ and since 
$b_1$ is bounded, $b_1$ is a constant matrix, and by the first equation so is $a_1$.  

\sms
For $k=1$ the equations are
\be\label{plusone}
0\ =\ [b_0,M]\,;\qquad
M_t\ =\ (b_0M)_x+(b_0)_xM+[a_0,M].
\ee
We assume that $m\ne n$, so first equation in \eqref{plusone} implies that the diagonal part 
of $b_0$ is a multiple of the identity matrix, and the off-diagonal part is a multiple of $M$:
\be\label{b0}
b_0\ =\ uI+pM.
\ee
Therefore the diagonal part of the second equation in \eqref{plusone}
gives
\bea
0&=& (pmn)_x+(pn)_xm+(a_0)_{12}m-(a_0)_{21}n;\nonumber \\
0&=& (pmn)_x+(pm)_xn+(a_0)_{21}n-(a_0)_{12}m.\label{k=1}
\eea
Adding these two equations gives
\bes
0\=4p_x(mn)+3p(mn)_x.
\ees
Multiplying by $p^3(mn)^2$ gives $0=[p^4(mn)^3]_x$, so $p^4(mn)^3$ is constant.    
If $p\ne0$ then this is  a nontrivial
{\it a priori} relationship between $m$ and $n$.  Therefore we assume $p=0$.  With this
assumption, equations \eqref{k=1} imply that the off-diagonal part of $a_0$ is proportional
to $M$. We can write
\be\label{a''}
b_0\ =\ u\bo,\qquad a_0\ =\ \frac12\bsm w(x)+v(x)&0\\
0&w(x)-v(x)\esm + qM.
\ee
The remaining information from equations \eqref{compat1},
\eqref{compat2}
is contained in the equations for the $k=0$  part:
\bea
0&=& (b_0)_{xx}+2(a_0)_x+[b_1,M]\,;\label{zero1}\\[2mm]
0&=& (a_0)_{xx}+ 2 (b_0)_x+b_1M_x+[a_1,M].\label{zero2},
\eea

since $a_1$, $b_1$ are constant. Write
\bes
a_1\ =\ \bsm \a_1&\a_2\\ \a_3&\a_4\esm\,;
\qquad b_1\ =\ \bsm \b_1&\b_2\\ \b_3&\b_4\esm.
\ees
Looking at the diagonal terms, then the off-diagonal terms, in \eqref{zero1}, we find 
\bea
u_{xx}+w_x&=& 0;\qquad v_x\ = \ \b_3n-\b_2m;\label{diag1}\\
2(qn)_x+(\b_1-\b_4)n&=&\ 0\ =\ 2(qm)_x+(\b_4-\b_1)m\label{off1}.
\eea
Multiply the left side of \eqref{off1} by $n$, the right side by $m$, and add, to obtain:
\bes
0\=2\left[(qn)_xm+(qm)_xn\right]\=2\left[2q_x(mn)+q(mn)_x\right].
\ees

As above, unless $q=0$ this gives an {\it a priori\/} relation $q^2(mn)=$ constant.
Taking $q=0$, \eqref{off1} implies $\b_1=\b_4$.

Looking at the off-diagonal terms in \eqref{zero2}, we obtain equations for $n$ and for $m$ with
constant coefficients:
\bes
\b_1 n_x\= (\a_4-\a_1)n;\qquad \b_1 m_x\=(\a_1-\a_4)m.
\ees
In order to avoid trivial cases, we must assume that $\b_1=0$ and $\a_1=\a_4$.
Computing the diagonal part of \eqref{zero2}, taking into account
\eqref{diag1} gives
\bea
0&=&- \frac12 u_{xxx}+ 2 u_x+ (\b_2m+\b_3n)_x]\,;\label{diag3}\\[2mm]
0&=&\a_2m-\a_3n. \label{diag4}
\eea

To avoid a trivial linear relation 
between $m$ and  $n$ we need the off--diagonal terms $\a_2$, $\a_3$ of $a_1$ to vanish. 

\smallskip
Summing up to this point:
\beas
a&=& \frac12\bsm \g-u_x+v&0\\ 0&\g-u_x-v\esm 
      +\frac1{2\l}\bsm \a& 0\\ 0&\a\esm\,;\\
b&=& \bsm u&0\\ 0&u\esm+\frac1\l\bsm 0&\b_2\\ \b_3&0\esm\,,
\eeas
where $\g$,$\a$, $\b_2$, and $\b_3$ are constant. 

Keeping in mind the obvious
symmetry between $(\vp_1, m)$ on one hand and $(\vp_2,n)$ on  the other, we symmetrize 
by taking $\b_2=\b_3=\b$.  Moreover, the first Lax equation \eqref{Lax1} has an additional gauge 
symmetry $\Phi \rightarrow \omega(t,\lambda) \Phi$.   Under this gauge transformation 
$$ 
a\rightarrow \omega_t \omega^{-1} +a, 
$$
and thus, by choosing $\omega$  to satisfy $\omega_t+\frac12 (\gamma +\frac{\alpha}{\l})\omega=0$,  we can eliminate 
both $\alpha$ and $\gamma$ from the parametrization of $a$, obtaining:  
\beas
a&=& -\frac12\bsm u_x-v&0\\ 0&\-u_x+v\esm 
      ;\\
b&=& \bsm u&0\\ 0&u\esm+\frac1\l\bsm 0&\b\\ \b&0\esm.
\eeas
Finally, we note that this form of $a,b$ implies that the Lax pair has a scaling symmetry: 
$\lambda\rightarrow s\lambda, M\rightarrow \frac1s M, \beta\rightarrow s \beta$.   Choosing the scale to be 
$s=\frac1\beta$ fixes $\b=1$  and we obtain the final form (see \eqref{eq:abBCH})
\beas
a&=& -\frac12\bsm u_x-v&0\\ 0&\-u_x+v\esm 
      ;\\
b&=& \bsm u&0\\ 0&u\esm+\frac1\l\bsm 0&1\\ 1&0\esm, 
\eeas
used in Section \ref{2CH}.   


\section*{Acknowledgements}
We would like to thank Professors Xiangke Chang and Shihao Li for pointing out several misprints in the first draft of the paper.  

After the current work was completed we learned that \autoref{eq:evolBCH} was also derived in \cite{geng-wang}.  We would like to thank 
Professor Nianhua Li for bringing this reference to our attention.  

Jacek Szmigielski's research is supported by the Natural Sciences and Engineering Research Council of Canada (NSERC).


\def\cydot{\leavevmode\raise.4ex\hbox{.}}
  \def\cydot{\leavevmode\raise.4ex\hbox{.}}

\end{document}